\newif\iflncs
\newif\ifsvjour
	\let\doendproof\endproof
	\renewcommand\endproof{~\hfill\qed\doendproof}
	\let\doendproof\endproof
	\renewcommand\endproof{~\hfill\qed\doendproof}
\tikzset{snake it/.style={decorate, decoration={snake,segment length=7pt, amplitude=1pt}}}
\tikzstyle{vertex}=[circle, draw,fill=gray!30, inner sep=0pt, minimum size=16pt]
\tikzstyle{svertex}=[circle, draw,fill=gray!30, inner sep=0pt, minimum size=10pt]
\tikzstyle{sgvertex}=[circle, draw,fill=gray!15, inner sep=0pt, minimum size=10pt]
\tikzstyle{ssgvertex}=[circle, draw,fill=gray!15, inner sep=0pt, minimum size=6pt]
\tikzstyle{cutb} = [fill=blue!30]
\tikzset{
>= stealth'}
\let\csname claim\endcsname\relax 
	\spnewtheorem{claim}{Claim}{\bfseries}{\rmfamily} 
	\spnewtheorem{fact}[theorem]{Fact}{\bfseries}{\itshape}
\let\csname claim\endcsname\relax 
	\newtheorem{claim}{Claim} 
	\newtheorem{fact}[theorem]{Fact}
	\newtheorem{theorem}{Theorem}[section]
	\newtheorem{lemma}[theorem]{Lemma}
	\newtheorem{fact}[theorem]{Fact}
	\newtheorem{definition}[theorem]{Definition}
	\newtheorem{claim}{Claim} 
\newcommand{\qedmanual}{\hfill\ensuremath{\square}}
\DeclareMathOperator{\supp}{supp}
\DeclareMathOperator*{\con}{\mathcal{C}}
\DeclareMathOperator{\lb}{lb}
\DeclareMathOperator{\lbs}{\overline{\lb}}
\DeclareMathOperator{\OPT}{OPT}
\DeclareMathOperator{\LP}{LP}
\newcommand{\cO}{\mathcal{O}}
\newcommand{\eps}{\varepsilon}
\newcommand{\ceil}[1]{\lceil #1 \rceil}
\newcommand{\bR}{\mathbb{R}}
\newcommand{\one}{\mathbbm{1}}
\newcommand{\rb}[1]{\left( #1 \right)} 
\newcommand{\xs}{x^\star}
\newcommand{\Uisp}{U_i^\mathrm{sp}}
\newcommand{\gsp}{G_{\mathrm{sp}}}
\newcommand{\xsp}{\xs_{\mathrm{sp}}}
\newcommand{\xspp}{x'_{\mathrm{sp}}}
\newcommand{\wsp}{w_{\mathrm{sp}}}
\newcommand{\ysp}{y_{\mathrm{sp}}}
\newcommand{\yspp}{y_{\mathrm{sp}}'}
\newcommand{\giaux}{G_i^{\mathrm{aux}}}
\newcommand{\tG}{\widetilde{G}}
\newcommand{\Bs}{B^{\star}}
\newcommand{\alphas}{\alpha^{\star}}
\DeclareMathOperator{\debt}{debt}
\DeclareMathOperator{\bad}{bad}
\DeclareMathOperator{\tail}{tail}
\DeclareMathOperator{\head}{head}
\begin{document}

\iflncs
	\title{Constant Factor Approximation for ATSP with Two Edge Weights\\ {\small (Extended abstract)}\thanks{Please refer to the full version (\url{http://arxiv.org/abs/1511.07038}) for proofs and more detailed explanations (with figures).}}
	\author{Ola Svensson\thanks{Supported by ERC Starting Grant 335288-OptApprox.}\inst{1} \and Jakub Tarnawski\inst{2} \and L\'aszl\'o A. V\'egh\thanks{Supported by EPSRC First Grant EP/M02797X/1.}\inst{3}}
	\institute{\'Ecole Polytechnique F\'ed\'erale de Lausanne. \\ \email{ola.svensson@epfl.ch} \and \'Ecole Polytechnique F\'ed\'erale de Lausanne. \\ \email{jakub.tarnawski@epfl.ch} \and London School of Economics. \\ \email{L.Vegh@lse.ac.uk}}
\else \ifsvjour
	\title{Constant Factor Approximation for ATSP with Two Edge Weights}
	\author{Ola Svensson \and Jakub Tarnawski \and L\'aszl\'o A. V\'egh}
	\institute{\email{ola.svensson@epfl.ch} \and \'Ecole Polytechnique F\'ed\'erale de Lausanne. \and {Supported by ERC Starting Grant 335288-OptApprox.} \\ \email{jakub.tarnawski@epfl.ch} \and \'Ecole Polytechnique F\'ed\'erale de Lausanne. \and {Supported by ERC Starting Grant 335288-OptApprox.} \\ \email{L.Vegh@lse.ac.uk}\and London School of Economics.\and {Supported by EPSRC First Grant EP/M02797X/1.}}
\else
	\title{Constant Factor Approximation\\ for ATSP with Two Edge Weights}
	\author{Ola Svensson\thanks{\'Ecole Polytechnique F\'ed\'erale de Lausanne. Supported by ERC Starting Grant 335288-OptApprox.}\\
	\texttt{ola.svensson@epfl.ch}
	\and
	Jakub Tarnawski\thanks{\'Ecole Polytechnique F\'ed\'erale de Lausanne.}\\
	\texttt{jakub.tarnawski@epfl.ch}
	\and
	L\'aszl\'o A. V\'egh\thanks{London School of Economics.  Supported by EPSRC First Grant EP/M02797X/1.} \\
	\texttt{l.vegh@lse.ac.uk}
	}
\fi \fi

\date{}
 
\maketitle

\begin{abstract}
We give a constant factor approximation algorithm for the Asymmetric Traveling Salesman Problem on shortest path metrics of directed graphs with two different edge weights. For the case of unit edge weights, the first constant factor approximation was given recently by Svensson. This was accomplished by introducing an easier problem called Local-Connectivity ATSP and showing that a good solution to this problem can be used to obtain a constant factor approximation for ATSP.
In this paper, we solve Local-Connectivity ATSP for two different edge weights. The solution is based on a flow decomposition theorem for solutions of the Held-Karp relaxation, which may be of independent interest.
\end{abstract}

\section{Introduction}
\label{sec:intro}
 
The traveling salesman problem --- one of finding the shortest tour of
$n$ cities --- is one of the most classical optimization problems. Its
definition  dates back to the 19th century and since then a large body of work has been
devoted to designing ``good'' algorithms using heuristics, mathematical
programming techniques, and approximation algorithms. The focus of this work is
on approximation algorithms. A natural and necessary assumption in this line of
work that we also make throughout this paper  is that the distances satisfy the
triangle inequality: for any triple $i,j,k$ of cities, we have $d(i,j) + d(j,k)
\geq d(i,k)$ where $d(\cdot, \cdot)$ denotes the pairwise distances between
cities. In other words, it is not more expensive to take the direct path
compared to a path that makes a detour.


With this assumption, the approximability of TSP turns out to be a very delicate
question that has attracted significant research efforts. Specifically, 
one of the first approximation algorithms (Christofides' heuristic~\cite{Christofides76}) was designed for
the \emph{symmetric} traveling salesman problem (STSP) where we assume
symmetric distances ($d(i, j) = d(j, i)$). Several works
(see e.g.~\cite{FriezeGM82,AsadpourGMGS10,Oveis11,Anari14,Svensson15}) have addressed the more
general \emph{asymmetric} traveling salesman problem (ATSP) where we make no
such assumption. 
%
%

However, there are still large gaps in our understanding of  both
STSP and ATSP.  In fact, for STSP, the best approximation algorithm remains
Christofides' $3/2$-approximation algorithm  from the
70's~\cite{Christofides76}. For the harder ATSP, the state of the art is
a $\cO(\log n/ \log \log n)$-approximation algorithm by Asadpour et al.~\cite{AsadpourGMGS10} and
a recent  $\cO(\mbox{poly} \log \log n)$-estimation
algorithm\footnote{An estimation
algorithm is a polynomial-time algorithm for approximating/estimating the
optimal value without necessarily finding a solution to the problem.}
by Anari and Oveis Gharan~\cite{Anari14}.  On the
negative side, the best inapproximability results only  say that STSP and ATSP
are hard to approximate within  factors ${123}/{122}$ and ${75}/{74}$,
respectively~\cite{KarpinskiLS15}. Closing these gaps is a major open problem
in the field of approximation algorithms (see e.g.  ``Problem 1'' and ``Problem
2'' in the list of open problems in the recent book by Williamson and
Shmoys~\cite{WSbook}).
What is perhaps even more intriguing about these questions is that we expect that
a standard linear programming (LP) relaxation,
%
%
%
often referred to as the Held-Karp relaxation, already gives better guarantees.
Indeed, it is conjectured to give a guarantee of $4/3$ for STSP and
a guarantee of $\cO(1)$ (or even $2$) for ATSP.  

An equivalent formulation of STSP and ATSP  from a more graph-theoretic point of view is the following.
For STSP, we are given a weighted undirected graph $G=(V,E,w)$ where $w:
E \rightarrow \mathbb{R}_{+}$ and we wish to find a multisubset $F$ of
edges of minimum total weight such that  $(V,F)$ is connected and Eulerian.
Recall that an undirected graph is Eulerian if every vertex has even degree. We also
remark that we use the term multisubset as the solution $F$ may use the same
edge several times. An intuitive point of view on this definition is that $G$
represents a road network, and a solution is a tour that visits each vertex at
least once (and may use a single edge/road several times). The definition of ATSP
is similar, with the differences that the input graph is directed and the
output is Eulerian in the directed sense: the in-degree of each vertex equals
its out-degree.
Having defined the traveling salesman problem in this way, there
are several natural special cases to consider. For example,  what if $G$ is
planar? Or, what if all the edges/roads have the same length, i.e., if $G$ is unweighted?

For planar graphs,  we have much better algorithms than in general. Grigni,
Koutsoupias and Papadimitriou~\cite{GrigniKP95} first obtained a polynomial-time approximation
scheme for STSP restricted to unweighted planar graphs, which was later
generalized to edge-weighted planar graphs by Arora et al.~\cite{AroraGKKW98}.
More recently, ATSP on planar graphs (and more generally bounded genus graphs)
was shown to admit constant factor approximation algorithms (first by Oveis
Gharan and Saberi~\cite{Oveis11} and later by Erickson and Sidiropoulos~\cite{EricksonS14} who improved the dependency on the
genus).

In contrast to planar graphs, STSP and ATSP remain APX-hard for unweighted
graphs (ones where all edges have identical weight) and, until recently, there were no better algorithms for these cases.
Then, in a recent series of papers, the approximation guarantee of $3/2$ was
finally improved for STSP restricted to unweighted graphs. Specifically,
Oveis Gharan, Saberi and Singh~\cite{GharanSS11} first gave an approximation
guarantee of $1.5-\epsilon$; M\"omke and Svensson~\cite{MomkeS11} proposed
a different approach yielding a $1.461$-approximation guarantee;
Mucha~\cite{Mucha12} gave a tighter analysis of this algorithm; and Seb\H {o}
and Vygen~\cite{SeboV14} significantly developed the approach to give the
currently best approximation guarantee of $1.4$.  
Similarly, for ATSP, it was only very recently that the
restriction to unweighted graphs could be leveraged: the first constant
approximation guarantee for unweighted graphs was given by Svensson~\cite{Svensson15}. 
In this paper we make progress towards the general problem by taking the logical next step and addressing a simple case left unresolved by  \cite{Svensson15}: graphs with two different edge weights.

\begin{theorem}
  There is an $\cO(1)$-approximation algorithm for ATSP on graphs with two
  different edge weights.
  \label{thm:mainintro}
\end{theorem}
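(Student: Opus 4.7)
The natural plan is to follow the framework established by Svensson for unit edge weights and show that the two-weight case can still be reduced, via the Local-Connectivity ATSP abstraction, to a problem we can solve. So the first step is to recall that framework: it shows that in order to obtain an $O(1)$-approximation for ATSP, it suffices to produce, for every appropriately chosen weighting of the vertices, an integral tour whose cost is bounded against a local LP value (the ``Local-Connectivity'' objective), rather than against the global Held-Karp optimum. Concretely, I would start by solving the Held-Karp relaxation, obtaining an optimal fractional solution $\xs$, and then invoke Svensson's reduction as a black box to reduce the problem to constructing, for each induced weighting, a low-cost solution to Local-Connectivity ATSP on the support of $\xs$.

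The heart of the proof, then, is solving Local-Connectivity ATSP when edges take one of two weights, say $1$ and $w>1$. My plan is to prove and exploit a flow decomposition theorem for the Held-Karp solution $\xs$: I would try to write $\xs$ as a convex combination (or, more usefully, a structural decomposition) of flows so that the ``heavy'' edges (those of weight $w$) appear only in a restricted, controllable way --- for instance, organized into a laminar family of components or cycles, while the remaining fractional flow on the light edges looks essentially like a Held-Karp solution for an induced unit-weight instance. Intuitively, one wants to localize the role of heavy edges so that, in each block of the decomposition, one can either (a) apply Svensson's unit-weight Local-Connectivity routine on the light edges, or (b) pay for a bounded use of heavy edges using the total fractional mass the LP already puts on them.

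Given this decomposition, solving Local-Connectivity ATSP proceeds piece by piece: on blocks covered by light edges, invoke the unit-weight algorithm of \cite{Svensson15}; on the ``skeleton'' of heavy edges, use the structure of the decomposition (e.g.\ the fact that heavy edges appear in bounded-cost contractible pieces) to stitch together an Eulerian solution whose cost can be charged against the Held-Karp value $w^\top \xs$ up to a constant factor. Finally, plugging this Local-Connectivity solution back into Svensson's reduction yields the claimed $O(1)$-approximation for ATSP.

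The main obstacle I expect is the flow decomposition theorem itself. In the unit-weight setting, the support of $\xs$ is essentially treated homogeneously; once two weights are present, one needs to argue that the LP's fractional flow can be reorganized so that heavy arcs play only an auxiliary role whose cost the LP already pays for. Making this precise --- in particular, controlling how heavy arcs interact with the induced weightings used by Local-Connectivity ATSP, so that one does not pay a super-constant factor when combining the pieces --- is the delicate step. Beyond that, everything else is, in principle, bookkeeping: bounding each contribution against $\LP(\xs)$ and composing constants.
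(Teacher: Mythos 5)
Your high-level scaffolding is right: solve Held--Karp, invoke Svensson's black-box reduction to Local-Connectivity ATSP, and observe that the crux is a flow-structural result about $\xs$. But the specific shape of the flow result you propose is not the one that works, and the gap is precisely where you expect it to be. The difficulty in the two-weight case is that an expensive edge may carry an arbitrarily small fractional mass $\xs_e$, so \emph{neither endpoint} can locally afford it; your plan to ``pay for a bounded use of heavy edges using the total fractional mass the LP already puts on them'' is exactly the thing that needs a mechanism, and a decomposition of $\xs$ into ``light-edge blocks'' plus a ``heavy skeleton'' does not supply one. There is no reason for the light-edge restriction of $\xs$ to satisfy the subtour constraints (it may put nearly zero flow across some cuts), so you cannot simply hand it to the unit-weight Local-Connectivity routine on blocks; and a laminar organization of the heavy edges does not tell you where to place integer copies of them so that each copy is charged to LP mass it is entitled to.

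What the paper actually proves is a different kind of flow theorem: inside the Held--Karp solution one can route all the flow on $E_1$ to a sink set $T$ of \emph{terminals} with $|T| \le 8\,\xs(E_1)$, taking $T$ to be any inclusion-minimal set that can absorb that flow. The bound on $|T|$ is what lets one afford a contribution of roughly $w_1$ in the lower-bound function $\lb$ at each terminal while keeping $\lb(V) = O(\OPT)$. The algorithmic mechanism is a \emph{split graph}: each vertex is duplicated into a ``free'' and a ``debt'' copy, expensive edges enter the debt layer, and the only way back to the free layer is through a terminal. This enforces, combinatorially, that any integral circulation that uses an expensive edge must soon visit a terminal that pays for it, and the degree bounds inherited from $2\xsp$ keep the cheap-edge cost under control. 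For arbitrary partitions one further needs auxiliary graphs inside each $V_i$ and sink components $U_i$ to handle the interaction between debt, terminals, and the patching paths. None of this is recoverable from a convex-combination or laminar decomposition as you sketch it; the terminal-set bound and the split-graph gadget are the missing ideas, and without them the charging scheme you gesture at does not close.
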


The paper \cite{Svensson15} introduces an ``easier'' problem named Local-Connectivity ATSP, where one needs to find an Eulerian multiset of edges crossing only sets in a given partition rather than all possible sets (see next section for definitions). It is shown that an ``$\alpha$-light" algorithm to this problem yields a $(9+\varepsilon)\alpha$-factor approximation for ATSP. For unweighted graphs (and slightly more generally, for node-induced weight functions\footnote{For ATSP, we can think of a node-weighted graph as an
edge-weighted graph where the weight of an edge $(u,v)$ equals the node weight
of $u$.}) it is fairly easy to obtain a 3-light algorithm for Local-Connectivity ATSP; the difficult part in  \cite{Svensson15} is the black-box reduction of ATSP to this problem. Note that \cite{Svensson15} easily gives an $\cO(w_{\max}/w_{\min})$-approximation algorithm in general if we take
$w_{\max}$ and $w_{\min}$ to denote the largest and smallest edge weight, respectively. However, obtaining a constant factor approximation even for two different weights requires substantial further work. 

In Local-Connectivity ATSP we need a lower bound function $\lb:V\to \mathbb{R}_+$ on the vertices. The natural choice for node-induced weights is $\lb(v)=\sum_{e\in \delta^+(v)} w(e)x^*_e$. With this weight function, every vertex is able to ``pay'' for the incident edges in the Eulerian subgraph we are looking for.  This choice of $\lb$ does not seem to work for more general weight functions, and we need to define $\lb$ more ``globally'', using a new flow theorem for Eulerian graphs (Theorem\nobreakspace \ref {thm:flow_thm}).
In Section\nobreakspace \ref {sec:overview}, after the preliminaries, we give a more detailed
overview of these techniques and the proof of the theorem.
 Our argument is somewhat technical, but it demonstrates the potential of the Local-Connectivity ATSP problem as a tool for attacking general ATSP.

Finally, let us remark that both STSP~\cite{PapYan93,BermanK06} and
ATSP~\cite{Blaser04}  have been studied in the case  when all distances are
either $1$ or $2$.  That restriction is very different from our setting, as in
those cases the input graph is complete. In particular,  it is 
trivial to get  a $2$-approximation algorithm there, whereas in our
setting -- where the input graph is \emph{not} complete -- a constant
factor approximation guarantee already requires non-trivial algorithms.
(In our setting, we can still think about the metric completion,
but it will usually have more than two different edge weights.)

\subsection{Notation and preliminaries}
\label{sec:prelim}

We consider an edge-weighted directed graph $G=(V,E,w)$ with $w:E\to \mathbb{R}_+$. 
For a vertex subset $S\subseteq V$ we let $\delta^+(S)=\{(u,v)\in E: u\in S, v\in V\setminus S\}$ and
$\delta^-(S)=\{(u,v)\in E: u\in V\setminus S, v\in S\}$ denote the
sets of outgoing and incoming edges, respectively. For two vertex
subsets $X,Y\subseteq V$, we let $\delta(X,Y)=\{(u,v)\in E: u\in
X\setminus Y, v\in Y\setminus X\}$. For a subset of edges $E'\subseteq E$, we use $\delta^+_{E'}(S)=\delta^+(S)\cap E'$ and  $\delta^-_{E'}(S)=\delta^-(S)\cap E'$.
We also let $\con(E')=(\tilde G_1,\ldots,\tilde G_k)$ denote the set of weakly connected components of the graph $(V,E')$; the vertex set $V$ will always be clear from the context. For a directed graph $\tilde G$ we use $V(\tilde G)$ to denote its vertex set and $E(\tilde G)$ the edge set.
For brevity, we denote the
singleton set $\{v\}$ by $v$ (e.g. $\delta^+(v)=\delta^+(\{v\}))$, and we use the notation $x(F) = \sum_{e\in F} x_e$
for a subset $F \subseteq E$ of edges
and a vector $x \in \mathbb{R}^E$.
For a multiset $F$, we have $\one_F$ denote the indicator vector of $F$,
which has a coordinate for each edge $e$ with value equal to the number of copies of 
$e$ in $F$.
For the case of two edge weights, we use $0 \le w_0 < w_1$ to denote the two possible values, and partition $E=E_0\cup E_1$ so that $w(e)=w_0$ if $e\in E_0$ and $w(e)=w_1$ if $e\in E_1$. We will refer to edges in $E_0$ and $E_1$  as cheap and expensive edges, respectively.

We define
ATSP as  the problem of finding a connected Eulerian subgraph of minimum weight. 
As already mentioned in the introduction, this definition is equivalent to that
of visiting each city exactly once (in the metric completion) since we assume
the triangle inequality. The formal definition is as follows.
\begin{framed}
\begin{center}\textbf{ATSP} \end{center}
\begin{description}
  \item[\textnormal{\emph{Given:}}]An edge-weighted (strongly connected) digraph $G=(V,E, w)$.  
	
\item[\textnormal{\emph{Find:}}] A 
  multisubset $F$ of $E$ of minimum total weight $w(F) = \sum_{e\in F} w(e)$ such that $(V,F)$ is Eulerian and connected.
  \end{description}
\end{framed}

\paragraph{Held-Karp Relaxation.}
The Held-Karp relaxation has a variable $x_e \geq 0$ for every edge in $G$. The intended meaning is that $x_e$ should equal the number of times $e$ is used in the solution. The relaxation $\LP(G)$ is defined as follows:

\begin{equation}
\begin{aligned}
\arraycolsep=1.4pt\def\arraystretch{1.2}
\begin{array}{lrlr}
\mbox{minimize} \qquad & \displaystyle \sum_{e\in E} w(e) x_e  \\[7mm] 
\mbox{subject to} \qquad  & \displaystyle x(\delta^+(v)) = & \displaystyle x(\delta^-(v)) & v\in V, \\  
& \displaystyle x(\delta^+(S)) \geq & 1 & \emptyset \neq S \subsetneq V, \\
& x \geq & 0.
\end{array}
\end{aligned}\tag{$\LP(G)$}
\end{equation}
The first set of constraints says that the in-degree should equal the
out-degree for each vertex, i.e., the solution should be Eulerian. The second
set of constraints enforces that the solution is connected; they are
sometimes referred to as subtour elimination constraints.  Finally, we remark
that although the Held-Karp relaxation has exponentially many constraints, it
is well-known that we can solve it in polynomial time either by using the
ellipsoid method with a separation oracle or by formulating an equivalent
compact (polynomial-size) linear program.
We will use $x^*$ to denote an optimal solution to $\LP(G)$ of value $\OPT$, which is a lower bound on the value of an optimal solution to ATSP on $G$.

\paragraph{Local-Connectivity ATSP.}
The Local-Connectivity ATSP problem can be seen as a two-stage procedure. In the first stage, the input is an edge-weighted digraph $G=(V,E, w)$ and the output is a ``lower bound'' function  $\lb: V \rightarrow \mathbb{R}_+$  on the vertices such that $\lb(V)\le \OPT$. In the second stage, the input is a partition of the vertices, and the output is an Eulerian multisubset of edges which crosses each set in the partition and where the ratio of weight to $\lb$ of every connected component is as small as possible. 
We now give the formal description of the second stage, assuming the  $\lb$ function is already computed.
\begin{framed}
\begin{center} \textbf{Local-Connectivity ATSP} \end{center}
\begin{description}
  \item[\textnormal{\emph{Given:}}]An edge-weighted 
	digraph $G=(V,E, w)$, a function $\lb: V \rightarrow \mathbb{R}_+$ with $\lb(V)\le \OPT$, and   a partitioning $V = V_1 \cup V_2 \cup \ldots \cup V_k$
	of the vertices.
	
\item[\textnormal{\emph{Find:}}]A Eulerian multisubset $F$ of $E$   such
that
\begin{align*}
  |\delta^+_{F}(V_i)| \geq 1 \ \ \mbox{ for } i=1, 2, \ldots, k \qquad \mbox{and}
  \qquad \max_{\tilde G\in \con(F)} \frac{w(\tilde G)}{\lb(\tilde G)} \mbox{ is minimized.}
\end{align*}
\end{description}
\end{framed}
Here we used the notation that for a connected component $\tilde G$ of $(V,F)$, $w(\tilde G)=\sum_{e\in E(\tilde G)} w(e)$ (summation over the edges) and $\lb(\tilde G)=\sum_{v\in V(\tilde G)} \lb(v)$ (summation over the vertices).
We say that an algorithm for Local-Connectivity ATSP   is 
\emph{$\alpha$-light} on $G$ if it is guaranteed, for any partition, to find a solution $F$
such that  for every component $\tilde G\in \con(F)$,
\iflncs
 ${w(\tilde G)}/{\lb(\tilde G)} \leq \alpha.$
\else
\begin{align*}
  \frac{w(\tilde G)}{\lb(\tilde G)} \leq \alpha.
\end{align*}
\fi

In \cite{Svensson15}, $\lb$ is defined as $\lb(v)=\sum_{e\in \delta^+(v)} w(e)x^*_e$; note that $\lb(V)=OPT$ in this case.
%
We remark that we use the ``$\alpha$-light'' terminology to avoid any ambiguities
with the concept of approximation algorithms (an $\alpha$-light
algorithm does not compare its solution to an optimal solution
to the given instance of Local-Connectivity ATSP). 

Perhaps the main difficulty of ATSP is to satisfy the connectivity requirement,
i.e., to select an Eulerian subset $F$ of edges which connects the whole graph.
{Local-Connectivity} ATSP relaxes this condition -- we only need to find an
Eulerian set $F$ that crosses the $k$ cuts defined by the partition.  This
makes it intuitively an ``easier'' problem than ATSP. Indeed, an
$\alpha$-approximation algorithm for ATSP (with respect to the Held-Karp
relaxation) is trivially an $\alpha$-light algorithm for Local-Connectivity
ATSP for an arbitrary $\lb$ function with $\lb(V) = OPT$: just return the same Eulerian subset $F$ as the algorithm for ATSP; since the
set $F$ connects the graph, we have $\max_{\tilde G \in \con(F)} w(\tilde
G)/\lb(\tilde G) = w(F)/\lb(V) \leq \alpha$.  Perhaps more surprisingly, the
main technical theorem of~\cite{Svensson15} shows that the two problems are
equivalent up to small constant factors.  

\begin{theorem}[\cite{Svensson15}]
Let  $\mathcal{A}$ be an algorithm for Local-Connectivity ATSP.
  Consider an ATSP instance $G=(V,E,w)$, and let $\OPT$ denote the optimum value of the Held-Karp relaxation. 
  If $\mathcal{A}$ is $\alpha$-light on $G$, then there exists a tour of $G$ with value at
  most $5\alpha\OPT$. 
%
  Moreover, for any $\varepsilon >0$,  a tour of value at most
  $(9+\varepsilon)\alpha\OPT$ can be found in time polynomial in the number
  $n=|V|$ of vertices, in $1/\varepsilon$,  and in the running time of $\mathcal{A}$.
  \label{thm:LoctoGlo}
\end{theorem}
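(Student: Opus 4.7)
The plan is to reduce ATSP to the local-connectivity problem by iteratively invoking $\mathcal{A}$ and contracting. Initialize the partial tour $F = \emptyset$, the current graph $G^{(0)} = G$, and carry forward the lower bound function by summing: the lower bound of a super-vertex is the sum of the $\lb$-values of its constituents, so $\lb(V(G^{(t)})) = \lb(V)$ remains invariant throughout. At iteration $t$, we pick a partition $\mathcal{P}^{(t)}$ of $V(G^{(t)})$, call $\mathcal{A}$ on the instance $(G^{(t)}, \lb, \mathcal{P}^{(t)})$, append the returned Eulerian multiset $F^{(t)}$ to $F$, and contract each weakly connected component of $(V(G^{(t)}), F^{(t)})$. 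The process stops once $G^{(t)}$ has a single vertex; at that point $F$ is Eulerian and connected in the original graph.

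By $\alpha$-lightness, summing the weight bound $w(\tilde G) \le \alpha \cdot \lb(\tilde G)$ over all components returned in iteration $t$ gives $w(F^{(t)}) \le \alpha \cdot \lb(V(G^{(t)})) = \alpha \cdot \lb(V) \le \alpha \cdot \OPT$. Therefore the final tour has weight at most $\alpha \cdot \OPT \cdot (\text{number of iterations})$, and the whole difficulty lies in ensuring that this number is a small constant. The natural idea is to choose $\mathcal{P}^{(t)}$ from an optimal or near-optimal extreme point of $\LP(G^{(t)})$: such extreme points have sparse support (of size $O(|V(G^{(t)})|)$), and this sparsity can be leveraged to guarantee that $\mathcal{A}$'s output merges a constant fraction of the super-vertices per round. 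A structural/amortized argument then caps the sum of ``levels'' at $5$, giving the existential $5\alpha \cdot \OPT$ bound.

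For the algorithmic statement, the catch is that we do not necessarily know in advance which partition $\mathcal{P}^{(t)}$ will cause enough merging, and we also cannot always solve the merging subproblem exactly. The remedy is (i) to enumerate or guess among polynomially many candidate partitions derived from the LP support at each level, (ii) to relax the merging guarantee by an $\varepsilon$-slack, and (iii) to combine laminar/iterative rounding with a careful potential argument so that the number of iterations stays polynomial. The loss incurred by these concessions degrades the constant from $5$ to $9+\varepsilon$.

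I expect the main obstacle to be the structural step that converts $\alpha$-lightness (a purely \emph{local} guarantee across the given partition) into a \emph{global} connectivity guarantee in a bounded number of contraction rounds. This is precisely the part that requires interacting with extreme points of the Held-Karp polytope and arguing about how the LP value behaves under contraction: one needs a feasibility-preserving way to pass $x^\star$ from $G^{(t)}$ to $G^{(t+1)}$ while controlling how much ``slack'' it still has for driving the next round of merging. Once that invariant is in place, the geometric decrease in the number of super-vertices and the $5\alpha$ (resp.\ $(9+\varepsilon)\alpha$) accounting follow routinely.
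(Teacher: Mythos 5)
This theorem is not proved in the paper you are reading: it is cited verbatim from~\cite{Svensson15}, so there is no ``paper's own proof'' to compare against. Your proposal sketches the obvious first idea (iteratively call $\mathcal{A}$, contract or coarsen the partition, repeat), but it contains a real gap at exactly the point where the theorem becomes hard, and the mechanism you gesture at is internally inconsistent. By lightness, each round contributes up to $\alpha\cdot\lb(V) \le \alpha\cdot\OPT$, so the naive bound is $(\text{number of rounds})\cdot\alpha\cdot\OPT$. You then appeal to sparsity of an extreme point to argue that ``a constant fraction of super-vertices merge per round'' --- but that guarantee yields $\Theta(\log n)$ rounds, not $5$, and there is no route from ``constant fraction merged per round'' to ``at most $5$ rounds.'' Saying that ``a structural/amortized argument caps the sum of levels at $5$'' is precisely the content of the theorem and is not an argument. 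Moreover, nothing in the definition of Local-Connectivity ATSP prevents $\mathcal{A}$ from returning, in every round, a collection of small Eulerian components each crossing its part but merging only pairs of parts; in that case the contraction scheme genuinely needs $\Theta(\log n)$ rounds and the total cost is $\Theta(\alpha\log n\cdot\OPT)$, so the scheme as stated simply does not achieve the claimed bound.

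The actual proof in~\cite{Svensson15} is substantially more delicate. It does not try to guarantee a constant number of merging rounds. Instead it maintains a subtour structure and charges cost to the $\lb$-mass with a carefully chosen potential, allowing edges to be \emph{discarded} when a component becomes too expensive relative to its $\lb$; the constants $5$ and $9+\varepsilon$ arise from a geometric-series-type amortization over this charging scheme, not from bounding iterations by a constant. The existential bound ($5\alpha$) and the algorithmic bound ($(9+\varepsilon)\alpha$) differ because the existential version can make non-constructive choices in the recursion that the polynomial-time version must pay for with the $\varepsilon$-slack. Your description of the LP solution being ``passed through'' contractions and of extreme-point sparsity driving the merging is not part of that argument. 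In short: the reduction framework you set up (feed partitions to $\mathcal{A}$, accumulate returned Eulerian sets, track $\lb$) is a reasonable skeleton, but the key accounting step is missing, and what you propose in its place does not work.
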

In other words, the above theorem says that in order to approximate an  ATSP
instance $G$, it is sufficient to devise a polynomial-time algorithm to
calculate  a lower bound $\lb$ and a polynomial time algorithm
for Local-Connectivity ATSP that is $\cO(1)$-light on $G$ with respect to this $\lb$ function.
Our main result is proved using this framework.

\subsection{Technical overview}
\label{sec:overview}

\paragraph{Singleton partition.} Let us start by outlining the fundamental ideas of our algorithm and comparing it to \cite{Svensson15} for the special case of Local-Connectivity ATSP when all partition classes $V_i$ are singletons.
For unit weights, the choice $\lb(v)=\sum_{e\in \delta^+(v)} w(e)\xs_e=\xs(\delta^+(v))$ in \cite{Svensson15} is a natural one: intuitively, every node is able to pay for its outgoing edges. We can thus immediately give an algorithm for this case: just select an arbitrary integral solution $z$ to the circulation problem with node capacities $1 \le z(\delta^+(v))\le \ceil{\xs(\delta^+(v))}$. Then for any $v$ we have $z(\delta^+(v))\le \xs(\delta^+(v))+1\le 2\xs(\delta^+(v))$ and hence $\sum_{e\in \delta^+(v)} w(e)z_e\le 2\lb(v)$, showing that $z$ is a 2-light solution.

The same choice of $\lb$ does not seem to work in the presence of two different edge costs. Consider a case when every expensive edge carries only a small fractional amount of flow. Then $\sum_{e\in \delta^+(v)} w(e)\xs_e$ can be much smaller than the expensive edge cost $w_1$, and thus the vertex $v$ would not be able to ``afford'' even a single outgoing expensive edge. To resolve this problem, we bundle small fractional amounts of expensive flow, channelling them to reach a small set of terminals. This is achieved via Theorem\nobreakspace \ref {thm:flow_thm}, a flow result which might be of independent interest. It shows that within the fractional Held-Karp solution $\xs$, we can send the flow from an arbitrary edge set $E'$ to a sink set $T$ with $|T|\le 8\xs(E')$; in fact, $T$ can be any set minimal for inclusion such that it can receive the total flow from $E'$.
We apply this theorem for $E'=E_1$, the set of expensive edges; let $f$ be the flow from $E_1$ to $T$, and call elements of $T$ \emph{terminals}. Now, whenever an expensive edge is used, we will ``force'' it to follow $f$ to a terminal in $T$, where it can be paid for.
Enforcement is technically done by splitting the vertices into two copies, one carrying the $f$ flow and the other the rest.
Thus we obtain the \emph{split graph} $\gsp$ and split fractional optimal solution $\xsp$. 

The design of the split graph is such that every walk in it which starts with an expensive edge must proceed through cheap edges until it reaches a terminal before visiting another expensive edge. In our terminology, expensive edges create ``debt'', which must be paid off at a terminal. 
Starting from an expensive edge, the debt must be carried until a terminal is reached, and no further debt can be taken in the meantime.
The bound on the number of terminals guarantees that we can assign a lower bound function $\lb$ with $\lb(V)\le \OPT$ such that (up to a constant factor) cheap edges are paid for locally, at their heads, whereas expensive edges are paid for at the terminals they are routed to.
Such a splitting easily solves Local-Connectivity ATSP for the singleton partition: find an arbitrary integral circulation $z_{\mathrm{sp}}$ in the split graph with an upper bound
 $z_{\mathrm{sp}}(\delta^+(v))\le \lceil 2\xsp(\delta^+(v))\rceil$ on every node, and a lower bound $1$ on whichever copy of $v$ transmits more flow. Note that $2\xsp$ is a feasible fractional solution to this problem. We map $z_{\mathrm{sp}}$ back to an integral circulation $z$ in the original graph by merging the split nodes, thus obtaining a constant-light solution.

\paragraph{Arbitrary partitions.} 
Let us now turn to the general case of Local-Connectivity ATSP, where the input is an arbitrary partition $V=V_1\cup\ldots\cup V_k$. 
For unit weights this is solved in~\cite{Svensson15} via an integer circulation problem on a modified graph. Namely, an auxiliary node $A_i$ is added to represent each partition class $V_i$, and one unit of in- and outgoing flow from $V_i$ is rerouted through $A_i$. In the circulation problem, we require exactly one in- and one outgoing edge incident to $A_i$ to be selected.
When we map the solution back to the original graph, there will be one incoming and one outgoing arc from every set $V_i$ (thus satisfying the connectivity requirement) whose endpoints inside $V_i$ violate the Eulerian condition.
In \cite{Svensson15} every $V_i$ is assumed to be strongly connected, and therefore we can  ``patch up'' the circulation by connecting the loose endpoints by an arbitrary path inside $V_i$. This argument easily gives a 3-light solution.

Let us observe that the strong connectivity assumption is in fact not needed for the result in \cite{Svensson15}. Indeed, given a component $V_i$ which is not strongly connected, consider its decomposition into strongly connected (sub)components, and pick a $U_i \subseteq V_i$ which is a sink (i.e. it has no edges outgoing to $V_i \setminus U_i$). We proceed by rerouting $1$ unit of flow through a new auxiliary vertex just as in that algorithm, but we do this for $U_i$ instead. This guarantees that $U_i$ has at least one outgoing edge in our solution, and that edge must leave $V_i$ as well. 

Our result for two different edge weights takes this observation as the starting point, but the argument is much more complicated. We will find an integer circulation in a graph based on the split graph $\gsp$, and for every $1\le i\le k$, there will be an auxiliary vertex $A_i$ representing a certain subset $U_i\subseteq V_i$. These sets $U_i$ will be obtained as sink components in certain auxiliary graphs we construct inside each $V_i$. This construction is presented in Section\nobreakspace \ref {sec:solvlcATSP}; we provide a roadmap to the construction at the beginning of that section.

\iflncs
\else
\section{The Flow Theorem}
\label{sec:flow}
In this section we prove our main flow decomposition result. As indicated in Section\nobreakspace \ref {sec:overview}, we will use it to channel the flow from the expensive edges $E_1$ to a small set of terminals $T$ (where $|T|\le 8w(E_1)$). We will use the theorem stated below by moving the tail of every edge in $E_1$ to a new vertex $s$. If $w(E_1)\ge 1$, then the constraints of the Held-Karp relaxation guarantee condition \eqref{cond:degree}.
The details of the reduction are given in Lemma\nobreakspace \ref {lem:flow_app}.

\begin{theorem} \label{thm:flow_thm}
Let $D = (V\cup\{s\},E)$ be a directed graph, let $c : E \to \bR_+$ be a nonnegative capacity vector, and let $s$ be a source node with no incoming edges, i.e., $\delta^-(s) = \emptyset$.
Assume that for all $\emptyset \neq S\subseteq V$ we have
\begin{equation}
c(\delta^-(S))\ge \max\{1,c(\delta^+(S))\}. \label{cond:degree}
\end{equation}
Consider a set $T \subseteq V$ such that there exists a flow $f \le c$ of value $c(\delta^+(s))$ from the source $s$ to the sink set $T$, and $T$ is minimal subject to this property.\footnote{That is, the maximum flow value from $s$ to any proper subset $T'\subsetneq T$ is smaller than $c(\delta^+(s))$.} Then $|T| \le 8c(\delta^+(s))$.
\end{theorem}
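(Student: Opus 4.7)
My plan is to combine LP duality (max-flow/min-cut) with hypothesis~\eqref{cond:degree} applied at many cleverly chosen subsets of $V$. For every $t \in T$, the minimality of $T$ implies that the max flow from $s$ to $T \setminus \{t\}$ is strictly less than $F := c(\delta^+(s))$, so max-flow/min-cut yields a witness cut $C_t \subseteq V \cup \{s\}$ with $s \in C_t$, $C_t \cap T \subseteq \{t\}$, and $c(\delta^+(C_t)) < F$. A simple check rules out $t \notin C_t$: otherwise $C_t$ would separate $s$ from all of $T$, contradicting max-flow~$=F$. Setting $A_t := C_t \cap V$ and introducing $\alpha_t$ (capacity of edges from $s$ into $A_t$), $\beta_t := F - \alpha_t$ (capacity of edges from $s$ into $V \setminus A_t$), $\gamma_t$ (capacity of edges from $A_t$ into $V \setminus A_t$), and $\delta_t$ (capacity of edges from $V \setminus A_t$ into $A_t$), the identity $c(\delta^+(C_t)) = \beta_t + \gamma_t < F = \alpha_t + \beta_t$ forces $\gamma_t < \alpha_t$, while \eqref{cond:degree} applied to $A_t$ gives $\alpha_t + \delta_t \ge 1$.

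The key ``easy case'' is when the $A_t$'s can be chosen pairwise disjoint. Let $R := V \setminus \bigcup_t A_t$ and $\beta := F - \sum_t \alpha_t$, and write $M := \sum_{t \neq t'} c(\text{edges } A_{t'} \to A_t)$ for the total cross-$A$ capacity. Disjointness gives the identities $\sum_t \delta_t = M + \sum_t c(R \to A_t)$ and $\sum_t \gamma_t = M + \sum_t c(A_t \to R)$, and applying \eqref{cond:degree} to $R$ (when non-empty) rearranges to $\sum_t \delta_t \le \sum_t \gamma_t + \beta$. Summing $\alpha_t + \delta_t \ge 1$ over $t$ and chaining this with $\gamma_t < \alpha_t$,
\[
|T| \;\le\; \sum_t (\alpha_t + \delta_t) \;\le\; \sum_t \alpha_t + \sum_t \gamma_t + \beta \;<\; 2\sum_t \alpha_t + \beta \;=\; \sum_t \alpha_t + F \;\le\; 2F,
\]
so in the disjoint case one already gets $|T| < 2F$, which is even better than the claimed bound.

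The general case requires reducing to this disjoint setting, and this is the main technical obstacle. Two witness sets $A_{t_1}, A_{t_2}$ can never be nested (each contains its exclusive terminal), but they may cross; in that case submodularity of the directed out-cut yields
\[
c(\delta^+(C_{t_1} \cap C_{t_2})) + c(\delta^+(C_{t_1} \cup C_{t_2})) \;\le\; c(\delta^+(C_{t_1})) + c(\delta^+(C_{t_2})) \;<\; 2F,
\]
and since $C_{t_1} \cap C_{t_2}$ contains $s$ but no terminal, it separates $s$ from all of $T$, forcing $c(\delta^+(C_{t_1} \cap C_{t_2})) \ge F$ and hence $c(\delta^+(C_{t_1} \cup C_{t_2})) < F$---a sub-$F$ cut enclosing two terminals. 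I would iterate this observation in a laminar/charging scheme to extract a family of witness sets in which every vertex of $V$ is covered by at most a small constant (I expect~$4$) of them, and then feed this multiplicity bound into the accounting of the disjoint case to obtain the announced $|T| \le 8\,c(\delta^+(s))$. The hard part is precisely this uncrossing: directed cut functions do \emph{not} satisfy the posimodular inequality $c(\delta^+(X \setminus Y)) + c(\delta^+(Y \setminus X)) \le c(\delta^+(X)) + c(\delta^+(Y))$, so crossing pairs cannot be replaced by their set-differences, and a more delicate combinatorial selection---leveraging both the terminal-exclusion condition and~\eqref{cond:degree}---must be devised.
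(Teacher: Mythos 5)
Your reduction to the disjoint case is correct and elegant, and the chain
$|T| \le \sum_t(\alpha_t+\delta_t) \le \sum_t\alpha_t + \sum_t\gamma_t + \beta < 2\sum_t\alpha_t + \beta \le 2F$
does give a clean $|T| < 2F$ bound there. However, the argument as submitted has a genuine gap: the general case, in which the witness sets $A_t$ overlap, is not handled, and you say so yourself (``a more delicate combinatorial selection \ldots must be devised''). The union observation is fine ($C_{t_1}\cap C_{t_2}$ contains $s$ and no terminal, so by submodularity $c(\delta^+(C_{t_1}\cup C_{t_2}))<F$), but it is not at all clear how to turn ``sub-$F$ cuts enclosing several terminals'' into a bounded-multiplicity family, and your guess of multiplicity $4$ is unsupported. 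As you correctly note, directed cut functions are not posimodular, so the standard uncrossing tools do not apply, and this is precisely where the work has to be done.

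The paper avoids this obstacle entirely by working with a fixed, extremal flow rather than a fixed family of min-cuts. Specifically, it chooses a maximum $s$--$T$ flow $f$ whose imbalance sequence $z(t)=f(\delta^-(t))-f(\delta^+(t))$ is lexicographically largest, and then argues in the residual graph $D_f$. The lexicographic choice yields a \emph{single} cut $A$ of zero residual in-capacity containing exactly the ``large'' terminals (those with $z(t)\ge 1/4$), so no disjointification among large terminals is needed. For each small terminal $t_i$, minimality of $T$ and lexicographic maximality force a cut $B_i\subseteq V\setminus A$ with $B_i\cap T=\{t_i\}$ whose residual in-capacity, excluding what flows in from $A$, is less than $1/4-z_i$; combined with condition~\eqref{cond:degree} (giving $\delta_f^-(B_i)\ge 1-z_i$), this means $B_i$ must receive at least $3/4$ of residual capacity from $A$. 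Overlaps among the $B_i$ are handled not by uncrossing but by bookkeeping: the paper sets $B^\star=\bigcup_{i\neq j}(B_i\cap B_j)$ and bounds $\delta^-_f(B^\star)$ and $\delta^+_f(B^\star)$ separately, paying a worse constant ($1/4$ instead of $3/8$) for the possible double-counting. The residual out-capacity of $A$ is at most $2\sum_{\text{large}}z_i$, and comparing it with what the (possibly overlapping) $B_i$ must absorb yields $\frac{1}{4}(k-\ell)\le\sum_{\text{large}}z_i$, hence $k\le 8F$. So the paper's key moves --- the lexicographically extremal flow, residual capacities, the large/small split at threshold $1/4$, and the $B^\star$ accounting --- are exactly what replace the uncrossing step you could not complete.
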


The proof of this theorem can be skipped on first reading, as the algorithm in Section\nobreakspace \ref {sec:lcATSP} only uses it in a black-box manner.
\begin{proof}
Fix a minimal set $T$ and denote $k = |T|$. Our goal is to prove that $k \le 8 c(\delta^+(s))$. We know that there exists a flow of value $c(\delta^+(s))$ from $s$ to $T$. For any such flow $f$ we define its \emph{imbalance sequence} to be the sequence of values $z(t) = f(\delta^-(t)) - f(\delta^+(t)) \in \bR_+$ for all $t \in T$ sorted in non-increasing order. We select the flow $f$ which maximizes the imbalance sequence (lexicographically). We write $T = \{t_1, \ldots, t_k\}$ so that $z(t_1) \ge z(t_2) \ge \ldots \ge z(t_k)$; denote $z_i = z(t_i)$ for brevity. By minimality of $T$ we have $z(t_k) > 0$.
The following is our main technical lemma.
\begin{lemma} \label{lem:not_too_many_small}
Let $\ell$ be the number of $t \in T$ with $z(t) \ge \frac 14$, i.e., $z_1 \ge ... \ge z_\ell \ge \frac 14 > z_{\ell + 1} \ge ... \ge z_k$. Then we have \[ \frac 14 (k-\ell) \le \sum_{i=1}^\ell z_i. \]
In other words, the number of terminals with small imbalance is not much more than the sum of large imbalances.
\end{lemma}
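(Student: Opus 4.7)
The plan is to combine the lex-maximality of $f$ with condition \eqref{cond:degree} and the minimality of $T$ via a residual-cut analysis on the terminals.

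First I would extract structure from lex-maximality: for any two distinct terminals $t_a, t_b \in T$, if there is a path from $t_a$ to $t_b$ in the residual graph of $f$, then $z_a > z_b$. Otherwise, pushing $\varepsilon$ units of flow along such a path would decrease $z_a$ by $\varepsilon$ and increase $z_b$ by $\varepsilon$, producing a strictly lex-larger sorted imbalance sequence. A useful consequence is that no small terminal (imbalance $< \tfrac14$) can reach any large terminal in the residual graph.

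Next I would derive a cut inequality per small terminal. For a small $t_a$, let $R_a$ denote the set of vertices reachable from $t_a$ in the residual graph. Since $f(\delta^+(s)) = c(\delta^+(s))$ saturates every edge leaving $s$, the source has no outgoing residual edges; but reversing any flow path entering $t_a$ yields a residual path to $s$, so $s \in R_a$. Inspecting residual edges on the boundary of $R_a$ shows $f(\delta^+(R_a)) = c(\delta^+(R_a))$ and $f(\delta^-(R_a)) = 0$; flow balance then yields
\[
c(\delta^+(R_a)) \;=\; c(\delta^+(s)) - \sum_{t \in R_a \cap T} z_t \;=\; \sum_{t \in T \setminus R_a} z_t.
\]
Since $s \in R_a$, the set $V \setminus R_a \subseteq V$ is nonempty, and condition \eqref{cond:degree} gives $c(\delta^+(R_a)) = c(\delta^-(V \setminus R_a)) \ge 1$. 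Therefore $\sum_{t \in T \setminus R_a} z_t \ge 1$, and by the first step $T \setminus R_a$ contains every large terminal. I would also use the minimality of $T$ to obtain a residual cut of capacity strictly less than $z_a < \tfrac14$ separating each small $t_a$ from $T \setminus \{t_a\}$ (otherwise $z_a$ units could be rerouted to the other terminals, contradicting minimality), ruling out degenerate scenarios in which chains of small terminals satisfy the cut inequality using only each other.

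\textbf{Main obstacle.} The heart of the proof is aggregating the cut inequality across all $k - \ell$ small terminals into the target bound $\sum_{i \le \ell} z_i \ge (k-\ell)/4$. A naive summation only yields $(k-\ell)\sum_{i \le \ell} z_i + (\text{small-to-small contributions}) \ge k - \ell$, which together with $\sum_{i>\ell} z_i < (k-\ell)/4$ is too weak to produce the tight $\tfrac14$ constant. Obtaining it seems to require exploiting the DAG on $T$ induced by residual reachability (in which, by the first step, small terminals can only point to smaller-$z$ smalls) together with the tiny residual cuts from minimality, so as to attribute disjoint shares of large-terminal mass to each small terminal. Designing this combinatorial accounting is the main technical hurdle.
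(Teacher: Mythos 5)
Your preliminary observations are sound: lex-maximality of the imbalance sequence does imply the residual DAG structure on terminals (any residual path $t_a \to t_b$ forces $z_a > z_b$), the reachability analysis for $R_a$ correctly yields $c(\delta^+(R_a)) = \sum_{t\in T\setminus R_a} z_t \ge 1$, and minimality of $T$ does give a residual cut of value $< z_a$ around each $t_a$. But you explicitly concede in the ``Main obstacle'' paragraph that these ingredients do not aggregate to the claimed bound, and that is indeed the crux: the proposal does not contain a proof. The aggregation you would need is not at all a routine summation, and the cut bounds you extract are the wrong ones to make it work.

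The paper's proof closes this gap differently. It does not use reachability sets $R_a$ (which contain $s$ and have residual out-capacity zero) but instead builds two families of sets tuned to the threshold $\tfrac14$. First it finds a set $A$ containing exactly the large terminals with $\delta_f^-(A)=0$, whence (using condition~\eqref{cond:degree} via the general bound $\delta_f^-(X)+2\sum_{t\in T\cap X}z(t)\ge\delta_f^+(X)$) one gets $\delta_f^+(A) \le 2\sum_{i\le\ell} z_i$. Then, for each small $t_i$, it produces a set $B_i\subseteq V\setminus(A\cup\{s\})$ with $B_i\cap T=\{t_i\}$ and residual in-capacity from outside $A$ strictly below $\tfrac14-z_i$. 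This bound is derived by combining lex-maximality and minimality \emph{in a single step}: if one could push $\tfrac14-z_i$ more flow into $t_i$ from the other small terminals (staying outside $A$), then either the sorted sequence improves or some terminal's imbalance drops to zero. Your separate invocations (a bound of $<z_a$ from minimality, a bound of $\ge 1$ from the reachability cut) do not give this $\tfrac14-z_i$ quantity and hence do not localize the deficit correctly. Finally, because the $B_i$ may overlap, the paper introduces $B^*=\bigcup_{i\ne j}(B_i\cap B_j)$, shows $\alpha_i+\beta_i>\tfrac34$ for each small $i$ (where $\alpha_i,\beta_i$ measure residual capacity into $B_i\setminus B^*$ from $A$ and from $B^*$), and closes by comparing $\delta_f^-(B^*)$ against $\delta_f^+(B^*)$; this is precisely the ``combinatorial accounting'' you flag as the missing piece. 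Without the sets $B_i$ with the $\tfrac14-z_i$ residual bound and without the overlap bookkeeping via $B^*$, the argument does not go through.
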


Assuming this lemma, the main theorem follows immediately, since
we have $\frac 14 k = \frac 14 \ell + \frac 14 (k - \ell) \le \sum_{i=1}^\ell z_i + \sum_{i=1}^\ell z_i \le 2c(\delta^+(s))$, i.e., $k \le 8c(\delta^+(s))$.
\end{proof}

The remainder of this section is devoted to the proof of the technical lemma.
Let us first give an outline.  We analyze the residual capacity (with respect to $f$) of certain cuts that must be present due to the lexicographic property. First of all, there must be a saturated cut (that is, one  of $0$ residual  in-capacity) $A\subseteq V$ containing all large terminals (i.e., those with imbalance at least $1/4$) but no small ones (\protect \MakeUppercase {C}laim\nobreakspace \ref {cl:A}).
Next, consider an arbitrary \emph{small} terminal $i$. Also by the maximality property, it is not possible to increase the value of $z_i$ to $1/4$ by rerouting flow from other small terminals to $i$.
Hence there must be a cut $B_i$, disjoint from $A$, which contains  $t_i$ as the only terminal and has residual in-capacity less than $1/4-z_i$ in  $D\setminus A$  (\protect \MakeUppercase {C}laim\nobreakspace \ref {cl:Bi}). As an illustration of the argument, let us assume that these sets $B_i$ are pairwise disjoint. It follows from \eqref{cond:degree} that the residual in-capacity of $B_i$ is at least $1-z_i$ (\protect \MakeUppercase {C}laim\nobreakspace \ref {cl:residual-one-term-set}). 
Hence every set $B_i$ must receive $3/4$ units of  its residual in-capacity from $A$. On the other hand, \eqref{cond:degree} upper-bounds the residual out-capacity of $A$ by $2\sum_{i=1}^\ell z_i$ (\protect \MakeUppercase {C}laim\nobreakspace \ref {cl:residual-term-set}). These together give a bound $\frac{3}{4}(k-\ell)\le 2 \sum_{i=1}^\ell z_i$.  Recall however that we assumed that all sets $B_i$ are disjoint. Since these sets may in fact overlap, the proof needs to be more careful: instead of sets $B_i$, we argue with the sets $B_i\setminus (\cup_{j\neq i} B_j)$ (nonempty as containing  $t_i$), and the union of pairwise intersections $B^*$; thus instead of $3/8$, we get a slightly worse constant $1/4$.

\ifsvjour
	\begin{proof}[of Lemma\nobreakspace \ref {lem:not_too_many_small}]
\else
	\begin{proof}[Proof of Lemma\nobreakspace \ref {lem:not_too_many_small}]
\fi
First note that the claim is trivial if $\ell = k$, so assume $\ell < k$.
For an arc $e=(u,v)$, we let $\overleftarrow{e}=(v,u)$ denote the reverse arc. We define the residual graph $D_f=(V\cup\{s\},E_f)$ with $E_f=\{e\in E: f(e)<c(e)\}\cup\{e: \overleftarrow{e}\in E, f(\overleftarrow e)>0\}$. The residual capacity for the first set of arcs is defined as $c_f(e)=c(e)-f(e)$, and for the second set as $c_f(e)=f(\overleftarrow{e})$.
For any set $X \subseteq V$, and a disjoint $Y \subseteq V$, let $\delta_f^-(X)$, $\delta_f^+(X)$ and $\delta_f(X,Y)$ denote the capacities of the respective cuts in the residual graph $D_f$ of $f$, i.e.,
\begin{align*}
\delta_f^-(X) &= \sum_{e\in \delta^-(X)} c_f(e)= c(\delta^-(X)) - f(\delta^-(X)) + f(\delta^+(X)), \\
\delta_f^+(X) &= \sum_{e\in \delta^+(X)} c_f(e) = c(\delta^+(X)) - f(\delta^+(X)) + f(\delta^-(X)), \\
\delta_f(X,Y) &= \sum_{e\in \delta(X,Y)} c_f(e)=c(\delta(X,Y)) - f(\delta(X,Y)) + f(\delta(Y,X)).
\end{align*}
The next two claims derive simple bounds from \eqref{cond:degree} on the residual in-and out-capacities of cuts.
\begin{claim} \label{cl:residual-term-set}
If $X \subseteq V$, then $\delta_f^-(X) + 2 \sum_{t \in T \cap X} z(t) \ge \delta_f^+(X)$.
\end{claim}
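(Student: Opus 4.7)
\medskip

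The plan is to prove the claim by a direct calculation, using only the definitions of residual capacity, the flow conservation property of $f$, and hypothesis \eqref{cond:degree}. The inequality essentially says that the residual out-capacity of $X$ cannot exceed its residual in-capacity by more than twice the total imbalance inside $X$, which should be intuitive since the only way out-capacity can dominate is through flow already pushed into terminals in $X$.

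First I would subtract the two definitions of residual cut capacities given just above the claim, obtaining
\[
\delta_f^+(X)-\delta_f^-(X)=\bigl(c(\delta^+(X))-c(\delta^-(X))\bigr)+2\bigl(f(\delta^-(X))-f(\delta^+(X))\bigr).
\]
The first parenthesized term is non-positive: this is immediate from \eqref{cond:degree}, which forces $c(\delta^-(X))\ge c(\delta^+(X))$ for every nonempty $X\subseteq V$ (the ``max'' on the right side only makes the bound stronger). So it remains to control the second term.

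Next I would handle $f(\delta^-(X))-f(\delta^+(X))$ by flow conservation. Since $s\notin X$ (as $X\subseteq V$) and $f$ is a feasible $s$--$T$ flow, the net flow entering any vertex $v\in V$ equals $z(v)=f(\delta^-(v))-f(\delta^+(v))$, which is $0$ unless $v\in T$ and equals the imbalance otherwise. Summing over $v\in X$ telescopes the contribution of arcs with both endpoints in $X$, leaving exactly
\[
f(\delta^-(X))-f(\delta^+(X))=\sum_{v\in X} z(v)=\sum_{t\in T\cap X} z(t).
\]
Substituting this into the previous display gives $\delta_f^+(X)-\delta_f^-(X)\le 2\sum_{t\in T\cap X}z(t)$, which is the claim.

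There is no real obstacle here: the whole argument is a few lines of bookkeeping, and the only subtlety is keeping the signs of $f(\delta^-(X))-f(\delta^+(X))$ consistent between the residual-capacity expansion and the flow-conservation sum. This claim is really a warm-up lemma, and its role in what follows is to convert the bound on the source-emitted flow $c(\delta^+(s))$ into a usable bound on residual out-capacities of sets that contain many ``large'' terminals (in particular the set $A$ constructed later); the harder work will come when combining it with the minimality/lexicographic-maximality properties of $f$ and $T$ to argue about the sets $B_i$.
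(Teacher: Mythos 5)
Your proof is correct and takes essentially the same route as the paper's: expand the two residual cut capacities, invoke flow conservation to identify $f(\delta^-(X))-f(\delta^+(X))$ with $\sum_{t\in T\cap X}z(t)$, and use condition \eqref{cond:degree} to drop the nonpositive capacity term. The only cosmetic difference is that you work with $\delta_f^+(X)-\delta_f^-(X)$ while the paper rearranges to $\delta_f^-(X)-\delta_f^+(X)$.
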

\begin{proof}
\begin{align*}
\delta_f^-(X)-\delta_f^+(X) &= c(\delta^-(X))-c(\delta^+(X))-2\rb{f(\delta^-(X))-f(\delta^+(X))} \\
&=c(\delta^-(X))-c(\delta^+(X))- 2 \sum_{t \in T \cap X} z(t) \ge - 2 \sum_{t \in T \cap X} z(t).
\end{align*}
The equality is by flow conservation. The inequality is by \eqref{cond:degree}. The claim follows.
\end{proof}

\begin{claim} \label{cl:residual-one-term-set}
Consider $X \subseteq V$ such that $X \cap T = \{t_i\}$ for some $1 \le i \le k$. Then $\delta_f^-(X) \ge 1 - z_i$.
\end{claim}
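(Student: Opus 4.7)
The plan is to compute $\delta_f^-(X)$ directly using the formula already established at the start of the proof of the lemma, and then invoke flow conservation together with condition \eqref{cond:degree} to finish.

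First I would write
\[
\delta_f^-(X) = c(\delta^-(X)) - f(\delta^-(X)) + f(\delta^+(X)) = c(\delta^-(X)) - \bigl(f(\delta^-(X)) - f(\delta^+(X))\bigr).
\]
Next I would evaluate the parenthesized net flow into $X$ by flow conservation. Since $f$ is a flow from the source $s$ with $s \notin V$ (so $s \notin X$) to the sink set $T$, the only vertices of $X$ at which $f$ is unbalanced are the terminals in $X \cap T$. Hence
\[
f(\delta^-(X)) - f(\delta^+(X)) = \sum_{t \in X \cap T} z(t) = z_i,
\]
using the hypothesis $X \cap T = \{t_i\}$ and the definition of $z(t)$.

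Substituting back gives $\delta_f^-(X) = c(\delta^-(X)) - z_i$. Because $X$ is nonempty (it contains $t_i$) and $X \subseteq V$, assumption \eqref{cond:degree} applies to $X$ and yields $c(\delta^-(X)) \ge 1$. Combining, $\delta_f^-(X) \ge 1 - z_i$, as required.

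There is no real obstacle here: the claim is a direct consequence of the residual-capacity formula, flow conservation with the source at $s \notin V$, and the lower bound of $1$ on $c(\delta^-(X))$ from \eqref{cond:degree}. The only thing to be careful about is noting that $s \notin X$ so that flow conservation on $X$ indeed produces only the contribution from the single terminal $t_i$.
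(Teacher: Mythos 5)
Your proof is correct and follows exactly the paper's argument: apply the residual-capacity identity, use flow conservation (with $s\notin X$) to evaluate the net inflow as $z_i$, and invoke condition \eqref{cond:degree} to bound $c(\delta^-(X))\ge 1$. The extra sentence explaining why the net flow into $X$ equals $z_i$ is a welcome clarification but does not change the approach.
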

\begin{proof}
\[ \delta_f^-(X) = c(\delta^-(X)) - \rb{f(\delta^-(X)) - f(\delta^+(X))} \ge 1 - z_i. \]
Here we used \eqref{cond:degree} and the flow conservation $f(\delta^-(X)) - f(\delta^+(X)) = z_i$ (as the single sink  contained in $X$ is $t_i$).
\end{proof}

The next claim shows that the large terminals can be separated from the small ones by a cut of residual in-degree $0$. This follows from the lexicographically maximal choice, and is not a particular property of the threshold $1/4$ (it remains true if we replace $\ell$ by any $1 \le j \le k$).
\begin{claim}\label{cl:A}
There exists a set $A\subseteq V$ with $A \cap T = \{ t_1, ..., t_\ell \}$ (i.e., $A$ contains exactly the large terminals) such that $\delta_f^-(A) = 0$, and $\delta_f^+(A)\le 2\sum_{i=1}^\ell z_\ell$.
\end{claim}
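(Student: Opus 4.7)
My plan is to define $A$ as the set of vertices of $V$ from which some large terminal is reachable in the residual graph $D_f$, and then verify the four required properties. A preliminary observation is that since $f$ has value $c(\delta^+(s))$ and $\delta^-(s) = \emptyset$, the flow $f$ must saturate every arc out of $s$; thus $s$ has no residual out-capacity, and in particular no residual arc enters $V$ from $s$. This lets me work entirely inside $V$ without worrying about the source contributing to $\delta_f^-(A)$.

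Concretely I would take $A := \{\, v \in V : v \text{ reaches some } t_i \text{ with } i \le \ell \text{ in } D_f \,\}$. By construction $\{t_1,\ldots,t_\ell\} \subseteq A$. To see that no small terminal $t_j$ (with $j > \ell$) lies in $A$, I argue by contradiction: a residual path from $t_j$ to some large $t_i$ would let me push $\varepsilon > 0$ units of flow along it, where $\varepsilon$ is taken smaller than the bottleneck residual capacity of the path, smaller than $z_j$, and small enough that $z_i + \varepsilon$ does not surpass any strictly larger imbalance. The resulting $f' \le c$ still has value $c(\delta^+(s))$ from $s$ to $T$, and differs from $f$ only in that $z(t_i)$ becomes $z_i + \varepsilon$ and $z(t_j)$ becomes $z_j - \varepsilon > 0$. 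Since $z_i \ge \tfrac{1}{4} > z_j$, the sorted imbalance sequence strictly increases lexicographically at the position originally occupied by $z_i$, contradicting the lex-maximal choice of $f$. Hence $T \cap A = \{t_1,\ldots,t_\ell\}$.

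For the two cut bounds, any residual arc $(u,v)$ with $v \in A$ and $u \in V$ forces $u \in A$ (prepending the arc to a path from $v$ to a large terminal), while $u = s$ is ruled out by the preliminary observation; therefore $\delta_f^-(A) = 0$. Combining this with Claim \ref{cl:residual-term-set} applied to $A$ yields
\[ \delta_f^+(A) \le \delta_f^-(A) + 2\sum_{t \in T \cap A} z(t) = 2\sum_{i=1}^\ell z_i, \]
as required. The main delicacy is the lex-maximality step: one must verify that for sufficiently small $\varepsilon$, the first position in the sorted sequence whose value changes strictly increases. This holds because $z_i + \varepsilon$ either remains in position $i$ (where it strictly exceeds the old value $z_i$) or moves to some position $i' < i$ whose old value was below $z_i + \varepsilon$; either way the first differing coordinate is strictly larger, giving a strict lex increase.
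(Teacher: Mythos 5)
Your proof is correct and follows essentially the same route as the paper: both rely on the observation that no residual path can run from a small terminal to a large one (else the lex-maximality of $f$'s imbalance sequence would be violated), and then read off a cut of zero residual in-capacity, finishing with Claim~\ref{cl:residual-term-set}. The only cosmetic difference is that you construct $A$ explicitly as the set of vertices that can reach a large terminal in $D_f$, whereas the paper invokes max-flow/min-cut on $D_f$ between the small and large terminals; these yield the same cut, and your version has the small advantage of making $s\notin A$ and $\delta_f^-(A)=0$ immediate rather than needing a side remark. (Note also that the claim's $2\sum_{i=1}^{\ell} z_\ell$ is a typo for $2\sum_{i=1}^{\ell} z_i$, which is the bound both you and the paper actually prove.)
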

\begin{proof}
If $\ell = 0$, then we can choose $A = \emptyset$. So assume $0 < \ell < k$.
Consider the maximum flow in the residual graph $D_f$ from the source set $\{ t_{\ell + 1}, ...,
t_k \}$ to the sink set $\{ t_1, ..., t_{\ell} \}$. If its value is positive,
then there exists a path $P$ in $D_f$ from $t_i$ to $t_j$ for some $i > \ell$
and $j \le \ell$ (without loss of generality it contains no other terminals).
Set $\eps = \min \{ z(t_i), \min_{(u,v) \in P} c_f(u,v) \} > 0$. Then the
$s$-$T$ flow $f' = f + \eps \cdot \one_{P}$ has a lexicographically larger imbalance
sequence than $f$ because $z_i$ is increased without decreasing any other of
the large imbalances, a contradiction. So there must be a cut $A\subseteq V$
with $A\cap T=\{ t_1, ..., t_{\ell} \}$ and $\delta_f^-(A) = 0$.\footnote{Note that $s \notin A$ since $D_f$ contains a path from $t_k$ to $s$.}
The second part follows by the first via \protect \MakeUppercase {C}laim\nobreakspace \ref {cl:residual-term-set}.
\end{proof}

\begin{claim} \label{cl:Bi}
For any $\ell+1 \le i \le k$  (i.e., $t_i$ is a small terminal) there exists a set $B_i \subseteq V \setminus A$ such that $B_i \cap T = \{t_i\}$ and $\delta_f^-(B_i) - \delta_f(A,B_i) < \frac 14 - z_i$.
\end{claim}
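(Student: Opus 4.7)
The plan is to prove the claim via a max-flow min-cut argument in an auxiliary graph, combined with the lexicographic maximality of $f$. Consider the residual graph $D_f$ restricted to $V \setminus A$, augmented with a super-source $s^*$ having edges $s^* \to t_j$ of capacity $z_j$ for each $t_j \in T' := \{t_{\ell+1}, \ldots, t_k\} \setminus \{t_i\}$. Let $\mu^*$ denote the value of a maximum $s^*$-to-$t_i$ flow in this augmented graph.

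I would first show $\mu^* < \tfrac14 - z_i$. Suppose instead $\mu^* \ge \tfrac14 - z_i$, realized by some flow $g$. The source capacities $z_j$ ensure that the supply at each $t_j \in T'$ satisfies $g_j \le z_j$, so $f' := f + g$ (with $g$ interpreted on residual arcs in the standard way) is a valid $s$-$T$ flow. As $g$ is supported in $V \setminus A$, the imbalances $z_1, \ldots, z_\ell$ at the large terminals are untouched, whereas at $t_i$ the imbalance grows to $z_i + \mathrm{value}(g) \ge \tfrac14$. Hence the sorted imbalance sequence of $f'$ agrees with that of $f$ on its top $\ell$ entries but has $(\ell+1)$-th entry at least $\tfrac14 > z_{\ell+1}$, contradicting the lexicographic maximality of $f$.

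Given $\mu^* < \tfrac14 - z_i$, min cut provides a set $B_i \subseteq V \setminus A$ with $t_i \in B_i$ such that the cut $(S^c \ni s^*,\, B_i)$ has value $\mu^*$. Since $s$ has no residual out-edges (all of its arcs in $D$ are saturated by $f$), this cut value decomposes as $\sum_{t_j \in B_i \cap T'} z_j + (\delta_f^-(B_i) - \delta_f(A, B_i))$, giving $\delta_f^-(B_i) - \delta_f(A, B_i) \le \mu^* < \tfrac14 - z_i$. To enforce $B_i \cap T = \{t_i\}$, I would take $B_i$ to be the canonical sink-side min cut --- the set of nodes of $V \setminus A$ that can reach $t_i$ in the residual of $g$. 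Any $t_j \in T'$ with unsaturated source edge ($g_j < z_j$) is then excluded from $B_i$, for otherwise the path $s^* \to t_j \to \cdots \to t_i$ would be augmenting.

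The main obstacle is terminals $t_j$ that are fully saturated ($g_j = z_j$) yet can still reach $t_i$ in the residual of $g$: the preceding augmenting-path argument does not rule them out. I expect to address this by a further structural modification of $B_i$ --- using the bound $\delta_f^+(t_j) \le \delta_f^-(t_j) + 2 z_j$ implied by \eqref{cond:degree}, one should be able to remove each such $t_j$ from $B_i$ without pushing $\delta_f^-(B_i) - \delta_f(A, B_i)$ above $\tfrac14 - z_i$, possibly at the cost of enlarging $B_i$ with some non-terminal neighbors. Ensuring that this removal step preserves the inequality is the technical crux of the proof.
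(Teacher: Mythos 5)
Your overall plan---exhibit $B_i$ as the sink side of a min cut in an auxiliary max-flow problem, and derive the bound on the flow value from the lexicographic maximality of $f$---is the paper's plan. The first half works cleanly: putting capacity $z_j$ on the arcs $s^* \to t_j$ guarantees that $f+g$ is automatically a valid $s$--$T$ flow, so the lexicographic contradiction goes through in a single case; in this respect your argument is slightly tidier than the paper's. (Do double-check the sorted-sequence bookkeeping in the case $z(t_i)+\mathrm{value}(g) > z_\ell$, but it holds.)

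The trouble, which you correctly identify, is the second half: with bounded source arcs the min cut's sink side is \emph{not} guaranteed to avoid the other small terminals, and your proposed repair does not work. Concretely, removing a saturated $t_j$ from $B_i$ changes the controlled quantity by
$\delta_f\bigl(t_j,\ B_i\setminus\{t_j\}\bigr) - \delta_f\bigl(V\setminus(A\cup B_i),\ t_j\bigr)$,
and the first term can be as large as $\delta_f^+(t_j)$, which \protect \MakeUppercase {C}laim~\ref{cl:residual-term-set} only bounds by $\delta_f^-(t_j)+2z_j$; there is no control at all in terms of $\tfrac14-z_i$, and the bound can break. So the ``technical crux'' you flag is not a loose end but a genuine gap in this route.

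The paper avoids the gap by putting \emph{unbounded} capacity on the arcs from the super-source (equivalently, taking the source \emph{set} $\{t_{\ell+1},\ldots,t_k\}\setminus\{t_i\}$). Then no other small terminal can lie on the finite-value min cut's sink side, so $B_i\cap T=\{t_i\}$ comes for free. The price is that $f+g$ may no longer be an $s$--$T$ flow (some $t_j$ could be over-drained); the paper handles this by scaling $g$ down maximally, at which point some terminal's imbalance hits zero, contradicting the \emph{minimality} of $T$ rather than lexicographic maximality. You should adopt the unbounded-source variant and add that second contradiction branch.

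Two minor points you should also address: (i) when $k=\ell+1$, the source set $T'$ is empty; the paper handles this directly with $B_k=V\setminus A$; and (ii) when translating the cut value of the auxiliary max-flow problem into $\delta_f^-(B_i)-\delta_f(A,B_i)$, you implicitly use that $\delta_f(s,B_i)=0$ because every arc out of $s$ is saturated by $f$; state this explicitly, as the paper does.
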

\begin{proof}
If $k = \ell + 1$, then we can choose $B_k = V \setminus A$; we have $\delta_f^-(B_k) - \delta_f(A,B_k)=0$ since all arcs leaving the source $s$ are saturated.  So assume $k - \ell \ge 2$. Consider the maximum flow from the source set $\{ t_{\ell + 1}, ..., t_k \} \setminus \{ t_i \}$ to the sink $t_i$ in the graph $D_f \setminus (A\cup\{s\})$.

If its value is at least $\frac 14 - z_i$, then we will get a contradiction by increasing the imbalance of $t_i$ to at least $\frac 14$ without changing any of the large imbalances. Namely, let $g$ be a flow from $\{ t_{\ell + 1}, ..., t_k \} \setminus \{ t_i \}$ to $t_i$ of value $\frac 14 - z_i$. Consider the vector $f + g$. There are two possible cases:
\begin{itemize}
	\item If $f+g$ is still an $s$-$T$ flow, i.e., if for all $j$ we have $g(\delta^+(t_j)) - g(\delta^-(t_j)) \le z(t_j)$, then it has a lexicographically larger imbalance sequence than $f$, a contradiction.
	\item Otherwise pick the maximum $\alpha > 0$ such that $f + \alpha g$ is still an $s$-$T$ flow, i.e., for all $j$ we have $\alpha \rb{g(\delta^+(t_j)) - g(\delta^-(t_j))} \le z(t_j)$, with equality for some $j$. This means that $f + \alpha g$ is an $s$-$T$ flow where at least one terminal $t_j$ has zero imbalance, i.e., it can be removed from the set $T$, contradicting its minimality.
\end{itemize}
So there must be a cut $B_i\subseteq V\setminus (A \cup \{s\})$ such that $B_i\cap T=\{t_i\}$ and
\[\frac 14 - z_i > \delta_f(V\setminus (A\cup\{s\}\cup B_i), B_i) = \delta_f^-(B_i) - \delta_f(A,B_i) -\delta_f(s,B_i). \]
The claim follows by $\delta_f(s,B_i)=0$, which holds since all edges in $\delta^+(s)$ are saturated in $f$.
\end{proof}

The argument uses the bound $\delta_f^+(A)\le 2\sum_{i=1}^\ell z_\ell$ and the fact that all the $B_i$'s must receive a large part of their residual in-degrees from $A$. Since the sets $B_i$ overlap, we have to take their intersections into account. Let us therefore define
\[
\Bs:=\bigcup_{i,j>\ell, i\neq j} (B_i\cap B_j) \subseteq V \setminus (A \cup \{s\})
\]
as the set of vertices contained in at least two sets $B_i$. Let $\alphas:=\delta_f(A,\Bs)$, $\alpha_i:=\delta_f(A,B_i\setminus \Bs)$, and $\beta_i:=\delta_f(\Bs,B_i\setminus \Bs)$
for each $\ell + 1 \le i \le k$.

\begin{claim} \label{cl:alpha-plus-beta}
For each $\ell + 1 \le i \le k$ we have $\frac 34 < \alpha_i + \beta_i$.
\end{claim}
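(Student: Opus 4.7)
The plan is to apply Claim~\ref{cl:residual-one-term-set} not to $B_i$ itself but to the refined set $B_i\setminus \Bs$, and then track where the residual in-capacity of that set comes from, using Claim~\ref{cl:Bi} to control the contribution from outside $A\cup \Bs$.

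First I would verify that $(B_i\setminus \Bs)\cap T = \{t_i\}$. This is because $B_i\cap T=\{t_i\}$ by construction, $t_i\notin B_j$ for any $j\neq i$ with $j>\ell$ (since $B_j$ contains no other terminals), and $t_i\notin A$ (since $A$ contains only the large terminals $t_1,\dots,t_\ell$); in particular $t_i\notin \Bs$. Thus Claim~\ref{cl:residual-one-term-set} applies and gives
\[
\delta_f^-(B_i\setminus \Bs)\ge 1-z_i.
\]

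Next I would decompose this residual in-capacity by partitioning $V\cup\{s\}$ as the disjoint union $\{s\}\cup A\cup \Bs\cup (B_i\setminus \Bs)\cup R$, where $R=V\setminus (A\cup\{s\}\cup \Bs\cup B_i)$. This yields
\[
\delta_f^-(B_i\setminus \Bs)=\delta_f(s,B_i\setminus \Bs)+\alpha_i+\beta_i+\delta_f(R,B_i\setminus \Bs),
\]
and the first term vanishes because all edges leaving $s$ are saturated in $f$. The key inequality is now the bound on $\delta_f(R,B_i\setminus \Bs)$: since $R\subseteq V\setminus (A\cup\{s\}\cup B_i)$ and $B_i\setminus \Bs\subseteq B_i$, we have
\[
\delta_f(R,B_i\setminus \Bs)\le \delta_f(V\setminus (A\cup\{s\}\cup B_i),\,B_i)=\delta_f^-(B_i)-\delta_f(A,B_i)<\tfrac14 - z_i,
\]
where the last inequality is precisely the content of Claim~\ref{cl:Bi} (again using $\delta_f(s,B_i)=0$).

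Combining these two bounds gives $\alpha_i+\beta_i\ge (1-z_i)-\delta_f(R,B_i\setminus \Bs)>(1-z_i)-(\tfrac14 - z_i)=\tfrac34$, which is exactly the claim. The only slightly delicate step is the bookkeeping in the decomposition — in particular checking that $R$ really is contained in $V\setminus (A\cup\{s\}\cup B_i)$ (so that we can invoke Claim~\ref{cl:Bi}) and that $t_i$ indeed survives in $B_i\setminus \Bs$ (so that we can invoke Claim~\ref{cl:residual-one-term-set}); I do not expect any genuinely hard obstacle beyond being careful with which sets lie where.
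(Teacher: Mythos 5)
Your proposal is correct and follows essentially the same route as the paper: apply Claim~\ref{cl:residual-one-term-set} to $B_i\setminus B^\star$ to get the lower bound $1-z_i$, then decompose $\delta_f^-(B_i\setminus B^\star)$ by the source of the entering residual edges and invoke Claim~\ref{cl:Bi} to control the contribution not coming from $A$ or $B^\star$. Your version phrases the decomposition as an exact partition of $V\cup\{s\}$ (with the $s$-term vanishing and the $R$-term bounded by $\delta_f^-(B_i)-\delta_f(A,B_i)$), whereas the paper states the same thing as a one-line upper bound by three possibly overlapping terms; this is a presentational, not a substantive, difference. One small remark: your parenthetical ``$t_i\notin A$; in particular $t_i\notin B^\star$'' is a non sequitur — what actually gives $t_i\notin B^\star$ is the preceding observation that $t_i\notin B_j$ for $j\neq i$, which you correctly state, so the argument as a whole is fine.
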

\begin{proof}
Note that $(B_i\setminus \Bs)\cap T = \{t_i\}$, and thus $\delta_f^-(B_i\setminus \Bs)\ge 1-z_i$ by \protect \MakeUppercase {C}laim\nobreakspace \ref {cl:residual-one-term-set}.
From this we can see
\begin{align*}
  1-z_i &\le \delta_f^-(B_i\setminus \Bs) \\ &\le \left(\delta_f^-(B_i)-\delta_f(A,B_i)\right)+\delta_f(A,B_i\setminus \Bs)+\delta_f(\Bs,B_i\setminus \Bs) \\
  &< \frac 14 -z_i+\alpha_i+\beta_i,
\end{align*}
where the second inequality follows because an edge entering $B_i \setminus \Bs$ either enters $B_i$ from outside of $A$, or enters $B_i \setminus \Bs$ from $\Bs$, or enters $B_i \setminus \Bs$ from $A$.
\end{proof}
For the residual in-degree of the set $\Bs$, we apply the trivial bound
\[
\delta^-_f(\Bs) \le \delta_f(A,\Bs) + \sum_{i=\ell+1}^k \rb{\delta^-_f(B_i)-\delta(A,B_i)} \le \alphas + \frac 14 (k-\ell).
\]
The last estimate is by the choice of the sets $B_i$ in \protect \MakeUppercase {C}laim\nobreakspace \ref {cl:Bi}. For the residual out-degree, we get
\[\delta^+_f(\Bs)\ge \sum_{i=\ell+1}^k \beta_i > \frac 34 (k-\ell)- \sum_{i=\ell+1}^k \alpha_i,\]
using \protect \MakeUppercase {C}laim\nobreakspace \ref {cl:alpha-plus-beta}. Applying \protect \MakeUppercase {C}laim\nobreakspace \ref {cl:residual-term-set} to $\Bs$ and noting that $\Bs\cap T=\emptyset$ gives $\delta^-_f(\Bs)\ge \delta^+_f(\Bs)$. Putting \protect \MakeUppercase {C}laim\nobreakspace \ref {cl:A} and the above two bounds together, we conclude that
\begin{equation}\label{eq:f-plus-A}
 2\sum_{i=1}^\ell z_i\ge  \delta_f^+(A)\ge \alphas+\sum_{i=\ell+1}^k \alpha_i\ge \frac 12 (k-\ell).
\end{equation}
Lemma\nobreakspace \ref {lem:not_too_many_small} now follows.
\end{proof}

\fi

\section{Algorithm for Local-Connectivity ATSP}
\label{sec:lcATSP}

We prove our main result in this section. Our claim for ATSP follows from solving Local-Connectivity ATSP:

\begin{theorem} \label{thm:main}
There is a polynomial-time  $100$-light algorithm for Local-Connectivity ATSP on
graphs with two edge weights.
\end{theorem}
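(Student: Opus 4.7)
The plan is to follow the roadmap of Section \ref{sec:overview}. First, apply Theorem \ref{thm:flow_thm} (via the reduction sketched before Lemma \ref{lem:flow_app}: attach a new source $s$ to the tails of the edges in $E_1$ with their $\xs$-capacities) to obtain a terminal set $T$ with $|T| \le 8\,\xs(E_1)$ together with a flow $f$ inside $\xs$ that carries the entire expensive-edge mass from $E_1$ to $T$. Then construct the split graph $\gsp$ by duplicating each vertex into a ``debt'' copy carrying $f$ and a ``regular'' copy carrying $\xs - f$, gluing the two copies only at terminals; by design, every $\gsp$-walk that starts with an expensive edge must continue along cheap edges in the debt copies until it reaches a terminal, where it can pay off its debt. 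Finally, define $\lb$ so that every vertex pays for its outgoing cheap mass (for instance through a term proportional to $w_0\,\xs(\delta^+(v))$ on the regular copy) and every terminal $t\in T$ additionally pays $\Theta(w_1)$; this balances since $|T|\le 8\,\xs(E_1)$, so one can ensure $\lb(V)\le \OPT$.

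Next, to handle a general partition $V = V_1\cup\cdots\cup V_k$, I would, for each $V_i$, build an auxiliary graph $\giaux$ on vertex set $V_i$ whose arcs encode \emph{self-paying} walks inside $V_i$ in $\gsp$: walks that consist only of cheap edges, or that use exactly one expensive edge and pass through a terminal which covers its cost. Let $U_i \subseteq V_i$ be a sink strongly connected component of $\giaux$. By construction, every $\giaux$-arc leaving $U_i$ corresponds to a $\gsp$-walk that also leaves $V_i$, and its total $w$-cost is bounded by a constant times its contribution to $\lb$. Now set up an integer circulation problem on $\gsp$ augmented by one auxiliary vertex $A_i$ per class, where one unit of flow at $U_i$ is rerouted through $A_i$, with constraints: upper bound $\lceil 2\,\xsp(\delta^+(v))\rceil$ at every $v$, lower bound $1$ at the busier copy of each vertex, and exactly one incoming and one outgoing arc at every $A_i$. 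Since $2\,\xsp$ (appropriately rerouted through the $A_i$) is a feasible fractional solution, an integer circulation $z_{\mathrm{sp}}$ exists by total unimodularity of circulation polytopes.

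Finally, map $z_{\mathrm{sp}}$ back to the original graph. The result is Eulerian except at the two endpoints in $U_i$ of the arcs incident to each $A_i$; I would close these up by a walk in $\giaux$ (which exists because $U_i$ is strongly connected in $\giaux$) and expand each of its arcs into the underlying self-paying $\gsp$-walk. The resulting multiset $F$ in $G$ is Eulerian and crosses every $V_i$, because any $A_i$-incident arc leaves $U_i$ and therefore $V_i$. The cost of each connected component $\tilde G$ of $F$ then splits into three buckets: cheap edges charged to their heads at a constant factor coming from the rounding $\lceil 2\,\xsp\rceil$; expensive edges of the main circulation charged to the terminals they are routed to through the debt copies; and patching-walk edges charged to terminals inside $U_i$. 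Adding up these contributions yields $w(\tilde G)/\lb(\tilde G) \le 100$ for every component $\tilde G$.

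The hard part is the design of $\giaux$ and the proof that a suitable sink component $U_i$ always exists and has the properties required for the three-bucket charging to work: each $\giaux$-arc must be self-paying, every patching walk inside $U_i$ must therefore have cost bounded by $\lb(U_i)$, outgoing $\giaux$-arcs from $U_i$ must genuinely leave $V_i$ in $\gsp$, and the split circulation relaxation must remain feasible after the $A_i$-reroutings. This is the technical heart where the two-weight structure interacts non-trivially with the splitting, and coordinating all three charging schemes to produce a single uniform constant (the stated $100$ is not optimized) is where the main effort goes.
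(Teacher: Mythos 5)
Your high-level roadmap matches the paper's: flow theorem $\to$ terminals $T$ with $|T|\le 8\xs(E_1)$ $\to$ split graph $\gsp$ with debt/free copies glued at terminals $\to$ $\lbs$ charging cheap mass locally and $\Theta(w_1)$ at terminals $\to$ auxiliary graph $\giaux$ of self-paying paths $\to$ sink component $U_i$ $\to$ integral circulation with auxiliary vertices $A_i$ $\to$ patch with walks in $\giaux$. You also correctly identify where the difficulty is concentrated. However, there are specific technical pieces that the proposal sweeps under the rug and that are essential to making the argument go through, not merely hard bookkeeping.

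First, you cannot reroute ``one unit of flow through $A_i$'' arbitrarily. The construction must select the incoming boundary set $X_i^-$ so that it consists \emph{entirely} of debt edges or \emph{entirely} of free edges (with $\xsp(X_i^-)=1/2$), and $X_i^+$ must be the corresponding exit edges along a cycle decomposition of $\xsp$. This ``purity'' of $X_i^-$ is what lets you control debt entering $U_i$: it feeds both the backtracking fact (if $X_i^-$ is debt and $e\in X_i^+$ is free, some terminal inside $U_i$ is reachable via cheap edges) and the case analysis for the patching walk $P_i$. Without it the debt accounting falls apart. Second, the patching cannot be ``a walk in $\giaux$'' uniformly: you need two cases depending on whether $A_i$ is a free vertex (or the exit edge from $A_i$ in $\yspp$ is a debt edge) versus the mixed case, and the latter case must append a cheap path from a terminal in $U_i$ supplied by the backtracking fact. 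Third, the three-bucket charging is not quite enough: the paper needs a $\debt$ quantity for each component $\tilde G$, the observation that $\ysp$ fails to be a circulation only at the loose endpoints of the $A_i$-rerouting, and the fact that whenever $\debt$ is positive there is a walk $P_i\subseteq\tilde G$ that passes through a terminal and can absorb $w_1\cdot\debt$. Finally, the crossing argument is the inverse of what you wrote: $U_i$ being a \emph{sink} of $\giaux$ means there are \emph{no} $\giaux$-arcs from $U_i$ to $V_i\setminus U_i$, and one proves $|\delta_F^+(V_i)|\ge 1$ by contradiction, showing that a component of $F$ confined to $V_i$ would force a cheap/prepaid/postpaid $\giaux$-arc from $U_i$ to $V_i\setminus U_i$. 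These are exactly the places where the two-weight structure interacts non-trivially with the splitting, and they are not filled in by the proposal.
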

Together with Theorem\nobreakspace \ref {thm:LoctoGlo}, this implies our main result:
\begin{theorem}
For any graph with two edge weights, the integrality gap of its Held-Karp relaxation is at most $500$. Moreover, we can find an $901$-approximate tour in polynomial time.
\end{theorem}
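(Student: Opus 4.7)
The theorem is a direct consequence combining the two technical results already stated: Theorem \ref{thm:main} (a polynomial-time $100$-light algorithm for Local-Connectivity ATSP on two-weight graphs) and Theorem \ref{thm:LoctoGlo} (the reduction from ATSP to Local-Connectivity ATSP). So my plan is simply to chain these two results, and the actual work has been pushed to the proof of Theorem \ref{thm:main}.

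First I would invoke Theorem \ref{thm:main} to obtain, for the given two-weight instance $G=(V,E,w)$, a polynomial-time algorithm $\mathcal{A}$ which is $100$-light, together with a lower bound function $\lb:V\to\bR_+$ satisfying $\lb(V)\le \OPT$. Feeding $\mathcal{A}$ into the first (existential) statement of Theorem \ref{thm:LoctoGlo} with $\alpha = 100$, we conclude that there exists a tour of $G$ of value at most $5\alpha\,\OPT = 500\,\OPT$. Since $\OPT$ is the Held-Karp lower bound, this immediately yields the claimed integrality gap bound of $500$.

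For the constructive bound, I would apply the second statement of Theorem \ref{thm:LoctoGlo}, choosing $\varepsilon = 1/100$. This produces, in time polynomial in $|V|$, $1/\varepsilon$, and the running time of $\mathcal{A}$ (all of which are polynomial), a tour of value at most $(9+\varepsilon)\alpha\,\OPT = (9 + 1/100) \cdot 100 \cdot \OPT = 901\,\OPT$, which is precisely the claim.

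There is no real obstacle at this stage: both ingredients are black boxes, and the only choice to make is the value of $\varepsilon$ so that $(9+\varepsilon)\cdot 100$ lands on a nice integer. The entire substance of the theorem lies in establishing Theorem \ref{thm:main}, which is the content of Section \ref{sec:lcATSP} and relies crucially on the flow decomposition (Theorem \ref{thm:flow_thm}) and the split-graph construction sketched in Section \ref{sec:overview}.
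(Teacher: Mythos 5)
Your proposal is correct and is exactly the argument the paper gives: combine Theorem~\ref{thm:main} ($\alpha=100$) with the two parts of Theorem~\ref{thm:LoctoGlo}, obtaining $5\cdot 100 = 500$ for the integrality gap and, with $\varepsilon = 1/100$, $(9+\varepsilon)\cdot 100 = 901$ for the polynomial-time bound. Nothing more is needed at this level; the substance is indeed all in Theorem~\ref{thm:main}.
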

The factor $500$ comes from $5\cdot 100$. In Theorem\nobreakspace \ref {thm:LoctoGlo}, we
select $\varepsilon$ such that 
$(9+\varepsilon)\cdot 100 \leq 901$.
Our proof of Theorem\nobreakspace \ref {thm:main} proceeds as outlined in Section\nobreakspace \ref {sec:overview}. 
\iflncs In this extended abstract, we only describe the construction; the proof is given in the full version.
\else 
In Section\nobreakspace \ref {sec:splitgr}, we give an algorithm for calculating $\lb$
and define the \emph{split graph} which will be central for finding light
solutions. In Section\nobreakspace \ref {sec:solvlcATSP}, we then show how to use these concepts
to solve Local-Connectivity ATSP for any given partitioning of the vertices.
\fi

Recall that the edges are partitioned into the set $E_0$ of cheap edges and the
set $E_1$ of expensive edges.  Set $\xs$ to be an optimal solution to the
Held-Karp relaxation. We start by noting that the problem is easy if $\xs$
assigns very small total fractional value to expensive edges. In that case, we can
easily reduce the problem to the unweighted case which was solved
in~\cite{Svensson15}.   

\begin{lemma} \label{lem:expensive_less_than_one_is_easy}
There is a polynomial-time $6$-light algorithm for Local-Connectivity ATSP for graphs where $\xs(E_1) < 1$.
\end{lemma}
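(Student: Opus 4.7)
The plan is to reduce to the unweighted case, for which \cite{Svensson15} already supplies a $3$-light algorithm. The guiding intuition is that when $\xs(E_1) < 1$, the cheap subgraph $G_0 = (V, E_0)$ is strongly connected and its Held-Karp value is at most a constant factor larger than $\OPT$, so we can effectively forget that the two weights are different.

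I would first observe that $G_0$ is strongly connected: for any $\emptyset \neq S \subsetneq V$ we have
\[ \xs(\delta^+(S) \cap E_0) \;\geq\; 1 - \xs(E_1) \;>\; 0, \]
so at least one cheap edge crosses every cut. Next, I would construct a feasible fractional solution $x'$ to $\LP(G_0)$ of cost at most $2\OPT$ by \emph{rerouting} each expensive edge along a cheap path: for every $e = (u,v) \in E_1$ with $\xs_e > 0$, remove $\xs_e$ units of flow along $e$ and add $\xs_e$ units along a shortest $u$-$v$ path $P_e$ in $G_0$ (of at most $n-1$ edges). Such a detour preserves Eulerian balance, and any cut $S$ that $e$ crossed is still crossed by $P_e$, so every subtour elimination constraint survives. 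The cost bound uses two easy consequences of the LP constraints, namely $\OPT \geq w_0 \xs(E_0)$ and $\OPT \geq w_0 n$ (the latter because $\xs(E) = \sum_v \xs(\delta^+(v)) \geq n$); combined with $\xs(E_1) < 1$ they give $w_0 \cdot x'(E_0) \leq w_0 \xs(E_0) + w_0 (n-1) \xs(E_1) \leq 2\OPT$.

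Let $\xs^{(0)}$ denote an optimum of $\LP(G_0)$; then $w_0 \xs^{(0)}(E_0) \leq 2\OPT$. In the first stage of Local-Connectivity ATSP I would output the lower bound
\[ \lb(v) \;:=\; \tfrac{w_0}{2}\,\xs^{(0)}(\delta^+(v)), \]
which satisfies $\lb(V) = \tfrac{w_0}{2}\xs^{(0)}(E_0) \leq \OPT$ as required. In the second stage, given the partition $V_1 \cup \ldots \cup V_k$, I would invoke the $3$-light algorithm of \cite{Svensson15} for unit-weight graphs on $G_0$ with the natural lower bound $\xs^{(0)}(\delta^+(\cdot))$. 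It returns an Eulerian multisubset $F \subseteq E_0$ crossing every $V_i$ such that for every component $\tilde G$,
\[ |E(\tilde G)| \;\leq\; 3 \sum_{v \in V(\tilde G)} \xs^{(0)}(\delta^+(v)) \;=\; \tfrac{6}{w_0}\,\lb(\tilde G). \]
Since $F \subseteq E_0$ we have $w(\tilde G) = w_0 |E(\tilde G)| \leq 6 \lb(\tilde G)$, so the combined procedure is $6$-light.

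The main (and essentially the only) obstacle is the structural inequality $\LP(G_0) \leq 2\OPT$; after that, the algorithmic part is a direct black-box call to the unit-weight result of \cite{Svensson15}. The rerouting step is what makes this work: the detour cost is at most $w_0 (n-1) \xs(E_1) < w_0 n$, and the subtour elimination constraints applied to singletons already force $\OPT \geq w_0 n$, absorbing the overhead into a factor of $2$.
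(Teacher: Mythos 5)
Your proposal is correct and follows essentially the same route as the paper's own proof: establish strong connectivity of $(V,E_0)$, reroute each expensive edge along a cheap path of length at most $n-1$, bound the detour cost by $2\OPT$ using $\xs(E) \geq n$, and hand the resulting cheap-only Held-Karp solution to the $3$-light unweighted algorithm of \cite{Svensson15}. The only cosmetic difference is that you re-optimize $\LP(G_0)$ to get $\xs^{(0)}$ before defining $\lb$, whereas the paper feeds the rerouted $x'$ directly into the black box; both choices satisfy $\lb(V)\le\OPT$ and yield the same factor $6$.
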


\iflncs\else
\begin{proof}
If $\xs(E_1) = 0$, then just apply the standard $3$-light polynomial-time algorithm for unweighted graphs \cite{Svensson15}. So suppose that $0 < \xs(E_1) < 1$. Then clearly the graph $(V,E_0)$ is strongly connected, i.e. every pair of vertices is connected by a directed path of cheap edges (of length at most $n-1$). Thus each expensive edge $(u,v)$ can be replaced by such a $u$-$v$-path $P(u,v)$. Let us obtain a new circulation $x'$ from $\xs$ by replacing all expensive edges in this way, i.e.,
\[ x' = \xs|_{E_0} + \sum_{(u,v) \in E_1} \xs_{(u,v)} \cdot \one_{P(u,v)}. \]
To bound the cost of $x'$, note that $\xs(E_0) = \xs(E) - \xs(E_1) > n-1 > (n-1)\xs(E_1)$ and thus
\[ w(x') \le w_0 \cdot \xs(E_0) + (n-1) w_0 \cdot \xs(E_1) \le 2 w_0 \cdot \xs(E_0) \le 2 w(\xs). \]
By construction, $x'$ is a feasible solution for the Held-Karp relaxation and $\supp(x') \subseteq E_0$. Therefore we can use it in the standard $3$-light polynomial-time algorithm for the unweighted graph $(V,E_0)$. Together with the bound $w(x') \le 2w(\xs)$ this gives a $6$-light algorithm.
\end{proof}
\fi

For the rest of this section, we thus assume $\xs(E_1) \ge 1$. Our
objective is to define a function $\lb : V \to \bR_+$ such that $\lb(V) \le
\OPT = w(\xs)$ and then show how to, given a partition $V = V_1 \cup ...
\cup V_k$, find an Eulerian set of edges $F$ which crosses all $V_i$-cuts and
is $\cO(1)$-light with respect to the defined $\lb$ function.


\subsection{Calculating $\lb$ and constructing the split graph}
\label{sec:splitgr}
\iflncs \else
First, we use our flow decomposition technique to find a small set of
\emph{terminals} $T$ such that it is possible to route a certain flow $f$ from
endpoints of all expensive edges to $T$. Next, we use $f$ and $T$ to 
calculate the function $\lb$ and to construct a \emph{split graph} $\gsp$, where each
vertex of $G$ is split into two.\fi

\paragraph{Finding terminals $T$ and flow $f$.} 
\iflncs
For this, we use the following flow result.
\begin{theorem} \label{thm:flow_thm}
Let $D = (V\cup\{s\},E)$ be a directed graph, $c : E \to \bR_+$ -- a nonnegative capacity vector, and $s$ -- a source node with no incoming edges, i.e., $\delta^-(s) = \emptyset$.
Assume that for all $\emptyset \neq S\subseteq V$ we have
\begin{equation}
c(\delta^-(S))\ge \max\{1,c(\delta^+(S))\}. \label{cond:degree}
\end{equation}
Consider a set $T \subseteq V$ such that there exists a flow $f \le c$ of value $c(\delta^+(s))$ from the source $s$ to the sink set $T$, and $T$ is minimal subject to this property. Then $|T| \le 8c(\delta^+(s))$.
\end{theorem}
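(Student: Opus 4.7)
The plan is to pick a flow $f$ of value $c(\delta^+(s))$ from $s$ to $T$ that is \emph{extremal} in a useful sense, measure how much each terminal ``absorbs'' of this flow, and then argue via cut arguments that few terminals can absorb only a tiny amount. Concretely, for each $t \in T$ let $z(t) = f(\delta^-(t)) - f(\delta^+(t)) \ge 0$; by minimality of $T$ every $z(t)$ is strictly positive. I would choose $f$ so that the multiset $\{z(t) : t \in T\}$, sorted in non-increasing order, is lexicographically largest. Writing $z_1 \ge z_2 \ge \dots \ge z_k$, the total $\sum_i z_i$ equals $c(\delta^+(s))$ by flow conservation, so proving $k = \cO(\sum_i z_i)$ would finish the job.

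The natural threshold is to call $t_i$ \emph{large} if $z_i \ge \tfrac14$ and \emph{small} otherwise; let $\ell$ be the number of large terminals. The large terminals contribute at most $4 \sum_{i \le \ell} z_i$ to $k$ trivially (using $z_i \ge 1/4$), so the heart of the matter is to bound $k-\ell$, the number of small terminals. For this I would analyze residual cuts in $D_f$. Condition \eqref{cond:degree} gives two very clean facts that I would establish as lemmas: $\delta_f^-(X) + 2\sum_{t \in T \cap X} z(t) \ge \delta_f^+(X)$ for any $X$ (obtained from $c(\delta^-(X)) \ge c(\delta^+(X))$ and flow conservation at terminals in $X$); and if $T \cap X = \{t_i\}$ then $\delta_f^-(X) \ge 1 - z_i$ (obtained from $c(\delta^-(X)) \ge 1$).

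I would then use lexicographic maximality twice. First, there is no residual path in $D_f$ from a small terminal to a large one — otherwise we could augment along it and increase a large $z_i$ at the expense of a small one, strictly increasing the imbalance sequence. Thus some cut $A \subseteq V$ separates the large terminals from the small ones with $\delta_f^-(A) = 0$; the first lemma then yields $\delta_f^+(A) \le 2 \sum_{i \le \ell} z_i$. Second, for each small terminal $t_i$ one cannot in $D_f \setminus (A \cup \{s\})$ route $\tfrac14 - z_i$ units of flow from the other small terminals to $t_i$ — otherwise, by scaling the augmenting flow suitably, I could either lexicographically improve $f$ (raising $z_i$ to $\tfrac14$) or drive some $z_j$ to zero, contradicting minimality of $T$. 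Max-flow/min-cut then yields a set $B_i \subseteq V \setminus A$ with $B_i \cap T = \{t_i\}$ and residual in-flow from $V \setminus (A \cup \{s\} \cup B_i)$ of less than $\tfrac14 - z_i$; combined with the second lemma, this forces $B_i$ to receive substantial flow from $A$ in the residual graph.

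If the $B_i$ were pairwise disjoint, summing these lower bounds over the small terminals and comparing with the upper bound on $\delta_f^+(A)$ would immediately yield $k - \ell \le \cO\bigl(\sum_{i \le \ell} z_i\bigr)$, and combining with the trivial bound on $\ell$ gives the claimed $|T| \le 8 c(\delta^+(s))$. The main obstacle I anticipate is that the $B_i$ need not be disjoint. To cope, I would separate each $B_i$ into its ``private'' part $B_i \setminus B^\star$ and the overlap region $B^\star := \bigcup_{i \neq j} (B_i \cap B_j)$, apply the single-terminal residual bound to $B_i \setminus B^\star$ (still containing only $t_i$), and use the first lemma applied to $B^\star$ (which contains no terminals) to control the total flow across the overlap. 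Balancing these carefully will cost a small constant factor compared to the disjoint case, which is exactly why the final constant comes out to $8$ rather than something sharper.
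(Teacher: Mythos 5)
Your proposal matches the paper's proof essentially step for step: the lexicographically extremal choice of $f$, the $1/4$ threshold for large vs.\ small terminals, the two residual-cut inequalities derived from condition~\eqref{cond:degree}, the saturated cut $A$ isolating the large terminals, the cuts $B_i$ obtained from failed augmentations toward each small $t_i$, and the $B^\star$ overlap set to handle non-disjointness. This is the paper's argument.
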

\else
We use Theorem\nobreakspace \ref {thm:flow_thm} to obtain a small-enough set of terminals $T$ and
a flow $f$ which takes all flow on expensive edges to this set $T$. More
precisely, we have the following corollary of Theorem\nobreakspace \ref {thm:flow_thm}.
\fi
\begin{lemma}\label{lem:flow_app}
There exist a vertex set $T \subseteq V$ and a flow $f : E \to \bR_+$ from
source set $\{ \tail(e) : e \in E_1 \}$ to sink set $T$ of value $\xs(E_1)$
such that:
\iflncs
{\em(a)} $|T| \le 8 \xs(E_1)$; {\em(b)} $f \le \xs$ ; {\em(c)} $f$ saturates all expensive edges, i.e., $f(e) = \xs_e$ for all $e \in E_1$; {\em(d)} for each $t \in T$, $f(E_0 \cap \delta^+(t)) = 0$ and $f(\delta^-(t)) > 0$.
\else
\begin{itemize}\itemsep0mm
	\item $|T| \le 8 \xs(E_1)$,
	\item $f \le \xs$,
	\item $f$ saturates all expensive edges, i.e., $f(e) = \xs_e$ for all $e \in E_1$,
	\item for each $t \in T$, $f(E_0 \cap \delta^+(t)) = 0$ and $f(\delta^-(t)) > 0$.
\end{itemize}
\fi
Moreover, $T$ and $f$ can be computed in polynomial time.
\end{lemma}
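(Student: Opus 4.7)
The plan is to derive the lemma as an application of Theorem \ref{thm:flow_thm}. I would construct an auxiliary digraph $D$ from $G$ by adding a new source vertex $s$ with no incoming edges, redirecting every expensive edge $e=(u,v)\in E_1$ to become $(s,v)$, and keeping all cheap edges intact; each edge receives the capacity $\xs_e$ of the corresponding original edge. Then $c(\delta^+(s))=\xs(E_1)$. Verifying the hypothesis \eqref{cond:degree} is a matter of unfolding definitions: for $\emptyset\neq S\subseteq V$, a direct expansion of $c(\delta^-(S))$ and $c(\delta^+(S))$ in $D$, combined with the Held--Karp flow-conservation identity $\xs(\delta^-_G(S))=\xs(\delta^+_G(S))$, gives $c(\delta^-(S))-c(\delta^+(S))\ge 0$ (with slack contributed by expensive edges leaving or internal to $S$), and the subtour-elimination inequality $\xs(\delta^-_G(S))\ge 1$ (together with the standing assumption $\xs(E_1)\ge 1$, used when $S=V$) yields $c(\delta^-(S))\ge 1$.

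Next, I would pick $T\subseteq V$ to be a minimal set for which $D$ admits an $s$-to-$T$ flow of value $\xs(E_1)$. Such a set exists because, starting from $T_0=V$, the vector $\xs$ itself (with expensive-edge tails redirected to $s$) is a valid flow of that value, and a minimal $T$ can be computed in polynomial time by greedily removing vertices and re-checking feasibility via max-flow. Theorem \ref{thm:flow_thm} then gives $|T|\le 8\xs(E_1)$, establishing property (a). Any feasible flow $f_D$ of value $\xs(E_1)$ saturates every edge out of $s$, since total outflow equals total outgoing capacity. Translating $f_D$ back to $G$ by identifying each $(s,v)$-edge with the original expensive edge it replaced produces a function $f:E\to\bR_+$ with $f\le\xs$ and $f_e=\xs_e$ for all $e\in E_1$, yielding properties (b) and (c).

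The remaining work is to pick $f_D$ so that property (d) holds; this is where I expect the main obstacle. The idea is a \emph{truncation}: decompose any feasible $f_D$ into a sum of weighted $s$-to-$T$ paths and cut each path off at the first terminal it visits, assigning its entire weight to that terminal. The truncated flow has the same total value $\xs(E_1)$ and, by construction, never uses any outgoing edge of a terminal in $D$. Because the $D$-outgoing edges of $t\in T$ coincide exactly with $E_0\cap\delta^+_G(t)$ (no expensive edge leaves $t$ in $D$), translating back gives $f(E_0\cap\delta^+_G(t))=0$. The delicate point is to guarantee that each $t\in T$ still absorbs positive flow after truncation: if some terminal ended up with zero absorption, then $T\setminus\{t\}$ would admit a flow of value $\xs(E_1)$ -- namely the truncated one -- contradicting the minimality of $T$. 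Hence $f_D(\delta^-_D(t))>0$ for every $t\in T$, and this transfers directly to $f(\delta^-_G(t))>0$. All of the above -- constructing $D$, shrinking $T$, computing a feasible flow, and performing the path decomposition and truncation -- can be carried out in polynomial time using standard max-flow subroutines.
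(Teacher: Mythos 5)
Your proposal is correct and follows essentially the same route as the paper: the same auxiliary graph with expensive-edge tails redirected to a new source $s$, the same verification of condition \eqref{cond:degree} via flow conservation, subtour elimination, and $\xs(E_1)\ge 1$, the same application of Theorem \ref{thm:flow_thm} to a minimal $T$, and the same truncation of a path decomposition at the first terminal together with the minimality argument to get property (d). The only cosmetic difference is that the paper explicitly discards cycles in the path-cycle decomposition, which your truncation step implicitly handles as well.
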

\iflncs\else
\begin{proof}
We construct $G'$ to be $G$ with a new vertex $s$, where the tail of every expensive edge is redirected to be $s$. Formally, $V(G') = V \cup \{s\}$ and $E(G') = E_0 \cup \{ (s,\head(e)) : e \in E_1 \}$. The capacity vector $c$ is obtained from $\xs$ by just following this redirection, i.e., for any edge $e' \in E(G')$ we define $c(e') = \xs_{e}$, where $e \in E(G)$ is taken to be the preimage of $e'$ in $G$.

Clearly $c(\delta^+(s)) = \xs(E_1)$, and $s \in V(G')$ has no incoming edges. To see that condition~\eqref{cond:degree} of~Theorem\nobreakspace \ref {thm:flow_thm} is satisfied, recall that for every $\emptyset \ne S \subsetneq V(G)$ we have $\xs(\delta^+(S)) = \xs(\delta^-(S)) \ge 1$; redirecting the tail of some edges to $s$ can only reduce the outdegree or increase the indegree of $S$, i.e., $c(\delta^-(S)) \ge \xs(\delta^-(S)) = \xs(\delta^+(S)) \ge c(\delta^+(S))$. This gives condition~\eqref{cond:degree} for all sets $S \subsetneq V(G') - s$; for $S = V(G') - s$, note that $c(\delta^-(S)) = \xs(E_1) \ge 1 = \max\{1, 0\} = \max\{1, c(\delta^+(S))\}$ since we assumed that $\xs(E_1) \ge 1$.

From Theorem\nobreakspace \ref {thm:flow_thm} we obtain a vertex set $T$ with $|T| \le 8\xs(E_1)$ and a flow $f' : E(G') \to \bR_+$ from $s$ to $T$ of value $\xs(E_1)$ with $f' \le c$. We can assume $f'(\delta^+(t))=0$ for all $t\in T$: in a path-cycle decomposition of $f'$ we can remove all cycles and
terminate every path at the first terminal it reaches.
The flow $f$ is obtained by mapping $f'$ back to $G$, i.e., taking each $f(e)$ to be $f'(e')$, where $e'$ is the image of $e$. Note that $f'$ must saturate all outgoing edges of $s$, so $f$ saturates all expensive edges. For the last condition, the part $f(E_0 \cap \delta^+(t)) = 0$ is implied by $f'(\delta^+(t)) = 0$, and for the part $f(\delta^-(t)) > 0$, note that if $f(\delta^-(t)) = 0$, then we could have removed $t$ from $T$.

Note that such a set $T$ can be found in polynomial time: starting from $T=V$ (for which the required flow exists: consider $f' = c|_{\delta^+(s)}$, the restriction of $c$ to ${\delta^+(s)}$), we remove vertices from $T$ one by one until we obtain a minimal set $T$ such that there exists a flow of value $\xs(E_1)$ from $s$ to $T$. 
\end{proof}
\fi

\paragraph{Definition of $\lb$.} We set $\lb : V \rightarrow \bR_+$ to be a scaled-down variant of $\lbs: V \rightarrow \bR_+$ which is defined as follows:
\[
\lbs(v):=
\begin{cases}
w_0 \cdot \xs(\delta^-(v))                                   & \mbox{if } v \notin T,\\
w_0 \cdot \xs(\delta^-(v)) + w_1 \cdot \ceil{f(\delta^-(t))} & \mbox{if } v \in T.
\end{cases}
\]
The definition of $\lb$ is now simply $\lb(v) = \lbs(v)/10$. The scaling-down is done so as to satisfy $\lb(V) \leq \OPT$ (see Lemma\nobreakspace \ref {lem:sum_of_lbs_is_low}).
Clearly we have $\lbs(v) \ge w_0$ for all $v \in V$ and $\lbs(t) \ge w_1 + w_0 \ge w_1$ for terminals $t \in T$.

The intuition behind this setting of $\lbs$ is that we want to pay for each
expensive edge $e \in E_1$ in the terminal $t \in T$ which the flow $f$
``assigns'' to $e$. Indeed, in the split graph we will reroute flow (using $f$)
so as to ensure that any path which traverses $e$ must then visit such
a terminal $t$ to offset the cost of the expensive edge.
\iflncs \else As for the total cost
of $\lbs$, note that if we removed the rounding from its definition, then we
would get $\lbs(V) \le 2 \OPT$, since
\[ w_0 \cdot \sum_{v \in V} \xs(\delta^-(v)) + w_1 \cdot \sum_{v \in T} f(\delta^-(t)) = w_0 \cdot \xs(E_0) + w_0 \cdot \xs(E_1) + w_1 \cdot \xs(E_1) \le 2 w(\xs) \]
(here we used that $f$ is of value $\xs(E_1) = \sum_{t \in T} f(\delta^-(t))$).
So, similarly to the $3$-light algorithm for unweighted metrics
in~\cite{Svensson15}, the key is to argue that rounding does not increase this
cost too much. For this, we will take
advantage of the small size of $T$. Details are found in the proof of the
following lemma.
\fi
\begin{lemma} \label{lem:sum_of_lbs_is_low}
$\lbs(V) \le 10 \cdot \OPT.$
\end{lemma}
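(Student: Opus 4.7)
\medskip

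The plan is a direct calculation: expand $\lbs(V)$ by summing the per-vertex definitions, then bound the two ``problematic'' pieces (the extra $w_0 \cdot \xs(E_1)$ contribution and the rounded flow values at terminals) using the two key estimates $w_0 \le w_1$ and $|T| \le 8\xs(E_1)$ from Lemma~\ref{lem:flow_app}, together with $w(\xs) = \OPT$.

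First I would observe that summing the definition of $\lbs$ over $V$ gives
\[
\lbs(V) \;=\; w_0 \sum_{v \in V} \xs(\delta^-(v)) \;+\; w_1 \sum_{t \in T} \lceil f(\delta^-(t)) \rceil.
\]
The first sum is $w_0 \cdot \xs(E) = w_0 \xs(E_0) + w_0 \xs(E_1)$. For the second sum, I would use the trivial bound $\lceil x \rceil \le x + 1$ to get
\[
\sum_{t \in T} \lceil f(\delta^-(t)) \rceil \;\le\; \sum_{t \in T} f(\delta^-(t)) \;+\; |T| \;=\; \xs(E_1) + |T|,
\]
where the equality uses that $f$ is a flow of value $\xs(E_1)$ from $\{\tail(e) : e \in E_1\}$ to $T$.

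Combining these two computations yields
\[
\lbs(V) \;\le\; w_0 \xs(E_0) + w_0 \xs(E_1) + w_1 \xs(E_1) + w_1 |T|.
\]
Since $w_0 \le w_1$, the middle term satisfies $w_0 \xs(E_1) \le w_1 \xs(E_1) \le w(\xs) = \OPT$; together with $w_0 \xs(E_0) + w_1 \xs(E_1) = w(\xs) = \OPT$, the first three terms sum to at most $2 \OPT$. For the last term, the bound $|T| \le 8 \xs(E_1)$ from Lemma~\ref{lem:flow_app} gives $w_1 |T| \le 8 w_1 \xs(E_1) \le 8\OPT$. Adding these yields $\lbs(V) \le 10 \OPT$, as required.

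There is no real obstacle here; the proof is a routine accounting argument. The only mildly delicate point is that the rounding in the definition of $\lbs$ could a priori blow up the cost by a factor proportional to $|T|$, and this is precisely what the flow theorem (Theorem~\ref{thm:flow_thm}) controls via $|T| \le 8\xs(E_1)$. The scaling factor $1/10$ in passing from $\lbs$ to $\lb$ is exactly what is needed so that $\lb(V) \le \OPT$.
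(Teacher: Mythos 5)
Your proof is correct and is essentially the same routine accounting argument as in the paper: expand $\lbs(V)$, apply $\lceil x\rceil \le x+1$, use $\sum_{t\in T} f(\delta^-(t)) = \xs(E_1)$ and $|T|\le 8\xs(E_1)$, and bound the pieces by $w(\xs)=\OPT$ using $w_0\le w_1$. The grouping of terms differs cosmetically but the steps and constants are identical.
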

\iflncs
\else
\begin{proof}
  The bound follows from elementary calculations:
\begin{align*}
\lbs(V) &= w_0 \cdot \sum_{v \in V} \xs(\delta^-(v)) + w_1 \cdot \sum_{t \in T} \ceil{f(\delta^-(t))} \\
      &\le w_0 \cdot \sum_{v \in V}  \xs(\delta^-(v)) + w_1 \cdot \sum_{t \in T} \rb{f(\delta^-(t)) + 1} \\
      &\le  w_0 \cdot \xs(E) + w_1 \cdot \rb{\xs(E_1) + |T|} \\
      &\le  w(\xs) + w_1 \cdot 9 \xs(E_1) \\
      &\le 10 w(\xs)
\end{align*}
(recall that $|T| \le 8 \xs(E_1)$ by Lemma~\ref{lem:flow_app}).
\end{proof}
\fi

\paragraph{Construction of the split graph.} The next step is to reroute flow so as
to ensure that all expensive edges are ``paid for'' by the $\lb$ at terminals.
To this end, we define a new \emph{split graph} and a \emph{split circulation}
on \iflncs it. \else it (see also Fig.\nobreakspace \ref {fig:gspex} for an example). \fi
\begin{definition} \label{def:split}
The \emph{split graph} $\gsp$ is defined as follows. For every $v \in V$ we create two copies $v^0$ and $v^1$ in $V(\gsp)$. For every cheap edge $(u,v) \in E_0$:
\begin{itemize}
	\item if $\xs(u,v) - f(u,v) > 0$, create an edge $(u^0,v^0)$ in $E(\gsp)$ with $\xsp(u,v) = \xs(u,v) - f(u,v)$,
	\item if $f(u,v) > 0$, create an edge $(u^1,v^1)$ in $E(\gsp)$ with $\xsp(u,v) = f(u,v)$.
\end{itemize}
  For every expensive edge $(u,v) \in E_1$ we create one edge $(u^0,v^1)$ in $E(\gsp)$ with $\xsp(u,v) = f(u,v)$. Finally, for each $t \in T$ we create an edge $(t^1, t^0)$ in $E(\gsp)$ with $\xsp(t^1, t^0) = f(\delta^-(t))$.
  
  The new edges are weighted as follows: images of edges in $E_0$ have weight $w_0$, the images of edges in $E_1$ have weight $w_1$, and the new edges $(t^1,t^0)$ have weight $0$. Let us denote the new weight function by $\wsp$.
  
  Vertices $v^0$ will be called \emph{free vertices} and vertices $v^1$ will be called \emph{debt vertices}. Edges entering a free vertex will be called \emph{free edges}, and those entering a debt vertex will be called \emph{debt edges}.
\end{definition}
\iflncs
By construction we have that {\em(a)} $\xsp$ is a circulation on $\gsp$, {\em(b)} (the image of) every cut is still crossed by at least $1$ unit of $\xsp$, and {\em(c)} any path in $\gsp$ which begins with a debt edge and ends with a free edge must go through a terminal.
\else
A fundamental consequence of our construction is the following.
\begin{fact} \label{fact:debt-path}
Consider any path $P$ in $\gsp$ such that the first edge of $P$ is a debt edge and the last one is a free edge or an expensive edge. Then $P$ must go through a terminal, i.e., it must contain an edge $(t^1,t^0)$ for some $t \in T$. \qedmanual
\end{fact}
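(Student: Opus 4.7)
The plan is to trace through the structure of the split graph and observe that, apart from terminal edges, every outgoing edge of a debt vertex lands in another debt vertex, so a path that enters the debt side can only escape it via a terminal edge.

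First I would catalog the edges of $\gsp$ by their tail type. A debt vertex $v^1$ is the tail of only two kinds of edges: the cheap edges $(v^1,u^1)$ (arising from cheap $(v,u)\in E_0$ with $f(v,u)>0$, and hence entering another debt vertex $u^1$) and, in case $v\in T$, the single terminal edge $(v^1,v^0)$. In particular, no expensive edge leaves a debt vertex, since expensive edges are of the form $(u^0,v^1)$ and therefore always leave a free vertex. Symmetrically, I would observe that the only edges entering a free vertex $u^0$ are either cheap free-to-free edges $(x^0,u^0)$ or terminal edges $(u^1,u^0)$ when $u\in T$.

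Next, I would argue by contradiction: suppose $P$ contains no terminal edge $(t^1,t^0)$. Since the first edge of $P$ is a debt edge, after traversing it we are at some debt vertex. From the catalog above, every outgoing edge of a debt vertex that is not a terminal edge goes to another debt vertex, so by a trivial induction all intermediate vertices of $P$ after the first edge are debt vertices. Now inspect the last edge $e$ of $P$: if $e$ is a free edge, then its head is a free vertex, contradicting the fact that $P$ stays on the debt side; if $e$ is an expensive edge, then $e=(u^0,v^1)$ for some $u,v$, so the tail of $e$ (an intermediate vertex of $P$ after the first edge, assuming $P$ has length at least two) is the free vertex $u^0$, again a contradiction. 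The only remaining case is that $P$ consists of a single edge which is simultaneously the first (debt) edge and the last (free or expensive) edge; but a debt edge has a debt head, while a free edge has a free head and an expensive edge has a debt head whose tail is a free vertex—neither of these is a free edge with a debt head, so this case is impossible as well.

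I do not expect any significant obstacle here: the claim is essentially a parity/bipartition check on the two copies of each vertex, and the whole difficulty is just bookkeeping of which edge types leave and enter free versus debt vertices. The cleanest presentation is probably to state the two catalog facts about debt vertices as a single sentence and then apply them to the first edge (forcing us onto the debt side) and to the last edge (forcing us off of it).
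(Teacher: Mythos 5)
Your core argument is exactly the observation the paper has in mind (it marks this Fact with a bare \qedmanual precisely because it is ``by construction''): a debt vertex $v^1$ has as its only out-edges either cheap edges $(v^1,u^1)$ into another debt vertex, or the terminal edge $(v^1,v^0)$ when $v\in T$; so once a path is on the debt side it can only leave via a terminal edge, while a free last edge would need a free head and an expensive last edge would need a free tail. For paths of length at least two your proof is correct and complete.

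Your dismissal of the length-one case is, however, mistaken. You claim ``neither of these is a free edge with a debt head,'' but the question is not whether the single edge is a free edge with a debt head --- it is whether a single edge can simultaneously be a \emph{debt} edge (satisfying the ``first edge'' hypothesis) and a \emph{free or expensive} edge (satisfying the ``last edge'' hypothesis). Debt-and-free is indeed impossible. But debt-and-expensive is not: an expensive edge $(u^0,v^1)$ has head $v^1$, a debt vertex, so it \emph{is} a debt edge, and of course it is expensive. Thus a single expensive edge satisfies the hypothesis of the Fact as literally stated yet contains no terminal edge --- a degenerate counterexample. The right fix is not to argue the case away but to observe that the Fact is intended (and only ever applied) when the first debt edge and the last free-or-expensive edge are distinct: in \protect\MakeUppercase{F}act~\ref{fact:backtrack} they are the entering and exiting edges of $U_i^{\mathrm{sp}}$ (necessarily distinct), and in the proof of Lemma~\ref{lem:cross} the phrasing ``contains a debt edge and \emph{later} an expensive edge'' already presupposes distinctness. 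Stating that assumption explicitly, rather than ``proving'' the length-one case, would make your write-up correct.
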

We also have the following two properties by design.
\begin{fact} \label{fact:xsp_is_circulation}
The vector $\xsp : E(\gsp) \to \bR_+$ is a circulation in $\gsp$. 
\end{fact}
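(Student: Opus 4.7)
The goal is to verify flow conservation at every split vertex $v^0$ and $v^1$ by walking through Definition~\ref{def:split}, collecting the $\xsp$-values of the incoming and outgoing edges, and rewriting them in terms of $\xs$ and $f$ on the original graph $G$. The ingredients to combine are (i) $\xs$ being a circulation on $G$, (ii) the identity $f=\xs$ on $E_1$ given by Lemma~\ref{lem:flow_app}, and (iii) a flow-conservation identity for $f$ on $G$ which I will establish first.

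First I would show that for every $v \in V$,
\[ f(\delta^+(v) \cap E_0) \;=\; \begin{cases} f(\delta^-(v)) & \text{if } v \notin T, \\ 0 & \text{if } v \in T. \end{cases} \]
The second case is the last bullet of Lemma~\ref{lem:flow_app}. The first case is inherited from flow conservation of the auxiliary flow $f'$ at $v$ in the graph $G'$ built in the proof of Lemma~\ref{lem:flow_app}: in $G'$ each expensive edge $(v,w) \in E_1$ is redirected to leave the auxiliary source $s$ rather than $v$, so the $G'$-out-degree of $v$ sees only cheap flow, contributing $f(\delta^+(v) \cap E_0)$, while the $G'$-in-degree of $v$ contributes $f(\delta^-(v) \cap E_0) + f(\delta^-(v) \cap E_1) = f(\delta^-(v))$.

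With this in hand, reading off Definition~\ref{def:split}, the $\xsp$ in-total and out-total at $v^0$ are $\xs(\delta^-(v) \cap E_0) - f(\delta^-(v) \cap E_0) + [v \in T]\,f(\delta^-(v))$ and $\xs(\delta^+(v) \cap E_0) - f(\delta^+(v) \cap E_0) + f(\delta^+(v) \cap E_1)$ respectively. Their difference, after using $\xs(\delta^+(v)) = \xs(\delta^-(v))$ together with $f = \xs$ on $E_1$, collapses to $f(\delta^+(v) \cap E_0) + [v \in T]\,f(\delta^-(v)) - f(\delta^-(v))$, which vanishes in both the $v \in T$ and $v \notin T$ cases by the identity above. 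The analogous tally at $v^1$ gives in-total $f(\delta^-(v))$ and out-total $f(\delta^+(v) \cap E_0) + [v \in T]\,f(\delta^-(v))$, whose difference is the same expression and vanishes for the same reason.

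The only step requiring any thought is the flow-conservation identity for $f$, where unpacking the construction of $f$ via $G'$ is essential; the remainder is routine substitution. Nonnegativity of $\xsp$ is immediate from $0 \le f \le \xs$.
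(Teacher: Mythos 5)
Your proof is correct and takes essentially the same approach as the paper: both establish the flow-conservation identity for $f$ (inherited from $f'$ in the auxiliary graph $G'$, using that $f$ saturates $E_1$ and $f(E_0 \cap \delta^+(t)) = 0$ for $t \in T$) and then verify conservation at $v^0$ and $v^1$ by direct substitution using that $\xs$ is a circulation. Your presentation is just slightly more compact, merging the $v \in T$ and $v \notin T$ cases via an Iverson bracket rather than treating them separately.
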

\begin{proof}
 Intuitively, this is because our rerouting corresponds to taking every path
 $v_0, v_1, ..., v_k$ in a path-cycle decomposition of $f$ (where $(v_0, v_1)$
 is an expensive edge and $v_k \in T$ is a terminal) and, rather than placing it
 on the ``free level'' in $\gsp$ (i.e., mapping it to $v_0^0, v_1^0, ...,
 v_k^0$), instead routing it as follows: $v_0^0, v_1^1, v_2^1, ..., v_{k-1}^1,
 v_k^1, v_k^0$. We can think of the edge $(v_0^0, v_1^1)$ as incurring a debt,
 and of the edge $(v_k^1, v_k^0)$ as discharging this debt.
 
 Now we proceed to give a formal proof.
 For all $v \in V$ we will prove that $\xsp(\delta^+(v^0)) = \xsp(\delta^-(v^0))$ and $\xsp(\delta^+(v^1)) = \xsp(\delta^-(v^1))$.

Suppose $v = t \in T$. Then $\xsp(\delta^-(t^1)) = f(\delta^-(t))$. Also $\xsp(\delta^+(t^1)) = \xsp(t^1,t^0) = f(\delta^-(t))$ because of the property that $f(E_0 \cap \delta^+(t)) = 0$. For $t^0$ we have
\begin{align*}
\xsp(\delta^-(t^0)) &= f(\delta^-(t)) + (\xs - f)(E_0 \cap \delta^-(t)) \\ &= f(E_1 \cap \delta^-(t)) + \xs(E_0 \cap \delta^-(t)) \\&= \xs(\delta^-(t))
\end{align*}
and
\begin{align*}
\xsp(\delta^+(t^0)) &= (\xs - f)(E_0 \cap \delta^+(t)) + \xs(E_1 \cap \delta^+(t)) \\ &= \xs(E_0 \cap \delta^+(t)) + \xs(E_1 \cap \delta^+(t)) \\ &= \xs(\delta^+(t))
\end{align*}
where the second inequality again follows by $f(E_0 \cap \delta^+(t)) = 0$.
This implies that $\xsp(\delta^-(t^0)) = \xs(\delta^-(t)) = \xs(\delta^+(t)) = \xsp(\delta^+(t^0))$.

We turn to the case $v \notin T$. Note that since all expensive edges are saturated by $f$, and their tails are exactly the sources of $f$, we have
\begin{equation}
f(E_0 \cap \delta^+(v)) = f(E_0 \cap \delta^-(v)) + \xs(E_1 \cap \delta^-(v)). \label{eq:flow_conservation}
\end{equation}
Since the left hand side is equal to $\xsp(\delta^+(v^1))$ and the right hand side is $\xsp(\delta^-(v^1))$, we have proved our claim for $v^1$. For $v^0$, note that the incoming edges of $v_0$ are all cheap, and the incoming flow is
\[ \xsp(\delta^-(v^0)) = (\xs - f)(E_0 \cap \delta^-(v)). \]
As for outgoing flow, $v^0$ is also the tail of all expensive edges whose tail in $G$ was $v$, i.e.,
\begin{align*}
\xsp(\delta^+(v^0)) &= (\xs - f)(E_0 \cap \delta^+(v)) + \xs(E_1 \cap \delta^+(v)) \\
                    &= \xs(\delta^+(v)) - f(E_0 \cap \delta^+(v)) \\
                    &= \xs(\delta^-(v)) - f(E_0 \cap \delta^-(v)) - \xs(E_1 \cap \delta^-(v)) \\
                    &= \xs(E_0 \cap \delta^-(v)) - f(E_0 \cap \delta^-(v)) \\
                    &= (\xs - f)(E_0 \cap \delta^-(v)) \\
                    &= \xsp(\delta^-(v^0)),
\end{align*}
where the third equality is by~\eqref{eq:flow_conservation}.
\end{proof}


\begin{fact} \label{fact:at_least_1_unit}
For each proper subset $U \subset V$ we have
$\xsp(\delta^-(\{v^0,v^1 : v \in U\}))
 = \xsp(\delta^+(\{v^0,v^1: v \in U\})) \ge 1$. In other words, the image of
 every cut in $G$ is still crossed by at least one  unit  of $\xsp$.
\end{fact}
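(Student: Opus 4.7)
}

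The plan is to relate the $\xsp$-value of a cut in $\gsp$ directly to the $\xs$-value of the corresponding cut in $G$, and then invoke the subtour elimination constraint of $\LP(G)$. Fix a nonempty proper subset $U \subsetneq V$ and let $\hat U = \{v^0, v^1 : v \in U\}$ be its preimage in $\gsp$. First I would observe that the equality $\xsp(\delta^-(\hat U)) = \xsp(\delta^+(\hat U))$ is immediate from Fact \ref{fact:xsp_is_circulation}, since a circulation has equal in- and out-flow across every cut. So the whole task reduces to showing the lower bound $\xsp(\delta^-(\hat U)) \ge 1$.

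The key step is a one-to-one accounting of edges in $\delta^-(\hat U)$ against edges in $\delta^-(U)$. I would go through the four edge types introduced in Definition \ref{def:split}: for each $(u,v) \in E_0$ with $u \notin U$, $v \in U$, the split graph contains at most two images, namely $(u^0,v^0)$ of value $\xs(u,v)-f(u,v)$ and $(u^1,v^1)$ of value $f(u,v)$, both of which lie in $\delta^-(\hat U)$ and together contribute exactly $\xs(u,v)$; for each $(u,v) \in E_1$ with $u \notin U$, $v \in U$, the unique image $(u^0,v^1)$ lies in $\delta^-(\hat U)$ with value $\xs(u,v)$; and edges in $G$ that do not cross the $U$-cut produce only images $(u^0,v^0)$, $(u^1,v^1)$ or $(u^0,v^1)$ whose two endpoints lie on the same side of $\hat U$, so they contribute nothing to $\delta^-(\hat U)$.

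The only remaining edges in $\gsp$ are the terminal edges $(t^1, t^0)$ with $t \in T$. Since both endpoints of such an edge correspond to the same vertex $t$, they lie on the same side of the cut defined by $\hat U$ regardless of whether $t \in U$ or $t \notin U$, and hence these edges never cross $\delta^-(\hat U)$. Summing the contributions then gives exactly
\[
\xsp(\delta^-(\hat U)) \;=\; \sum_{(u,v) \in \delta^-(U)} \xs(u,v) \;=\; \xs(\delta^-(U)).
\]
Since $\emptyset \ne U \subsetneq V$, the Held-Karp subtour elimination constraint yields $\xs(\delta^-(U)) = \xs(\delta^+(U)) \ge 1$, completing the argument. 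I do not expect any real obstacle here: the only thing to be careful about is making sure that the terminal loop-like edges $(t^1,t^0)$ are correctly handled in the bookkeeping, which is immediate once one notes they never straddle the cut.
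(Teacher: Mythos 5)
Your proof is correct and follows the same underlying idea as the paper, which simply observes that contracting each pair $v^0,v^1$ recovers $G$ and $\xs$ and then invokes the subtour elimination constraint; you spell out the edge-by-edge bookkeeping behind that contraction. One small point worth making explicit: for the expensive-edge case you assert the image $(u^0,v^1)$ has value $\xs(u,v)$, but Definition \ref{def:split} assigns it value $f(u,v)$, so you should note that $f(e)=\xs_e$ for $e\in E_1$ by Lemma \ref{lem:flow_app} to justify the equality.
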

\begin{proof}
 This follows since $\xs(\delta^+(U)) = \xs(\delta^-(U)) \ge 1$, and contracting
 all pairs $v^0, v^1$ would yield back $G$ and $\xs$.
\end{proof}
\fi

\iflncs\else
\begin{figure}[t]
  \centering
  \begin{tikzpicture}
  \node at (1.25, 1) {\footnotesize $G$ and $f$};
  \foreach \val/\nam/\namm in {0/a/d,1/b/e,2/c/g} {
    \ifthenelse{\val=2}
    {
      \node[sgvertex,fill=black] (\nam) at (0,-1.5*\val) {\scriptsize \color{white} $\nam$};
      \node[sgvertex, fill=black] (\namm) at (2.5,-1.5*\val) {\scriptsize\color{white} $\namm$};
    }
    {
      \node[sgvertex] (\nam) at (0,-1.5*\val) {\scriptsize $\nam$};
      \node[sgvertex] (\namm) at (2.5,-1.5*\val) {\scriptsize $\namm$};
    }
  }
  \draw (a) edge[->] node[fill=white, inner sep=2pt] {\tiny $\nicefrac{1}{3}$} (b);
  \draw (b) edge[->] node[fill=white, inner sep=2pt] {\tiny $\nicefrac{2}{3}$} (c);
  \draw (c) edge[->, bend left=40] node[fill=white, inner sep=2pt] {\tiny $0$} (a);
  \draw (d) edge[->] node[fill=white, inner sep=2pt] {\tiny $\nicefrac{1}{3}$} (e);
  \draw (e) edge[->] node[fill=white, inner sep=2pt] {\tiny $\nicefrac{2}{3}$} (g);
  \draw (g) edge[->, bend right=40] node[fill=white, inner sep=2pt] {\tiny $0$} (d);
  
  \draw (a) edge[->, bend left=20, thick] node[fill=white, inner sep=2pt] {\tiny $\nicefrac{1}{3}$} (d);
  \draw (d) edge[->, bend left=20, thick] node[fill=white, inner sep=2pt] {\tiny $\nicefrac{1}{3}$} (a);
  
  \draw (b) edge[->, bend left=20, thick] node[fill=white, inner sep=2pt] {\tiny $\nicefrac{1}{3}$} (e);
  \draw (e) edge[->, bend left=20, thick] node[fill=white, inner sep=2pt] {\tiny $\nicefrac{1}{3}$} (b);
  
  \draw (c) edge[->, bend left=20, thick] node[fill=white, inner sep=2pt] {\tiny $\nicefrac{1}{3}$} (g);
  \draw (g) edge[->, bend left=20, thick] node[fill=white, inner sep=2pt] {\tiny $\nicefrac{1}{3}$} (c);

  \begin{scope}[xshift=6cm]
  \node at (1.9, 1) {\footnotesize $\gsp$ and $\xsp$};
  \foreach \val/\nam/\namm in {0/a/d,1/b/e,2/c/g} {
    \ifthenelse{\val=2}
    {
      \node[sgvertex, fill=black] (\nam0) at (-0.15,-1.5*\val+0.3) {\color{white} \tiny $\nam^0$};
      \node[sgvertex,rectangle, fill=black] (\nam1) at (0.75,-1.5*\val-0.3) {\color{white} \tiny $\nam^1$};
      \node[sgvertex,rectangle, fill=black] (\namm1) at (3,-1.5*\val-0.3) {\color{white}\tiny $\namm^{1}$};
      \node[sgvertex, fill=black] (\namm0) at (3.9,-1.5*\val+0.3) {\color{white}\tiny $\namm^{0}$};
    }
    {
      \node[sgvertex] (\nam0) at (-0.15,-1.5*\val+0.3) {\tiny $\nam^0$};
      \node[sgvertex,rectangle] (\nam1) at (0.75,-1.5*\val-0.3) {\tiny $\nam^1$};
      \node[sgvertex,rectangle] (\namm1) at (3,-1.5*\val-0.3) {\tiny $\namm^{1}$};
      \node[sgvertex] (\namm0) at (3.9,-1.5*\val+0.3) {\tiny $\namm^{0}$};
    }
  }
  \draw (a0) edge[->, bend left=20, thick] node[fill=white, inner sep=2pt, pos=0.81] {\tiny $\nicefrac{1}{3}$} (d1);
  \draw (b0) edge[->, bend left=20, thick]node[fill=white, inner sep=2pt, pos=0.81] {\tiny $\nicefrac{1}{3}$} (e1);
  \draw (c0) edge[->, bend left=20, thick]node[fill=white, inner sep=2pt, pos=0.81] {\tiny $\nicefrac{1}{3}$} (g1);
  \draw (d0) edge[->, bend right=20, thick]node[fill=white, inner sep=2pt, pos=0.81] {\tiny $\nicefrac{1}{3}$} (a1);
  \draw (e0) edge[->, bend right=20, thick]node[fill=white, inner sep=2pt, pos=0.81] {\tiny $\nicefrac{1}{3}$} (b1);
  \draw (g0) edge[->, bend right=20, thick]node[fill=white, inner sep=2pt, pos=0.81] {\tiny $\nicefrac{1}{3}$} (c1);
  \draw (a0) edge[->]node[fill=white, inner sep=2pt] {\tiny $\nicefrac{1}{3}$} (b0);
  \draw (d0) edge[->]node[fill=white, inner sep=2pt] {\tiny $\nicefrac{1}{3}$} (e0);
  \draw (a1) edge[->]node[fill=white, inner sep=2pt, pos=0.25] {\tiny $\nicefrac{1}{3}$} (b1);
  \draw (b1) edge[->]node[fill=white, inner sep=2pt, pos=0.25] {\tiny $\nicefrac{2}{3}$} (c1);
  \draw (d1) edge[->]node[fill=white, inner sep=2pt, pos=0.25] {\tiny $\nicefrac{1}{3}$} (e1);
  \draw (e1) edge[->]node[fill=white, inner sep=2pt, pos=0.25] {\tiny $\nicefrac{2}{3}$} (g1);
  \draw (c0) edge[->, bend left=40]node[fill=white, inner sep=2pt, pos=0.4] {\tiny $\nicefrac{2}{3}$} (a0);
  \draw (g0) edge[->, bend right=40]node[fill=white, inner sep=2pt, pos=0.4] {\tiny $\nicefrac{2}{3}$} (d0);
  \draw (c1) edge[->, bend left=45]node[fill=white, inner sep=2pt, pos=0.4] {\tiny ${1}$} (c0);
  \draw (g1) edge[->, bend right=45]node[fill=white, inner sep=2pt, pos=0.4] {\tiny ${1}$} (g0);
  \end{scope}
\end{tikzpicture}
  \caption{An example of the construction of $\gsp$ and $\xsp$ from $G, \xs$ and $f$. Here $\xs$ is $1/3$ for the expensive edges (depicted as thick) and $2/3$ for the remaining (cheap) edges; the set $T$ of terminals of the flow $f$ is depicted in black.}
  \label{fig:gspex}
\end{figure}
\fi

\subsection{Solving Local-Connectivity ATSP}
\label{sec:solvlcATSP}
Now our algorithm is given a partition $V = V_1 \cup ... \cup V_k$ of
the original vertex set. The
objective is to output a set of edges $F$ which crosses all $V_i$-cuts and is
$\cO(1)$-light with respect to our $\lb$ function.

We are aiming for a similar construction as in the unit-weight case:
based on the split graph $\gsp$, we construct an integer circulation
problem with an auxiliary vertex $A_i$ representing a certain subset
$U_i\subseteq V_i$ for every $1\le i\le k$. We then map its solution
back to the original graph and patch up the loose endpoints inside
every $U_i$ by a path. However, we have to account for the following
difficulties: {\em (i)} an edge leaving $U_i$ should also leave $V_i$;
{\em (ii)} debt should not disappear inside $U_i$: if the edge
entering it carries debt but the edge leaving does not, we must make
sure this difference can be charged to a terminal in $U_i$; {\em
  (iii)} the path used inside $U_i$ must pay for all expensive edges
it uses. 

All three issues can be appropriately tackled by defining an \emph{auxiliary graph} inside $V_i$. Edges of the auxiliary graph represent paths containing one expensive edge and one terminal (which can pay for themselves); however, these paths may not map to paths in the split graph. We select the subset $U_i\subseteq V_i$ as a sink component in the auxiliary graph. 
 For convenience,  Figs.\nobreakspace \ref {tab:flows} and\nobreakspace  \ref {fig:overview_lcatsp} give an overview of the different steps, graphs and flows used by our algorithm.
 
\iflncs\else
\begin{figure}[t]
	\centering
	\begin{tabular}{lllll}
		name & graph & circulation & integral & obtained from the previous by \\
		\hline
		$\xs$ & $G$ & yes & no & solving LP (Section\nobreakspace \ref {sec:prelim}) \\
		$\xsp$ & $\gsp$ & yes & no & splitting (Definition\nobreakspace \ref {def:split}) \\
		$\xspp$ & $\gsp'$ & yes & no & redirecting edges to $A_i$ \\
		$\yspp$ & $\gsp'$ & yes & yes & rounding to integrality (Lemma\nobreakspace \ref {lem:exists_integral_y}) \\
		$\ysp$ & $\gsp$ & no & yes & redirecting edges back from $A_i$ \\
		$y$ & $G$ & no & yes & mapping back to $G$ \\
		$F$ & $G$ & yes & yes & adding walks $P_i$
	\end{tabular}
	\caption{This table summarizes the various circulations and pseudo-flows that appear in our algorithm, in order.}
	\label{tab:flows}
\end{figure}
\fi

\paragraph{Construction of auxiliary graphs and modification of split graph.}
Our first step is to construct an \emph{auxiliary graph} for each component
$V_i$. The strong-connectivity structure of this graph will guide our algorithm.
\begin{definition} \label{def:auxiliary_graph}
The \emph{auxiliary graph} $\giaux$ is a graph with vertex set $V_i$ and the following edge set: for $u, v \in V_i$, $(u,v) \in E(\giaux)$ if any of the following three conditions is satisfied:
\begin{itemize}\itemsep0mm
	\item there is a cheap edge $(u,v) \in E_0 \cap G[V_i]$ inside $V_i$, or
	\item there is a $u$-$v$-path in $G[V_i]$ whose first edge is expensive and all other edges are cheap, and $v \in T$ is a terminal -- we then call the edge $(u,v) \in E(\giaux)$ a \emph{postpaid edge} -- or
	\item there is a $u$-$v$-path in $G[V_i]$ whose last edge is expensive and all other edges are cheap, and $u \in T$ is a terminal -- we then call the edge $(u,v) \in E(\giaux)$ a \emph{prepaid edge}.
\end{itemize}
Define the \emph{preimage} of such an edge $(u,v) \in E(\giaux)$ to be a \emph{shortest} path inside $V_i$ as above (in the first case, a single edge).
\end{definition}

Now, for each $i$  consider a decomposition of $\giaux$ into strongly connected components.\footnote{Note that we decompose the vertex set $V_i$, but with respect to the edge set $E(\giaux)$, not $E(G[V_i])$.}  Let $U_i \subseteq V_i$ be the vertex set of a sink component in this decomposition. That is, there is no edge from $U_i$ to $V_i \setminus U_i$ in the auxiliary graph $\giaux$.
Note that $\giaux$ is constructed based only on the original graph $G$ and not the split graph $\gsp$. However, we will solve Local-Connectivity ATSP by solving an integral circulation problem on $\gsp'$: a modification of the split graph $\gsp$, described as follows.

For each $i$, define $\Uisp = \{ v^0, v^1 : v \in U_i \} \subseteq V(\gsp)$ to
be the set of vertices in the split graph corresponding to $U_i$. (Note that
$\Uisp$ may not be strongly connected in $\gsp$.) We are going to reroute part
of the $\xsp$ flow going in and out of $\Uisp$ to a new auxiliary vertex $A_i$.
While the $3$-light algorithm for unit-weight graphs rerouted flow from all boundary edges of a component $U_i$ (see Section\nobreakspace \ref {sec:overview}), here we will be more careful and choose only a subset of boundary edges of $\Uisp$ to be rerouted.

To this end, select  a subset of edges $X_i^- \subseteq \delta^-(\Uisp)$  with
$\xsp(X_i^-)= 1/2$ such that either all edges in $X_i^-$ are debt edges, or all
are free edges. \iflncs\else This is possible since $\xsp(\Uisp)\ge 1$ by
\protect \MakeUppercase {F}act\nobreakspace \ref {fact:at_least_1_unit}.\footnote{To obtain exactly $1/2$, we might need to
break an edge up into two copies, dividing its $\xsp$-value between them appropriately, and include one copy in $X_i^-$ but not the other; we omit this for simplicity of notation, and
assume there is such an edge set with exactly $\xsp(X_i^-)= 1/2$.}\fi

We define the set of outgoing edges $X_i^+ \subseteq \delta^+(\Uisp)$ to be,
intuitively, the edges over which the flow that entered $\Uisp$ by $X_i^-$
exits $\Uisp$.
That is, consider an arbitrary cycle decomposition of the circulation $\xsp$,
and look at the set of cycles containing the edges in $X_i^-$. We define
$X_i^+$ as the set of edges on these cycles that first leave $\Uisp$ after
entering $\Uisp$ on an edge in $X_i^-$; clearly,
$\xsp(X_i^+)=1/2$.\footnote{Again, we might need to split some edges into two
copies.}

Let $g_i$ denote the flow on these cycles connecting the heads of edges in
$X_i^-$ and the tails of edges in $X_i^+$.
We will use the following claim later in the construction.

\begin{fact} \label{fact:backtrack}
Assume all edges in $X_i^-$ are debt edges but $e \in X_i^+$ is a free edge or an expensive edge. Then there exists a path in $\gsp[U_i]$ between a vertex $t^0$ (for some terminal $t \in T$) and the tail of $e$, made up of only cheap edges.
\end{fact}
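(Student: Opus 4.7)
The plan is to trace a cycle from the decomposition defining $g_i$, locate a terminal inside $\Uisp$ via Fact~\ref{fact:debt-path}, and show that the portion of the cycle after the last such terminal uses only cheap edges.

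Concretely, pick a cycle $C$ in the cycle decomposition underlying $g_i$ that contains some $e^- \in X_i^-$ and the given $e \in X_i^+$. By the very definition of $X_i^+$ (the first boundary edge on which $C$ leaves $\Uisp$ after entering via $X_i^-$), the portion $P$ of $C$ strictly between $e^-$ and $e$ stays inside $\Uisp$. Now concatenate $e^- \cdot P \cdot e$: this walk begins with a debt edge (since $e^- \in X_i^-$ and all edges in $X_i^-$ are debt edges by hypothesis) and ends with a free or expensive edge (our assumption on $e$), so Fact~\ref{fact:debt-path} forces it to contain a terminal edge $(t^1, t^0)$. This terminal edge is neither $e^-$ nor $e$: a terminal edge has both endpoints $t^0, t^1$ either in $\Uisp$ or both out of it, so it cannot cross the boundary $\partial \Uisp$. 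Hence $(t^1, t^0)$ sits inside $P$, and $t \in U_i$.

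Now take $(t^1, t^0)$ to be the \emph{last} terminal edge encountered along $P$, and let $P'$ denote the suffix of $P$ going from $t^0$ to the tail of $e$. The crux is to show that $P'$ uses only cheap edges. Suppose for contradiction that $P'$ contained an expensive edge $(u^0, w^1)$. Then the portion of $P'$ starting at $w^1$, followed by $e$, begins with a debt edge (every edge out of a debt vertex $w^1$ is a debt edge) and ends with $e$, which is free or expensive. Applying Fact~\ref{fact:debt-path} once more produces yet another terminal edge on this portion, occurring strictly after $(t^1, t^0)$ along $P$ — contradicting the maximality of our choice. Therefore $P'$ is a cheap-only walk from $t^0$ to the tail of $e$, entirely contained in $\gsp[\Uisp]$ (which is what $\gsp[U_i]$ denotes), as required.

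The only non-mechanical step is picking the right anchor terminal: choosing the \emph{last} $(t^1, t^0)$ on $P$ is exactly what lets the ``debt created $\Rightarrow$ later terminal'' consequence of Fact~\ref{fact:debt-path} be turned into a contradiction. Everything else is bookkeeping inherent to the split-graph construction, in particular the observations that debt edges enter debt vertices and that terminal edges respect the partition of $V(\gsp)$ induced by $\Uisp$.
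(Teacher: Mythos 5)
Your proof is correct and follows essentially the same route as the paper's: trace the cycle fragment carrying the flow $g_i$ from $X_i^-$ into $\Uisp$ and out through $e\in X_i^+$, invoke Fact~\ref{fact:debt-path} to locate a terminal edge $(t^1,t^0)$ on this fragment, and take the \emph{last} such edge. The paper then simply asserts that all edges after the last terminal are free and cheap; you instead establish this by a second, contradiction-style application of Fact~\ref{fact:debt-path}, which is a fine way to spell out the omitted step.

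One small imprecision: your parenthetical claim ``every edge out of a debt vertex $w^1$ is a debt edge'' is false when $w\in T$, since $(t^1,t^0)$ is a free edge. This does not break your argument (if $w\in T$ you would immediately have a terminal edge after $(t^1,t^0)$, again contradicting maximality), but it is cleaner to simply include the expensive edge $(u^0,w^1)$ itself at the start of the sub-walk you feed into Fact~\ref{fact:debt-path}: it is a debt edge by definition, so the hypothesis of the fact holds without any claim about outgoing edges of $w^1$, and the resulting terminal edge necessarily lies strictly after $(t^1,t^0)$.
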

\iflncs\else
\begin{proof}
Consider the cycle in the cycle decomposition that contains $e$; it enters $U_i$ on a debt edge. Using \protect \MakeUppercase {F}act\nobreakspace \ref {fact:debt-path}, this cycle fragment must contain an edge of the form $(t^1,t^0)$; pick the last such edge. All edges that follow are free and cheap.
\end{proof}\fi

We now transform $\gsp$ into a new graph $\gsp'$ and $\xsp$ into new circulation $\xspp$ as follows. For every set $V_i$ in the partition we introduce a new auxiliary vertex $A_i$
and redirect all edges in $X_i^-$ to point to $A_i$ and those in $X_i^+$ to point from $A_i$. We further subtract the flow $g_i$ inside $\Uisp$; hence the resulting vector $\xspp$ will be a circulation, with $\xspp(\delta^-(A_i))=1/2$.
If $X_i^-$ is a set of free edges, then we will say that $A_i$ is a free vertex, otherwise we say that it is a debt vertex.

\iflncs\else
\begin{figure}[t]
  \centering
    \begin{tikzpicture}
    \node at (1.5, 1.5) {\scriptsize $G[V_i]$};
    \node[ssgvertex, fill=black] (a) at (0,0) {};
    \node[ssgvertex] (b) at (0.9,-0.8) {};
    \node[ssgvertex] (c) at (2.1,-0.8) {};
    \node[ssgvertex] (d) at (3,0) {};
    \node[ssgvertex] (e) at (2.1,0.8) {};
    \node[ssgvertex] (f) at (0.9,0.8) {};

    \draw (a) edge[->, bend right] (b);
    \draw (a) edge[<-, bend left] (b);
    \draw (b) edge[->, thick] (c);
    \draw (c) edge[->, bend right] (d);
    \draw (d) edge[->, thick, bend right] (e);
    \draw (e) edge[->] (f);
    \draw (f) edge[->, bend right] (a);

    \begin{scope}[xshift=5.5cm]
      \node at (1.5, 1.5) {\scriptsize $\giaux$};
      \draw[dashed] (-0.4, -1.1) rectangle (3.3, 0.5);
      \node at (3.04, -0.85) {\scriptsize $U_i$};
      \node[ssgvertex, fill=black] (a) at (0,0) {};
      \node[ssgvertex] (b) at (0.9,-0.8) {};
      \node[ssgvertex] (c) at (2.1,-0.8) {};
      \node[ssgvertex] (d) at (3,0) {};
      \node[ssgvertex] (e) at (2.1,0.8) {};
      \node[ssgvertex] (f) at (0.9,0.8) {};

      \draw (a) edge[->, bend right] (b);
      \draw (a) edge[<-, bend left] (b);
      \draw (a) edge[->, bend left=20, snake it]  (c);
      \draw (c) edge[->, bend right] (d);
      \draw (d) edge[->,  snake it, bend right=20] (a);
      \draw (e) edge[->] (f);
      \draw (f) edge[->, bend right] (a);
    \end{scope}
    \begin{scope}[xshift=0cm,yshift=-4.5cm]
      \node at (1.5, 1.5) {\scriptsize $\gsp$ and $\xsp$ on $V_i$};
      \node[ssgvertex, fill=black] (a0) at (0,-0.3) {};
      \node[ssgvertex, fill=black, rectangle] (a1) at (0,0.3) {};
      \node[ssgvertex] (b) at (0.9,-0.8) {};
      \node[ssgvertex, rectangle] at (0.9,-0.5) {};
      \node[ssgvertex,rectangle] (c1) at (2.1,-0.5) {};
      \node[ssgvertex] (c) at (2.1,-0.8) {};
      \node[ssgvertex] (d) at (3,0) {};
      \node[ssgvertex,rectangle]  at (2.7,0) {};
      \node[ssgvertex,rectangle] (e) at (2.1,0.8) {};
      \node[ssgvertex]  at (2.1,1.1) {};
      \node[ssgvertex, rectangle] (f) at (0.9,0.8) {};
      \node[ssgvertex]  at (0.9,1.1) {};

      \draw (a0) edge[->, bend right=20]node[fill=white, inner sep=1pt, pos=0.5] {\tiny $.5$} (b);
      \draw (a0) edge[<-, bend left=20]node[fill=white, inner sep=1pt, pos=0.5] {\tiny $.5$} (b);
      \draw (b) edge[->, thick]node[fill=white, inner sep=1pt, pos=0.4] {\tiny $.5$} (c1);
      \draw (c) edge[->, bend right]node[fill=white, inner sep=1pt, pos=0.5] {\tiny $1$} (d);
      \draw (d) edge[->, thick, bend right]node[fill=white, inner sep=1pt, pos=0.5] {\tiny $1$} (e);
      \draw (e) edge[->]node[fill=white, inner sep=1pt, pos=0.5] {\tiny $1$} (f);
      \draw (f) edge[->, bend right]node[fill=white, inner sep=1pt, pos=0.5] {\tiny $1$} (a1);
      \draw (a1) edge[->]node[left,  inner sep=1pt, pos=0.4] {\tiny $1$} (a0);

      \draw (a0) edge[->, dashed]node[fill=white, inner sep=1pt, pos=0.6] {\tiny $.5$} (-0.2, -1.6);
      \draw (a0) edge[->, dashed]node[fill=white,inner sep=2pt, pos=0.4] {\tiny $.5$} (-0.7, -1.6);
      \draw (b) edge[<-, dashed]node[fill=white, inner sep=1pt, pos=0.5] {\tiny $.5$} (0.8, -1.6);
      \draw (c1) edge[->, dashed]node[fill=white, inner sep=1pt, pos=0.5] {\tiny $.5$} (3, -1.4);
      \draw (c) edge[<-, dashed]node[fill=white, inner sep=1pt, pos=0.5] {\tiny $.5$} (1.8, -1.6);
      \draw (c) edge[<-, dashed]node[fill=white, inner sep=1pt, pos=0.5] {\tiny $.5$} (2.4, -1.6);
      \node at (1.8, -1.8) {\tiny $X_i^-$};
      \node at (-0.7, -1.8) {\tiny $X_i^+$};
    \end{scope}
    \begin{scope}[xshift=5.5cm, yshift=-4.5cm]
      \node at (1.2, 1.5) {\scriptsize $\gsp'$ and $\xspp$ on $V_i$};
      \node[ssgvertex, fill=black] (a0) at (0,-0.3) {};
      \node[ssgvertex, fill=black, rectangle] (a1) at (0,0.3) {};
      \node[ssgvertex] (b) at (0.9,-0.8) {};
      \node[ssgvertex, rectangle] at (0.9,-0.5) {};
      \node[ssgvertex,rectangle] (c1) at (2.1,-0.5) {};
      \node[ssgvertex] (c) at (2.1,-0.8) {};
      \node[ssgvertex] (d) at (3,0) {};
      \node[ssgvertex,rectangle]  at (2.7,0) {};
      \node[ssgvertex,rectangle] (e) at (2.1,0.8) {};
      \node[ssgvertex]  at (2.1,1.1) {};
      \node[ssgvertex, rectangle] (f) at (0.9,0.8) {};
      \node[ssgvertex]  at (0.9,1.1) {};

      \draw (a0) edge[->, bend right=20]node[fill=white, inner sep=1pt, pos=0.5] {\tiny $.5$} (b);
      \draw (a0) edge[<-, bend left=20]node[fill=white, inner sep=1pt, pos=0.5] {\tiny $.5$} (b);
      \draw (b) edge[->, thick]node[fill=white, inner sep=1pt, pos=0.4] {\tiny $.5$} (c1);
      \draw (c) edge[->, bend right]node[fill=white, inner sep=1pt, pos=0.5] {\tiny $.5$} (d);
      \draw (d) edge[->, thick, bend right]node[fill=white, inner sep=1pt, pos=0.5] {\tiny $.5$} (e);
      \draw (e) edge[->]node[fill=white, inner sep=1pt, pos=0.5] {\tiny $.5$} (f);
      \draw (f) edge[->, bend right]node[fill=white, inner sep=1pt, pos=0.5] {\tiny $.5$} (a1);
      \draw (a1) edge[->]node[left,  inner sep=1pt, pos=0.4] {\tiny $.5$} (a0);

      \draw (a0) edge[->, dashed]node[fill=white, inner sep=1pt, pos=0.6] {\tiny $.5$} (-0.2, -1.6);
      \draw (b) edge[<-, dashed]node[fill=white, inner sep=1pt, pos=0.5] {\tiny $.5$} (0.8, -1.6);
      \draw (c1) edge[->, dashed]node[fill=white, inner sep=1pt, pos=0.5] {\tiny $.5$} (3, -1.4);
      \draw (c) edge[<-, dashed]node[fill=white, inner sep=1pt, pos=0.5] {\tiny $.5$} (2.4, -1.6);
      \node at (-0.7, -1.35) {\tiny $X_i^-$};
      \node at (-0.7, 1.05) {\tiny $X_i^+$};

      \node[ssgvertex] (Ai) at (-0.7, 0) {};
      \node at (-1, 0) {\tiny $A_i$};
      \draw (Ai) edge[->, dashed]node[fill=white, inner sep=1pt, pos=0.5] {\tiny $.5$} (-0.7, 0.85);
      \draw (Ai) edge[<-, dashed]node[fill=white, inner sep=1pt, pos=0.5] {\tiny $.5$} (-0.7, -1.2);
    \end{scope}
    \begin{scope}[yshift=-9cm,xshift=0cm]
      \node at (1.2, 1.5) {\scriptsize $\gsp'$ and $\yspp$ on $V_i$};
      \node[ssgvertex, fill=black] (a0) at (0,-0.3) {};
      \node[ssgvertex, fill=black, rectangle] (a1) at (0,0.3) {};
      \node[ssgvertex] (b) at (0.9,-0.8) {};
      \node[ssgvertex, rectangle] at (0.9,-0.5) {};
      \node[ssgvertex,rectangle] (c1) at (2.1,-0.5) {};
      \node[ssgvertex] (c) at (2.1,-0.8) {};
      \node[ssgvertex] (d) at (3,0) {};
      \node[ssgvertex,rectangle]  at (2.7,0) {};
      \node[ssgvertex,rectangle] (e) at (2.1,0.8) {};
      \node[ssgvertex]  at (2.1,1.1) {};
      \node[ssgvertex, rectangle] (f) at (0.9,0.8) {};
      \node[ssgvertex]  at (0.9,1.1) {};

      \draw (a0) edge[->, bend right=20]node[fill=white, inner sep=1pt, pos=0.5] {\tiny $0$} (b);
      \draw (a0) edge[<-, bend left=20]node[fill=white, inner sep=1pt, pos=0.5] {\tiny $0$} (b);
      \draw (b) edge[->, thick]node[fill=white, inner sep=1pt, pos=0.4] {\tiny $1$} (c1);
      \draw (c) edge[->, bend right]node[fill=white, inner sep=1pt, pos=0.5] {\tiny $0$} (d);
      \draw (d) edge[->, thick, bend right]node[fill=white, inner sep=1pt, pos=0.5] {\tiny $0$} (e);
      \draw (e) edge[->]node[fill=white, inner sep=1pt, pos=0.5] {\tiny $0$} (f);
      \draw (f) edge[->, bend right]node[fill=white, inner sep=1pt, pos=0.5] {\tiny $0$} (a1);
      \draw (a1) edge[->]node[left,  inner sep=1pt, pos=0.4] {\tiny $0$} (a0);

      \draw (a0) edge[->, dashed]node[fill=white, inner sep=1pt, pos=0.6] {\tiny $0$} (-0.2, -1.6);
      \draw (b) edge[<-, dashed]node[fill=white, inner sep=1pt, pos=0.5] {\tiny $1$} (0.8, -1.6);
      \draw (c1) edge[->, dashed]node[fill=white, inner sep=1pt, pos=0.5] {\tiny $1$} (3, -1.4);
      \draw (c) edge[<-, dashed]node[fill=white, inner sep=1pt, pos=0.5] {\tiny $0$} (2.4, -1.6);
      \node at (-0.7, -1.35) {\tiny $X_i^-$};
      \node at (-0.7, 1.05) {\tiny $X_i^+$};

      \node[ssgvertex] (Ai) at (-0.7, 0) {};
      \node at (-1, 0) {\tiny $A_i$};
      \draw (Ai) edge[->, dashed]node[left, inner sep=2pt, pos=0.5] {\tiny $1$} (-0.7, 0.85);
      \draw (Ai) edge[<-, dashed]node[left, inner sep=2pt, pos=0.5] {\tiny $1$} (-0.7, -1.2);
    \end{scope}
    \begin{scope}[yshift=-9cm,xshift=5.5cm]
      \node at (1.5, 1.5) {\scriptsize $G$ and $F$ on $V_i$};

      \node[ssgvertex, fill=black] (a) at (0,0) {};
      \node[ssgvertex] (b) at (0.9,-0.8) {};
      \node[ssgvertex] (c) at (2.1,-0.8) {};
      \node[ssgvertex] (d) at (3,0) {};
      \node[ssgvertex] (e) at (2.1,0.8) {};
      \node[ssgvertex] (f) at (0.9,0.8) {};
      \draw (b) edge[->, thick] (c);
      \draw (c) edge[->, bend right] (d);
      \draw (d) edge[->, thick, bend right] (e);
      \draw (e) edge[->] (f);
      \draw (f) edge[->, bend right] (a);
      \draw (a) edge[->, dashed]  (-0.7, -1.6);
      \draw (b) edge[<-, dashed]  (0.8, -1.6);
      \draw (c) edge[->, dashed] (3, -1.4);
      \draw (c) edge[<-, dashed] (1.8, -1.6);
      \node at (1.8, -1.8) {\tiny $X_i^-$};
      \node at (-0.7, -1.8) {\tiny $X_i^+$};

    \end{scope}
  \end{tikzpicture}
  \caption{An illustration of the different graphs and flows (restricted to a single component $V_i$) used in our
  algorithm for Local-Connectivity ATSP. The black vertices depict terminals. In the
split graphs, we depict debt vertices by squares and free vertices by circles.}
  \label{fig:overview_lcatsp}
\end{figure}
\fi

\paragraph{Transforming $\xspp$ into an integral flow and obtaining our solution $F$.}

In the next step we round $\xspp$ to integrality while respecting degrees of vertices:
\begin{lemma} \label{lem:exists_integral_y}
There exists an integral circulation $\yspp$ on $\gsp'$ satisfying the following conditions:
\iflncs
{\em (a) } $\yspp(\delta^-(v)) \le \ceil{2\xsp(\delta^-(v))}$ for each $v \in V(\gsp)$, {\em (b) } $\yspp(\delta^-(A_i)) = 1$ for each $i$. \else
\begin{itemize}
	\item $\yspp(\delta^-(v)) \le \ceil{2\xsp(\delta^-(v))}$ for each $v \in V(\gsp)$, 
	\item $\yspp(\delta^-(A_i)) = 1$ for each $i$.
\end{itemize}\fi
Such a circulation $\yspp$ can be found in polynomial time.
\end{lemma}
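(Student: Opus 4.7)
The plan is to interpret the two requirements on $\yspp$ as the constraints of a flow problem on $\gsp'$ with integer edge bounds, exhibit $2\xspp$ as a fractional feasible solution, and conclude via the integrality of the flow polytope.

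First I would verify that $2\xspp$ is fractionally feasible. By construction $\xspp(\delta^-(A_i)) = 1/2$, so $2\xspp(\delta^-(A_i)) = 1$. For $v \in V(\gsp)$, I claim the monotonicity $\xspp(\delta^-(v)) \le \xsp(\delta^-(v))$: the passage from $\xsp$ to $\xspp$ only (i) redirects the heads of edges in $X_i^-$ away from vertices of $\Uisp$ to the new auxiliary vertex $A_i$ (decreasing in-degrees inside $\Uisp$), (ii) redirects the tails of edges in $X_i^+$ to $A_i$ (not affecting any in-degree at a vertex of $V(\gsp)$), and (iii) subtracts the cycle flow $g_i$, which is supported inside $\Uisp$ (again decreasing in-degrees inside $\Uisp$). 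Hence $2\xspp(\delta^-(v)) \le 2\xsp(\delta^-(v)) \le \lceil 2\xsp(\delta^-(v))\rceil$.

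Next, I would reformulate the problem as a standard circulation with integer edge lower and upper bounds. Split each $v \in V(\gsp)$ into $v^{\mathrm{in}}, v^{\mathrm{out}}$ connected by an internal edge of capacity $\lceil 2\xsp(\delta^-(v))\rceil$, and each $A_i$ similarly with the internal edge having both lower and upper bound equal to $1$; all original edges of $\gsp'$ remain uncapacitated. The constraint matrix of this feasibility problem is the node-arc incidence matrix of a digraph, hence totally unimodular, so combined with integer bounds every vertex of the resulting polytope is integral. The circulation induced by $2\xspp$ on the split graph is feasible by the previous paragraph, so an integer solution $\yspp$ exists; it can be found in polynomial time by standard algorithms for feasible circulations with integer bounds (for instance, by reduction to a max-flow computation). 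The only non-routine ingredient is the monotonicity observation above; once it is in hand, the remainder is a textbook application of total unimodularity.
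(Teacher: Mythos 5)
Your proof is correct and follows the same route as the paper's (which is stated in a single sentence: the bounds are integer, and $2\xspp$ is a fractional feasible circulation). You have simply made explicit the two ingredients the paper leaves implicit—the monotonicity $\xspp(\delta^-(v)) \le \xsp(\delta^-(v))$ justifying that $2\xspp$ meets the ceiling bounds, and the standard vertex-splitting/total-unimodularity reduction for node-bounded circulations.
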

\iflncs\else
\begin{proof}
The bounds are integral, and there exists a fractional circulation which satisfies them, namely $2\xspp$.
\end{proof}\fi

We will now transform $\yspp$ into an Eulerian set of edges $F$ in the original graph $G$.
We can think of this as a three-stage process.

First, we map all edges adjacent to the auxiliary vertices $A_i$ back to their
preimages in $\gsp$, obtaining from $\yspp$ an integral \emph{pseudo-flow} $\ysp$ in
$\gsp$. (We use the term \emph{pseudo-flow} as now, some vertices may not
satisfy flow conservation.)

Second, we contract the two copies $v^0$ and $v^1$ of every vertex $v\in V$,
thus mapping all edges back to their preimages in $G$. (We remove all edges
$(t^1,t^0)$ for $t \in T$.) This creates an integral pseudo-flow $y$ in $G$.

Since the in- and out-degree of $A_i$ were exactly $1$ in $\yspp$, now (in $y$)
in each component $U_i$ there is a pair of vertices $u_i$, $v_i$ which are the
head and tail, respectively, of the mapped-back edges adjacent to $A_i$. These
are the only vertices where flow conservation in $y$ can be
violated.\footnote{It is violated unless $u_i = v_i$.} As the third step, to
repair this, we route a walk $P_i$ from $u_i$ to $v_i$, as described below. Our Eulerian set of
edges $F \subseteq E$ which we finally return is the integral pseudo-flow $y$ plus the
union (over $i$) of all such walks $P_i$, i.e., $\one_{F} = y + \sum_i
\one_{P_i}$.

It remains to describe how we route these paths. Fix $i$. Recall that $U_i$ is strongly connected in $\giaux$. We distinguish two cases:
\begin{itemize}
	\item If $A_i$ is a free vertex or the edge exiting $A_i$ in $\yspp$ (in $\gsp'$) is a debt edge, then select a shortest $u_i$-$v_i$-path in $\giaux$, map each edge of this path to its preimage path (see Definition\nobreakspace \ref {def:auxiliary_graph}) and concatenate them to obtain a $u_i$-$v_i$-walk $P_i$ in $V_i$.\footnote{Note that this walk may exit $U_i$, but it will stay inside $V_i$.}
	\item If $A_i$ is a debt vertex but the edge exiting $A_i$ in $\yspp$ (in $\gsp'$) is a free edge, then by \protect \MakeUppercase {F}act\nobreakspace \ref {fact:backtrack} there is a terminal $t$ inside $U_i$, with a path from $t$ to $v_i$ using only cheap edges.\footnote{Map the path given by \protect \MakeUppercase {F}act\nobreakspace \ref {fact:backtrack} from $\gsp$ to $G$.} Proceed as above to obtain a $u_i$-$t$-walk and then append this cheap $t$-$v_i$-path to it, obtaining a $u_i$-$v_i$-walk $P_i$ in $V_i$.
\end{itemize}

This concludes the description of the algorithm. In \iflncs the full version of the paper \else Sections\nobreakspace \ref {sec:light} and\nobreakspace  \ref {sec:cross} \fi we prove that the returned Eulerian set of edges $F$ has the properties we desire, i.e., 

\begin{lemma} \label{lem:light}
For every connected component $\tG$ of $(V,F)$ we have $w(\tG) \le 10 \cdot \lbs(\tG)$.
\end{lemma}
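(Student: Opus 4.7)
The plan is to charge the weight of each edge of $F$ inside $\tG$ to some vertex of $V(\tG)$ and to show that the total charge at each vertex $v$ is at most $10\lbs(v)$. Recall $F = y + \sum_i \one_{P_i}$, and note that each walk $P_i$ lies entirely within a single connected component of $(V,F)$.

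First, I handle the cheap edges of $y$, charging each to its head. Cheap edges of $y$ entering $v$ correspond to $\ysp$-edges entering $v^0$ or $v^1$ in $\gsp$. Combining the bound $\yspp(\delta^-(v^\alpha)) \le \ceil{2\xsp(\delta^-(v^\alpha))}$ from Lemma~\ref{lem:exists_integral_y} (for $\alpha \in \{0,1\}$) with an additive $+1$ correction per $\Uisp$, accounting for the single edge $e_i^- \in X_i^-$ whose head is restored when passing from $\yspp$ to $\ysp$, and using the identities $\xsp(\delta^-(v^0))+\xsp(\delta^-(v^1)) = \xs(\delta^-(v))$ for $v \notin T$ and $= \xs(\delta^-(v)) + f(\delta^-(v))$ for $v \in T$, one obtains that the cheap charge at $v$ is at most $\cO\rb{w_0 \xs(\delta^-(v)) + w_1 \ceil{f(\delta^-(v))}\cdot \one[v \in T] + w_0} = \cO(\lbs(v))$.

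Second, I handle the expensive edges of $y$. In $\gsp$ these are exactly the edges going from the free level to the debt level, while the $(t^1,t^0)$ edges are the only ones going in the opposite direction. Flow conservation of $\yspp$ across this cut in $\gsp'$ equates the total expensive flow with the total flow through $(t^1,t^0)$ edges, up to a $\cO(k)$ correction from the $u_i, v_i$ imbalances. A flow decomposition of $\ysp$ into cycles and unit paths (from some $v_j$ to some $u_{j'}$) assigns each expensive edge to a terminal $t$ reachable along a debt-level path in $\ysp$; by Fact~\ref{fact:debt-path} this terminal lies in $V(\tG)$. Each terminal $t$ is assigned at most $\ysp((t^1,t^0)) \le 2\ceil{f(\delta^-(t))}$ expensive edges, contributing at most $2 w_1 \ceil{f(\delta^-(t))} \le 2 \lbs(t)$ to its charge. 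The $\cO(k)$ unmatched expensive edges correspond to paths of type $v_i^0$-to-$u_i^1$, which occur precisely in Case B of the $P_i$ construction, and are absorbed by charging them to the terminal $t \in U_i$ used there.

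Third, I bound each walk $w(P_i) \le \cO(\lbs(U_i))$. Since $U_i$ is a sink component of $\giaux$, the $u_i$-$v_i$-path in Case A and the $u_i$-$t$-part in Case B both stay within $U_i$ and use at most $|U_i|-1$ $\giaux$-edges. Each paid (prepaid or postpaid) $\giaux$-edge contributes one expensive edge to its preimage and is incident to a terminal of $U_i$; by simplicity of the $\giaux$-path each such terminal is incident to at most two paid edges, bounding the expensive cost of $P_i$ by $2 w_1 |U_i \cap T| \le 2 \lbs(U_i \cap T)$. Summing the three contributions and tracking constants carefully yields $w(\tG) \le 10 \lbs(\tG)$. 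The main obstacle is bounding the cheap cost contributed by the preimages of paid $\giaux$-edges together with the cheap backtracking path from Fact~\ref{fact:backtrack}: a single preimage may traverse up to $|V_i|-1$ cheap edges, so one must argue using the shortest-preimage structure that the total cheap charge on each vertex $v \in V_i \cap V(\tG)$ stays within its $\cO(\lbs(v))$ budget.
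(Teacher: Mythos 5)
Your decomposition of $w(\tG)$ into the cheap part of $y$, the expensive part of $y$, and the walks $P_i$ mirrors the paper's split, and the high-level moves (charge cheap edges to their heads via the degree bounds of Lemma~\ref{lem:exists_integral_y}; match expensive edges to $(t^1,t^0)$-edges via flow conservation; exploit shortest-path structure in $\giaux$) are all the right ingredients. However, the proposal is not a proof: it leaves two genuine gaps and never actually establishes the constant~$10$.

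First, the bound on the ``$\cO(k)$ correction'' for expensive edges is left unquantified. The paper handles this with a precise debt argument (Lemma~\ref{fact:circulation_has_no_debt}): it defines $\debt$ on $\tGsp$, shows that if $\ysp$ were a circulation then $\debt = 0$, and bounds $\debt$ in general by the number of indices $i$ with $e_i^-$ a debt edge and $e_i^+$ a free edge; for each such $i$, \protect \MakeUppercase {F}act\nobreakspace \ref {fact:debt-path} guarantees a terminal on $P_i$, giving $w_1 \cdot \debt \le \lbs(\tG)$. Your version gestures at charging those leftover expensive edges ``to the terminal $t \in U_i$ used there,'' but without a statement that each walk $P_i$ absorbs at most one such unit, the $\cO(k)$ term is not controlled.

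Second — and you flag this yourself as ``the main obstacle'' — you do not bound the cheap cost of the preimage paths that make up $P_i$. These preimages live in $G[V_i]$, can leave $U_i$, and can each be long. The paper closes exactly this gap inside Lemma~\ref{lem:path_is_light}: by taking a \emph{shortest} $u_i$-$v_i$-path (resp.\ $u_i$-$t$-path) in $\giaux$, each vertex appears as an internal vertex of at most one prepaid preimage and at most one postpaid preimage, and as the head of at most one preimage of any type; adding one more for the backtracking $t$-$v_i$-path gives indegree at most $4$ on $P_i$, hence $|E(P_i)| \le 4|V(P_i)|$. Without this counting argument your per-vertex charging has no bound on the cheap charge accumulated at a vertex from $P_i$. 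Finally, you never track constants: the paper's $10 = 6 + 4$ comes from $\sum_e y_e w(e) \le 6\lbs(\tG)$ (Lemma~\ref{lem:firstsumbound}, which itself needs $y(\delta^-(v)) \le 5\xs(\delta^-(v))$ and the debt bound) plus $w(P_i) \le 4\lbs(P_i)$; your $\cO(\cdot)$ bounds followed by ``tracking constants carefully yields $10$'' is an assertion, not a derivation, and a per-vertex charging scheme would need to show the charges at a single vertex do not stack above $10\lbs(v)$, which is not obvious given that $y$-charges and $P_i$-charges can both land on the same terminal.
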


\begin{lemma} \label{lem:cross}
For every component $V_i$ we have $|\delta_{F}^+(V_i)| \ge 1$.
\end{lemma}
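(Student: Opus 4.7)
Since each walk $P_j$ is contained in $V_j$, the walks contribute no edges to $\delta^+_F(V_i)$, so it suffices to exhibit an edge of the pseudo-flow $y$ that leaves $V_i$. The constraint $\yspp(\delta^+(A_i)) = 1$ yields a unique outgoing edge of $A_i$ in $\yspp$, whose preimage in $\gsp$ lies in $X_i^+$ and maps in $G$ to an edge $(v_i, w) \in y$ with $v_i \in U_i$ and $w \notin U_i$. If $w \notin V_i$, this edge already gives the desired crossing, so from now on I assume $w \in V_i \setminus U_i$ and exploit the sink property of $U_i$ in $\giaux$.

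If $(v_i, w)$ is cheap, then it is directly an edge of $\giaux$ going from $U_i$ to $V_i \setminus U_i$, contradicting $U_i$ being a sink component of $\giaux$. Hence $(v_i, w)$ must be expensive. Consider the cycle $C$ in the decomposition of $\xsp$ that defines this $X_i^+$-edge: $C$ enters $\Uisp$ at some edge $e^- \in X_i^-$, passes inside $\Uisp$ to the free vertex $v_i^0$, exits via $(v_i^0, w^1)$, and by Fact~\ref{fact:debt-path} continues along cheap debt edges at the $^1$-level until reaching some $(t^1, t^0)$ with $t \in T$.

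If $X_i^-$ consists of debt edges, then $e^-$ lands at a $^1$-copy of a vertex in $U_i$. The internal segment of $C$ within $\Uisp$ must therefore transition from the $^1$-level down to the $^0$-level in order to reach $v_i^0$, which is possible only by crossing some edge $(s^1, s^0)$ with $s \in U_i \cap T$. Projecting to $G$ yields a cheap path in $G[U_i]$ from the terminal $s$ to $v_i$; appended with the expensive edge $(v_i, w)$, this produces a prepaid edge $(s, w) \in E(\giaux)$ from $U_i$ to $V_i \setminus U_i$, again contradicting the sink property.

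The remaining case, with $X_i^-$ consisting of free edges, is the main obstacle, because the $^0$-level passage of $C$ through $\Uisp$ need not visit any terminal and the prepaid trick fails. Here I would track the continuation of $C$ after the exit: either the cheap $^1$-path from $w$ already leaves $V_i^{sp}$ in $\gsp$, or it stays in $V_i^{sp}$ and reaches $t \in U_i \cap T$ (forced by the sink property applied to the postpaid edge $(v_i, t) \in E(\giaux)$), in which case $C$ re-enters $\Uisp$ at $(t^1, t^0)$ and must eventually exit $\Uisp$ once more to close back to the tail of $e^-$; by the sink property that second exit cannot be a cheap edge landing in $V_i \setminus U_i$, so it must leave $V_i^{sp}$ entirely. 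In either sub-case we obtain a cheap $\xsp$-edge crossing the $V_i$-boundary. To promote this $\xsp$-crossing to an edge of the integral circulation $\yspp$ (and hence of $y$), one appeals to the fact that $\yspp$, as constructed in Lemma~\ref{lem:exists_integral_y}, can be taken to preserve the cut lower bound $\xspp(\delta^-(V_i^{sp} \cup \{A_i\})) \ge 1$ inherited from $\xspp$; this is a standard property of integral rounding of feasible fractional flows with cut lower bounds, and it delivers the required edge of $y$ leaving $V_i$.
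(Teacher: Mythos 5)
The first part of your argument follows the paper: you start from the unique outgoing edge of $A_i$ in $\yspp$, map it back to an edge $(v_i, w)$ of $y$, and use the sink property of $U_i$ in $\giaux$ to conclude $(v_i, w)$ must be expensive. Your ``debt case'' is essentially a re-derivation of Fact~\ref{fact:backtrack} combined with the sink property, which parallels the paper's Case~2. But the overall approach diverges in a way that opens a genuine gap.

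The paper argues entirely about the \emph{integral} pseudo-flow $\ysp$: it exploits that $\ysp$ satisfies flow conservation at every vertex of $\gsp$ except at the two dangling endpoints inside $\Uisp$, and so one can traverse $\ysp$ Eulerian-style from $\esp$ inside $(V_i \setminus U_i)^{\mathrm{sp}}$ back into $\Uisp$, and then (in Case~1) continue inside $\Uisp$ until exiting, using Fact~\ref{fact:debt-path} to force a terminal and hence a forbidden prepaid edge of $\giaux$. Your proposal instead reasons about the \emph{fractional} cycle decomposition of $\xsp$: you pick the cycle $C$ containing the $X_i^+$-edge and track where $C$ goes. Two problems arise.

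First, in the free-$X_i^-$ case your assertion that ``the second exit cannot be a cheap edge landing in $V_i \setminus U_i$, so it must leave $V_i^{\mathrm{sp}}$ entirely'' overlooks that the second exit could be an \emph{expensive} edge landing in $V_i \setminus U_i$; nothing you say rules out $C$ bouncing repeatedly between $\Uisp$ and $V_i^{\mathrm{sp}} \setminus \Uisp$ while staying wholly inside $V_i^{\mathrm{sp}}$, and a single cycle in a fractional decomposition has no reason a priori to cross the $V_i^{\mathrm{sp}}$ boundary. (The bound $\xsp(\delta^+(V_i^{\mathrm{sp}}))\ge 1$ only says the total fractional flow does, not any particular cycle.)

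Second and more seriously, even if one established that $\xsp$ (or a cycle of it) crosses $\delta^+(V_i^{\mathrm{sp}})$, there is no valid route from that to a crossing edge of the \emph{integral} $\yspp$. You appeal to a ``standard property of integral rounding of feasible fractional flows with cut lower bounds,'' but Lemma~\ref{lem:exists_integral_y} imposes only degree upper bounds and the constraint $\yspp(\delta^-(A_i))=1$; no cut lower bounds $\yspp(\delta^-(V_i^{\mathrm{sp}}\cup\{A_i\}))\ge 1$ are part of that LP, and adding $k$ such side constraints would in general destroy the integrality of the circulation polytope. Indeed, if one could round an Eulerian fractional solution to an integral circulation respecting degree bounds and \emph{simultaneously} all the $V_i$-cut lower bounds, the entire auxiliary-vertex construction (the $A_i$'s, $X_i^\pm$, the sink components $U_i$) would be unnecessary --- the paper builds this machinery precisely because such a rounding is not available. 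The paper's proof circumvents the issue by never invoking a cut lower bound on $\yspp$ at all: the crossing is extracted structurally from the actual integral $\ysp$ via flow-conservation traversal, which is the step your proof is missing.
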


Lemmas\nobreakspace \ref {lem:sum_of_lbs_is_low} and\nobreakspace  \ref {lem:light} together prove that our algorithm is $100$-light with respect to $\lb$.

\iflncs\else
\subsubsection{Bounding the cost -- proof of Lemma\nobreakspace \ref {lem:light}} \label{sec:light}

For this section, let us fix $\tG$ to be 
a connected component of $(V,F)$. We want to prove the following lightness
claim: $w(\tG) \le 10 \cdot \lbs(\tG)$.

Intuitively, our solution is cheap because we built the split graph $\gsp$ so
as to ensure that any circulation in $\gsp$ which roughly respects the degree
bounds given by $\xsp$ has low cost, and because our pseudoflow $y$ is not too far
from being a circulation (in particular, the cost of walks $P_i$ can be
accounted for). We make this argument precise below.

First, recall that the edges in $F$ are edges in the integral pseudo-flow
$y$ and the edges of the walks $P_1, P_2, \ldots, P_k$. For a walk $P_i$, we
write $P_i \subseteq \tG$ if $P_i$ is contained in $\tG$ (note that a walk is
either contained in or disjoint from $\tG$ because $\tG$ is a connected component
of $(V,F)$).  Hence
\begin{align}
  w(\tG) = \sum_{e\in E(\tG)} y_e w(e)  + \sum_{i: P_i \subseteq \tG} w(P_i).
  \label{eq:boundcost}
\end{align}

\newcommand{\tGsp}{\tG_{\mathrm{sp}}}
We start by analyzing the first term: $\sum_{e\in E(\tG)} y_e w(e)$. 
Recall that $y$ is obtained by mapping $\ysp$ from $\gsp$ to $G$. Let
$\tGsp$ denote the pre-image of $\tG$ in $\gsp$.
It will be convenient to define the \emph{debt} of $\ysp$ as the number of ``unpaid''  expensive edges in this component:
\begin{align*}
  \debt = \sum_{e\in E(\tGsp): \wsp(e) = w_1} \ysp(e)  - \sum_{t\in V(\tG) \cap T} \ysp(t^1, t^0).
\end{align*}
That is,   the debt of $\ysp$
on $\tGsp$ is the difference between the $\ysp$-flow on all expensive edges inside
$\tGsp$ (those edges ``incur debt'') and the $\ysp$-flow on all edges $(t^1,t^0)$
inside $\tGsp$ (those edges ``discharge debt'').  Note also that, by the definition of $y$ from $\ysp$, $\sum_{e\in E(\tGsp): \wsp(e) = w_1} \ysp(e) = \sum_{e\in E(\tG): w(e) = w_1} y(e)$.

By the construction of $\gsp$ we have the following  upper bound on $\debt$.
\begin{lemma} \label{fact:circulation_has_no_debt}
	Let $e_i^-$ and $e_i^+$ be the incoming and outgoing edges of $A_i$ in $\yspp$. 
	For any $i$, define
	\[
	\bad(i) = \begin{cases}
	1 & \text{if $e_i^-$ is a debt edge and $e_i^+$ is a free edge}, \\
	0 & \text{otherwise.}
	\end{cases}
	\]
	Then
  \begin{align*}
    \debt \leq \sum_{i: P_i \subseteq \tG} \bad(i) 
  \end{align*}
  Moreover, if $\ysp$ is a circulation, then $\debt = 0$. 
\end{lemma}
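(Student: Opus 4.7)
The plan is to reinterpret $\debt$ as a cut-imbalance of $\ysp$ and then locate exactly where $\ysp$ fails to be a circulation on $\gsp$. The key initial observation is that the only $\gsp$-edges crossing between the free copies $V^0:=\{v^0:v\in V\}$ and the debt copies $V^1:=\{v^1:v\in V\}$ are the expensive edges (from $V^0$ to $V^1$) and the discharge edges $(t^1,t^0)$ (from $V^1$ to $V^0$). Moreover, since $\tG$ is a whole connected component of $(V,F)$ and $F\supseteq y$, no edge of $y$—and therefore no edge of $\ysp$—crosses the boundary of $\tGsp$. Combining these two facts, $\debt$ equals the inflow-minus-outflow of $\ysp$ at $V^1\cap V(\tGsp)$.

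Next I would locate the pseudo-flow defects of $\ysp$. By Lemma~\ref{lem:exists_integral_y}, $\yspp$ is a circulation on $\gsp'$ with $\yspp(\delta^-(A_i))=\yspp(\delta^+(A_i))=1$; mapping the $A_i$-incident edges back to their $\gsp$-preimages creates, for each $i$, only two local defects: at $h_i^-$ (the $\gsp$-head of the preimage of $e_i^-$) the inflow grows by $1$, and at $h_i^+$ (the $\gsp$-tail of the preimage of $e_i^+$) the outflow grows by $1$; every other vertex stays balanced. Because $P_i$ joins $u_i$ to $v_i$ inside a single component of $F$, we have $P_i\subseteq\tG$ iff $u_i\in V(\tG)$ iff both $h_i^-,h_i^+\in V(\tGsp)$, so summing the defects over $V^1\cap V(\tGsp)$ yields
\[\debt \;=\; \sum_{i:\,P_i\subseteq\tG}\bigl[\one_{h_i^-\in V^1}-\one_{h_i^+\in V^1}\bigr].\]

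To conclude I would do a short case analysis on whether $h_i^-$ and $h_i^+$ lie in $V^1$. By construction $h_i^-\in V^1$ iff $e_i^-$ enters a debt copy, which is exactly the condition that $e_i^-$ is a debt edge. For an edge in $\delta^+(\Uisp)$, the only ones whose tail in $\Uisp$ is a debt vertex are the cheap debt edges $(u^1,v^1)$; both cheap free edges $(u^0,v^0)$ and expensive edges $(u^0,v^1)$ have free tail. A four-way check on $(h_i^-,h_i^+)$ shows that the summand equals $+1$ exactly in the case ``$e_i^-$ debt, $h_i^+$ free''—precisely the case that defines $\bad(i)=1$—and is $\le 0$ in the other three cases; this gives $\debt\le\sum_{i:P_i\subseteq\tG}\bad(i)$. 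The ``moreover'' clause is then immediate: when $\ysp$ is a circulation on $\gsp$ there are no defects, so $V^1\cap V(\tGsp)$ has zero net imbalance in $\ysp$ and hence $\debt=0$. The one point I expect to require care is matching the definition of $\bad$ with the paper's terminology: for $e_i^+\in X_i^+$, ``free edge'' must be read in the same sense as in Fact~\ref{fact:backtrack} as ``the tail of $e_i^+$ in $\Uisp$ is a free vertex'', which is exactly the dichotomy (grouping expensive edges with cheap free edges) that makes the case analysis align with $\bad(i)$.
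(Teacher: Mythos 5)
Your argument is correct and follows the same approach as the paper's: identify $\debt$ with the net imbalance of $\ysp$ across the cut separating $V^1\cap V(\tGsp)$ from the rest, observe that $\ysp$ is a circulation except at the two $A_i$-endpoints $h_i^-$ and $h_i^+$, and do a four-way case analysis on whether these lie in $V^0$ or $V^1$ (the paper phrases the same accounting in terms of ``virtual edges'' closing the $h_i^-\to h_i^+$ gaps, and reads off only the upper bound). Your closing caveat is also on point and worth making explicit: for the case analysis to agree with $\bad(i)$, the clause ``$e_i^+$ is a free edge'' in the lemma statement and in the algorithm's case split must be read as ``the $\Uisp$-tail of $e_i^+$ is a free vertex'' (grouping expensive edges with cheap free edges), which is precisely the hypothesis of Fact~\ref{fact:backtrack}; under the literal reading (head in $V^0$), an expensive $e_i^+$ with $A_i$ a debt vertex would give a $+1$ summand with $\bad(i)=0$, so this reinterpretation is actually needed for the inequality to hold, not merely for notational tidiness.
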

\begin{proof}
  In $\gsp$, the only edges of the form $(u^0,v^1)$ are the expensive edges,
  and the only edges of the form $(u^1,v^0)$ are edges $(t^1,t^0)$ with $t \in
  T$. Hence, if $\ysp$ is a circulation, then
  \begin{align*}
   \sum_{e \in E(\tGsp) : \wsp(e) = w_1} \ysp(e)  &= \ysp(\delta^-(\{ v^1: v \in V(\tG)\})) \\
  &  = \ysp(\delta^+(\{ v^1: v \in V(\tG) \})) = \sum_{t \in V(\tG) \cap T} \ysp(t^1, t^0), 
\end{align*}
  which shows that $\debt=0$ in this case.

  Now, if $\ysp$ is not a circulation, then one  can  imagine transforming it
  into one by connecting the ``dangling'' endpoints of $e_i^-$ and $e_i^+$ in
  $\gsp$ using a new ``virtual'' edge for each $i$; this edge would be from a debt vertex to a free
  vertex if and only if $e_i^-$ is a debt edge and $e_i^+$ is a free edge.
  Hence, by the calculations above, $\debt$ is upper-bounded by
  the number of such new edges in our component. (We do not have equality as
  the new edges may also be introduced from free vertices to debt vertices.)
\end{proof}

Using the above lemma, degree bounds and basic calculations, we upper-bound the term $\sum_{e\in E(\tG)} y_e w(e)$:
\begin{lemma}
  We have $\sum_{e\in E(\tG)} y_e w(e)  \leq  6\lbs(\tG)$.
  \label{lem:firstsumbound}
\end{lemma}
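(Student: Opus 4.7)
The plan is to decompose the $y$-cost into contributions from cheap and expensive edges, and bound each using the degree constraints from Lemma~\ref{lem:exists_integral_y} together with the debt identity from Lemma~\ref{fact:circulation_has_no_debt}, finally comparing to $\lbs$ via the pieces $w_0\xs(\delta^-(v))$, $w_0$, and $w_1\ceil{f(\delta^-(t))}$ that make up the lower-bound function.

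First I would write the cost as $w_0\cdot|\{\text{cheap edges in }y\}|+w_1\cdot|\{\text{expensive edges in }y\}|$. For the cheap part, I would observe that cheap edges of $G$ into $v$ correspond in $\ysp$ to edges into $v^0$ or $v^1$ in $\gsp$, so the cheap count at $v$ is bounded by $\ysp(\delta^-(v^0))+\ysp(\delta^-(v^1))$. Applying Lemma~\ref{lem:exists_integral_y}, which gives $\yspp(\delta^-(\cdot))\le\ceil{2\xsp(\delta^-(\cdot))}$, and adding $1$ to account for the at most one incoming edge restored when mapping $\yspp\to\ysp$ (exactly one per component $i$, affecting at most one of $v^0,v^1$ since the $U_i$ are disjoint), yields $\ysp(\delta^-(v^j))\le 2\xsp(\delta^-(v^j))+2$. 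Summing the two copies then gives $2\xs(\delta^-(v))+O(1)$ for a non-terminal $v$ and $2\xs(\delta^-(v))+2f(\delta^-(v))+O(1)$ for $v=t\in T$, using the identities $\xsp(\delta^-(v^0))+\xsp(\delta^-(v^1))=\xs(\delta^-(v))$ (non-terminal) or $=\xs(\delta^-(t))+f(\delta^-(t))$ (terminal).

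For the expensive part, I would apply Lemma~\ref{fact:circulation_has_no_debt} to write the expensive count as $\sum_{t\in T\cap V(\tG)}\ysp(t^1,t^0)+\debt$. Since $f(E_0\cap\delta^+(t))=0$, the only outgoing edge of $t^1$ is $(t^1,t^0)$, so (up to the at most one pseudo-flow imbalance introduced by the $A_i$-routing) $\ysp(t^1,t^0)=\ysp(\delta^-(t^1))\le 2\xsp(\delta^-(t^1))+O(1)=2f(\delta^-(t))+O(1)$. The debt term is handled using $\debt\le\sum_{i:P_i\subseteq\tG}\bad(i)$: each bad $i$ corresponds (by \protect \MakeUppercase {F}act\nobreakspace \ref {fact:backtrack}) to a distinct terminal in $U_i\subseteq V(\tG)$, and each such terminal contributes $\lbs(t)\ge w_1$ to absorb its $w_1\cdot\bad(i)$ cost.

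Finally I would collect terms using $\lbs(v)\ge w_0\xs(\delta^-(v))$, $\lbs(v)\ge w_0$ (from $\xs(\delta^-(v))\ge 1$), and $\lbs(t)\ge w_1\ceil{f(\delta^-(t))}$. The main obstacle will be the careful bookkeeping of several additive $O(1)$ terms — from the LP rounding, the $A_i$-edge restoration, the split into $v^0/v^1$, and the debt correction — each of which must be charged against a distinct piece of $\lbs(\tG)$ in order to arrive at the tight constant $6$ rather than a looser bound. In particular, one must ensure that the $w_0\cdot|V(\tG)|$ slack (needed to absorb the ceiling/rounding overhead) does not compete with the $w_0\xs(\delta^-(v))$ budget already charged to bound the fractional part of the cheap contribution, and that the terminals' $w_1\ceil{f(\delta^-(t))}$ budget simultaneously pays for the $\sum_t\ysp(t^1,t^0)$ term and the debt-induced $w_1\cdot\bad(i)$ contribution.
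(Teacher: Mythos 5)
Your plan is essentially the paper's proof: split the cost into a cheap part and an expensive part, bound the cheap part via the per-vertex degree bounds from Lemma~\ref{lem:exists_integral_y} together with the $+1$ from the $A_i$-restoration, rewrite the expensive count using the debt identity of Lemma~\ref{fact:circulation_has_no_debt}, bound $\ysp(t^1,t^0)=\yspp(\delta^-(t^1))\le 2\ceil{f(\delta^-(t))}$ by the same degree bounds, and absorb $w_1\cdot\debt$ into the disjoint $\lbs(P_i)$ budgets via Fact~\ref{fact:backtrack}. The one place where the paper is sharper is at terminals: since $t^1$'s only outgoing edge goes to $t^0$, $\yspp(\delta^-(\{t^0,t^1\}))=\yspp(\delta^-(t^0))$, so the paper gets a clean $y(\delta^-(t))\le 5\xs(\delta^-(t))$ with no $f$ term, whereas your bound $\ysp(\delta^-(t^0))+\ysp(\delta^-(t^1))$ also counts the zero-weight edge $(t^1,t^0)$ and produces the extra $2f(\delta^-(t))$ you flagged; this does still close (the extra $2w_0\sum_t\ceil{f}$ together with the $2w_1\sum_t\ceil{f}$ from the expensive part is $\le 4w_1\sum_t\ceil{f}$, so the non-debt total is $\le 5\lbs(\tG)$ and the whole bound is $\le 6\lbs(\tG)$), but you should carry that observation explicitly to avoid the bookkeeping worry you raised.
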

\begin{proof}
We begin by writing:
\begin{align*}
  \sum_{e\in E(\tG)} y_e w(e) &\leq   w_0 \sum_{e\in E(\tG)} y_e  + w_1 \sum_{e\in E(\tGsp): \wsp(e) = w_1} \ysp(e) \\
  &= w_0 \sum_{e\in E(\tG)} y_e  + w_1 \rb{\debt + \sum_{t\in V(\tG)\cap T} \ysp(t^1, t^0)} \\
  &= w_0 \sum_{v\in V(\tG)} y(\delta^-(v)) + w_1 \sum_{t\in V(\tG) \cap T} \yspp(t^1, t^0) + w_1\cdot \debt.
\end{align*}

We bound the first term. Note that for each $v\in V$ we have
$y(\delta^-(v)) = \ysp(\delta^-(\{v^0, v^{1}\}))$ and $\xs(\delta^-(v))
= \xsp(\delta^-(\{v^0, v^{1}\}))$. Lemma\nobreakspace \ref {lem:exists_integral_y} guarantees that
$\yspp(\delta^-(v^j)) \leq \ceil{2\xsp(\delta^-(v^j))}$ for $j \in \{0,1\}$. This implies that $\yspp(\delta^-(\{v^0, v^1\}))
\leq 2\xs(\delta^-(v)) + 2$,
because:
\begin{itemize}
\item If $v \in T$, then $v^1$ has only one outgoing edge in $\gsp$, which goes to $v^0$, and thus for any circulation $c$ in $\gsp$ we have $c(\delta^-(\{v^0, v^1\})) = c(\delta^-(v^0))$. It follows that
$\yspp(\delta^-(\{v^0, v^1\})) = \yspp(\delta^-(v^0)) \le \ceil{2 \xsp(\delta^-(v^0))} \le 2 \xsp(\delta^-(v^0)) + 1 = 2 \xsp(\delta^-(\{v^0, v^1\})) + 1 = 2 \xs(\delta^-(v)) + 1$.
\item If $v \not \in T$, then there are no edges between $v^0$ and $v^1$ in $\gsp$, and thus for any circulation $c$ in $\gsp$ we have $c(\delta^-(\{v^0, v^1\})) = c(\delta^-(v^0)) + c(\delta^-(v^1))$. It similarly follows that
$\yspp(\delta^-(\{v^0, v^1\})) = \yspp(\delta^-(v^0)) + \yspp(\delta^-(v^1)) \le \ceil{2 \xsp(\delta^-(v^0))} + \ceil{2 \xsp(\delta^-(v^1))} \le 2 \xsp(\delta^-(v^0)) + 2 \xsp(\delta^-(v^1)) + 2 = 2 \xsp(\delta^-(\{v^0, v^1\})) + 2 = 2 \xs(\delta^-(v)) + 2$.
\end{itemize}
 As $\ysp$
is the same as $\yspp$ except for the edges redirected from the auxiliary
vertices in $\gsp$ (which may increase the in-degree of a vertex by at most $1$), 
\begin{align*}
  y(\delta^-(v)) = \ysp(\delta^-(\{v^0, v^{1}\})) \leq \yspp(\delta^-(\{v^0, v^{1}\})) + 1
  \leq 2\xs(\delta^-(v)) + 3 \leq 5 \xs(\delta^-(v)),
\end{align*}
where we used $\xs(\delta^-(v)) \geq 1$ for the last inequality. Therefore
\begin{align*}
  w_0 \sum_{e\in E(\tG)} y_e = w_0 \sum_{v\in V(\tG)} y(\delta^-(v)) \leq 5\cdot w_0\sum_{v\in V(\tG)}\xs(\delta^-(v)). 
\end{align*}
We bound the second term similarly:
\begin{align*}
  \yspp(t^1,t^0) = \yspp(\delta^-(t^1)) \leq  \ceil{2 \xsp(\delta^-(t^1))} \le 2 \ceil{\xsp(\delta^-(t^1))} = 2 \ceil{f(\delta^-(t))}.
\end{align*}
Plugging both in we get:
\begin{align*}
  \sum_{e\in E(\tG)} y_e w(e) &\leq 5\cdot w_0\sum_{v\in V(\tG)}\xs(\delta^-(v)) + 2\cdot w_1 \sum_{t\in (\tG)\cap T} \lceil f(\delta^-(t))\rceil + w_1 \cdot \debt \\
  &\leq 5\cdot \left(w_0\sum_{v\in V(\tG)}\xs(\delta^-(v)) +  w_1 \sum_{t\in V(\tG)\cap T} \lceil f(\delta^-(t))\rceil \right) +w_1  \cdot \debt \\
  & = 5\lbs(\tG) +w_1  \cdot \debt. 
\end{align*}
(For the final equality, recall the definition of the function $\lbs$.) We will be done if we show that
$w_1 \cdot \debt \le \lbs(\tG)$.
By Lemma\nobreakspace \ref {fact:circulation_has_no_debt} we have $\debt\le \sum_{i:P_i
  \subseteq \tG} \bad(i) $. Whenever $\bad(i) = 1$,
\protect \MakeUppercase {F}act\nobreakspace \ref {fact:debt-path} implies that $P_i$ contains a terminal, and so
$\lbs(P_i) \ge w_1$.
Consequently, 
\[
w_1  \cdot \debt\le \sum_{i : P_i \subseteq \tG,
  \bad(i) = 1} w_1 \le \sum_{i : P_i \subseteq \tG,
  \bad(i) = 1} \lbs(P_i) \le \lbs(\tilde G).\]
 In the last inequality we used that the
different $P_i$'s are disjoint subsets of $\tilde G$ (each $P_i$ is
contained in $V_i$). The statement follows.
\end{proof}

It remains to argue that the cost of the walks $P_i$ can be accommodated.

\begin{lemma} \label{lem:path_is_light}
For every $i$ we have $w(P_i) \le 4 \cdot \lbs(P_i)$.
\end{lemma}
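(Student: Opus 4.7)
I plan to prove $w(P_i) \le 4\lbs(P_i)$ by a charging argument, assigning each edge of $P_i$ to a vertex in $V(P_i)$ so that the total charge to any vertex $v$ is at most $4\lbs(v)$. Recall $P_i$ is the concatenation of preimages $Q_1, \ldots, Q_m$ of the edges along a shortest $u_i$-$v_i$-path $a_0=u_i, a_1, \ldots, a_m$ in $\giaux$ (whose vertices are distinct by the shortest-path property), followed in the second case by an additional cheap $t$-$v_i$-path $R$.

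Each expensive edge lies in exactly one non-cheap preimage $Q_j$, whose corresponding aux edge has a designated terminal endpoint ($a_j$ if the aux edge is postpaid, $a_{j-1}$ if prepaid). I charge the cost $w_1$ of that expensive edge to the terminal. Since the aux vertices are distinct, each vertex $a_\ell$ is the endpoint of at most two aux edges in the walk, so every terminal absorbs at most $2w_1 \le 2\lbs(t)$ from this charging (recall $\lbs(t) \ge w_0 + w_1 \ge w_1$). Each cheap edge $(u,v)$ in $P_i$ I charge at cost $w_0 \le \lbs(v)$ to its head $v$. To finish, it suffices to bound, for every vertex $v$, the number of cheap-edge arrivals at $v$ by a small constant: if this number is at most $2$, then combined with the at-most-one extra arrival from $R$, the cheap-edge charge to any vertex is $\le 3w_0 \le 3\lbs(v)$, and adding the $\le 2w_1$ absorbed at terminals still gives at most $4\lbs(v)$.

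The visit-multiplicity bound relies on the aux walk being shortest in $\giaux$, weighted by preimage costs. If a non-aux vertex $v$ were an interior vertex of two preimages $Q_j, Q_k$ of the same type with $j<k$, concatenating the prefix of $Q_j$ through $v$ with the suffix of $Q_k$ from $v$ yields a valid aux edge of the same type from $a_{j-1}$ to $a_k$ (postpaid from a postpaid/postpaid overlap, prepaid from a prepaid/prepaid overlap), giving a strictly shorter aux walk and contradicting minimality. The most delicate step, which I expect to be the main obstacle, is the mixed case where $Q_j$ and $Q_k$ have different types: the naive combination produces a path with two expensive edges, which is not a valid aux edge, so a more careful shortcut construction is needed---for instance, leveraging the cheap sub-paths from/to the designated terminals inside each preimage, or invoking properties of the sink structure of $U_i$ within $\giaux$---to rule out high multiplicity and conclude the $4\lbs(P_i)$ bound.
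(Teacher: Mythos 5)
Your approach is essentially the same as the paper's, but the ``main obstacle'' you flag is a red herring: the mixed case never needs to be ruled out. The paper's per-vertex count only requires that $v$ be interior to \emph{at most one} prepaid preimage and \emph{at most one} postpaid preimage --- exactly the two same-type claims you already proved by shortcutting --- plus the head of \emph{at most one} preimage of any kind (since the aux vertices along a shortest path in $\giaux$ are distinct), plus at most one arrival from the cheap $t$-$v_i$ path $R$. That gives $|\delta^-_{P_i}(v)|\le 4$, which together with $|E_1\cap E(P_i)|\le 2\,|T\cap V(P_i)|$ (your terminal charging) yields
\[
w(P_i)\le w_0\,|E(P_i)|+w_1\,|E_1\cap E(P_i)|\le 4w_0\,|V(P_i)|+2w_1\,|T\cap V(P_i)|\le 4\,\lbs(P_i),
\]
using $\lbs(v)\ge w_0$ and $\lbs(t)\ge w_0+w_1$. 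Your proposed bound of at most $2$ cheap preimage arrivals per vertex is over-tight on two counts: it would require excluding the mixed case, which indeed cannot be excluded (the concatenation of a prepaid prefix and a postpaid suffix through $v$ is a purely cheap walk, not a single aux edge, so there is no shortcut), and it ignores that an aux vertex $a_\ell$ can also be the head of one preimage, contributing a third preimage arrival. Neither of these matters once you allow slack $4$ per vertex: at a non-terminal the charge is $\le 4w_0\le 4\lbs(v)$, and at a terminal it is $\le 4w_0+2w_1\le 4(w_0+w_1)\le 4\lbs(t)$. So your charging goes through as soon as you simply \emph{count} the mixed case instead of trying to forbid it.
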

\begin{proof}
This holds because in $P_i$, each vertex has low indegree and expensive edges can be offset against terminals. Concretely, we have the following two claims:
\begin{claim}
For each $v \in V(P_i)$ we have $|\delta_{P_i}^-(v)| \le 4$. (In particular, $|E(P_i)| \le 4 |V(P_i)|$.) 
\end{claim}
\begin{proof}
This follows from the fact that we select a \emph{shortest} $u_i$-$v_i$-path (or $u_i$-$t$-path) in $\giaux$ (see Definition\nobreakspace \ref {def:auxiliary_graph}). This implies that each vertex $v$ appears as an internal vertex on at most one preimage path of a prepaid edge (otherwise we could shortcut the path in $\giaux$). Same applies for postpaid edges. And similarly, each $v$ appears as head of at most one preimage path of any kind (single cheap edge, prepaid or postpaid). This means that $v$ has indegree at most $3$ on the walk created from preimages of edges in $\giaux$. The path from $t$ to $u_i$ can contribute a fourth incoming edge.
\end{proof}
\begin{claim}
The number of expensive edges on $P_i$ is at most twice the number of terminals on $P_i$, i.e., $|E_1 \cap E(P_i)| \le 2 \cdot |T \cap V(P_i)|$.
\end{claim}
\begin{proof}
Expensive edges appear only in preimage paths of prepaid or postpaid edges, one per such preimage path, and such a path also contains a terminal (as its head or tail). A terminal can only appear as head of one prepaid and as tail of one postpaid edge (otherwise, again, we could shortcut the path in $\giaux$).
\end{proof}
Having these two claims, we can bound the cost:
\begin{align*}
w(P_i) &\le w_0 \cdot |E(P_i)| + w_1 \cdot |E_1 \cap E(P_i)| \\
       &\le w_0 \cdot 4 |V(P_i)| + w_1 \cdot 2 |T \cap V(P_i)| \\
       &\le 4 \rb{w_0 \cdot |V(P_i)| + w_1 \cdot |T \cap V(P_i)|} \\
       &\le 4 \cdot \lbs(P_i).
\end{align*}
\end{proof}

Now we have all the tools needed to prove our lightness claim, i.e., to upper-bound~\eqref{eq:boundcost} by $10\cdot \lbs(\tG)$:
\begin{align*}
w(\tG) &= \sum_{e\in E(\tG)} y_e w(e)  + \sum_{i: P_i \subseteq \tG} w(P_i) \\
&\le 6 \cdot \lbs(\tG) + \sum_{i : P_i \subseteq \tG} 4 \cdot \lbs(P_i) \\
&\le 6 \cdot \lbs(\tG) + 4 \cdot \lbs(\tG) \\
&= 10 \cdot \lbs(\tG),
\end{align*}
where line 2 is by Lemmas\nobreakspace \ref {lem:firstsumbound} and\nobreakspace  \ref {lem:path_is_light}, and in
line 3 we use that all
walks $P_i$ are disjoint and in $\tG$. This completes the proof of Lemma\nobreakspace \ref {lem:light}.
\qedmanual
\fi

\iflncs\else
\subsubsection{Crossing the cuts -- proof of Lemma\nobreakspace \ref {lem:cross}} \label{sec:cross}

In this section we prove that our solution $F \subseteq E$ crosses each
component $V_i$. Let us recall that $U_i$ is a sink strongly-connected
component of the graph $\giaux$, so there is no edge in $\giaux$ from $U_i$ to
$V_i \setminus U_i$. However, there can be such edges in $G$ and therefore it
might not be the case that the edge in $F$ which leaves $U_i$ also leaves
$V_i$. We will however argue that $F$ does contain some edge leaving $V_i$.
Assume towards a contradiction that this is not true, i.e., that $V_i$ is
a union of connected components of the graph $(V,F)$.

\newcommand{\esp}{e_{\mathrm{sp}}}
Fix $i$. Consider the (only) edge in $\yspp$ (in $\gsp'$) with tail $A_i$; let $\esp$ be its image in $\ysp$ (in $\gsp$) and $e$ its image in $y$ (in $G$). We have $\esp \in X_i^+$; the tail of $e$ is $v_i \in U_i$, and the head of $e$ is in $V_i \setminus U_i$ (since we assumed that no edge of $F$ leaves $V_i$).

The following claim is the first, simplest example of how we use the structure of $\giaux$ to reason about our solution $F$.
\begin{claim} \label{claim:edge_is_expensive}
Any edge from $U_i$ to $V_i \setminus U_i$ in $G$ is expensive. (Therefore $e$ is expensive, and $\esp$ is a debt edge.)
\end{claim}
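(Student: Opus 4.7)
[Proof of \protect \MakeUppercase {C}laim\nobreakspace \ref {claim:edge_is_expensive}]
The plan is to derive a contradiction from the sink property of $U_i$ in $\giaux$. Suppose for contradiction that some edge $(u,v) \in E$ with $u \in U_i$ and $v \in V_i \setminus U_i$ is cheap, i.e., $(u,v) \in E_0 \cap G[V_i]$. Then the first bullet in Definition\nobreakspace \ref {def:auxiliary_graph} immediately places $(u,v)$ into $E(\giaux)$. This contradicts the choice of $U_i$ as a sink strongly connected component of $\giaux$, which guarantees that no edge of $\giaux$ leaves $U_i$ toward $V_i \setminus U_i$. Hence every edge from $U_i$ to $V_i \setminus U_i$ in $G$ must lie in $E_1$, and in particular $e$ is expensive.

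It remains to observe that $\esp$ is a debt edge. By Definition\nobreakspace \ref {def:split}, the image in $\gsp$ of an expensive edge $(u,v) \in E_1$ is the single edge $(u^0, v^1)$, whose head $v^1$ is a debt vertex. Therefore $\esp$ enters a debt vertex, and by definition is a debt edge.
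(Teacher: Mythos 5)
Your proof is correct and follows exactly the same route as the paper: a cheap edge from $U_i$ to $V_i \setminus U_i$ would appear in $\giaux$ by the first bullet of Definition~\ref{def:auxiliary_graph}, contradicting that $U_i$ is a sink component. Your additional justification of why $\esp$ is a debt edge (tracing through Definition~\ref{def:split}) is a correct elaboration of the paper's parenthetical remark.
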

\begin{proof}
If there was such a cheap edge, it would also appear in $\giaux$. However, there is no edge in $\giaux$ from $U_i$ to $V_i \setminus U_i$.
\end{proof}
Define $(V_i \setminus U_i)^{\mathrm{sp}} = \{ v^0, v^1 : v \in V_i \setminus U_i \} \subseteq V(\gsp)$.
We can now start traversing $\ysp$ inside $(V_i \setminus U_i)^{\mathrm{sp}}$ like an Eulerian graph (since $\ysp$ satisfies flow conservation in $(V_i \setminus U_i)^{\mathrm{sp}}$), starting from $\esp$, until we return to $\Uisp$.
\begin{claim}
We will not reach a terminal before returning to $\Uisp$. (Therefore we will return on a debt edge $\esp'$.)
\end{claim}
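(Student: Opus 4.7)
My plan is to derive a contradiction with the fact that $U_i$ is a sink strongly connected component of $\giaux$: if the walk reached some terminal $t \in V_i \setminus U_i$ before returning to $\Uisp$, we could read off a postpaid edge from $u \in U_i$ to $t$ in $\giaux$, violating sinkness.

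First I would recall what $\esp$ looks like. By \protect \MakeUppercase {C}laim\nobreakspace \ref {claim:edge_is_expensive}, the preimage $e = (u,v)$ of $\esp$ is expensive with $u \in U_i$ and $v \in V_i \setminus U_i$, so in $\gsp$ we have $\esp = (u^0, v^1)$. Suppose for contradiction that the Eulerian traversal of $\ysp$ starting from $\esp$ first encounters a discharge edge $(t^1, t^0)$ for some terminal $t \in V_i \setminus U_i$, before returning to $\Uisp$. Denote the traversed edge sequence up to (but not including) this discharge edge as $\esp, e_1, e_2, \ldots, e_m$, ending at $t^1$.

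The key structural observation is that every vertex on this walk after $u^0$ is a debt vertex. Indeed, after $\esp$ we are at $v^1$, and by the design of $\gsp$ (\protect \MakeUppercase {D}efinition\nobreakspace \ref {def:split}) the only edges leaving a debt vertex $w^1$ are either debt-to-debt edges $(w^1, w'^1)$ corresponding to cheap edges $(w,w') \in E_0$ with $f(w,w') > 0$, or the discharge edge $(w^1, w^0)$ when $w \in T$. Since we have assumed the discharge edge is not used until we reach $t^1$, every $e_j$ is a debt-to-debt edge and hence maps to a cheap edge in $G$.

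Projecting back to $G$, we obtain a walk $u, v, \ldots, t$ in $G[V_i]$ whose first edge is the expensive edge $(u,v)$ and whose remaining edges are all cheap; moreover all internal vertices lie in $V_i \setminus U_i$ since the walk never left $(V_i \setminus U_i)^{\mathrm{sp}}$ before hitting $t^1$. Because $u^0$ is a free vertex and only appears at the very beginning of the walk in $\gsp$, $u$ is not revisited in $G$ either, so shortcutting cycles yields a simple $u$-$t$-path with the same ``first expensive, rest cheap'' structure. By \protect \MakeUppercase {D}efinition\nobreakspace \ref {def:auxiliary_graph} this path witnesses a postpaid edge $(u,t) \in E(\giaux)$, which is an edge from $U_i$ to $V_i \setminus U_i$ in $\giaux$ — contradicting the choice of $U_i$ as a sink strongly connected component of $\giaux$. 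The only slightly delicate point is confirming that $u$ is not revisited so that shortcutting preserves the expensive first edge; this is handled by the free/debt dichotomy noted above.
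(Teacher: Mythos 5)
Your proof is correct and takes essentially the same approach as the paper: both argue that, since the traversal starts at the debt vertex $\head(\esp)=v^1$ and can only leave the debt level via a discharge edge $(t^1,t^0)$, a terminal-free walk to some $t^1$ with $t\in V_i\setminus U_i$ would consist solely of cheap debt-to-debt edges, and projecting it to $G$ and prepending the expensive edge $e$ would exhibit a postpaid edge in $\giaux$ from $U_i$ to $V_i\setminus U_i$, contradicting that $U_i$ is a sink. The extra care you take about shortcutting the walk to a simple path is sound (and in fact automatic, since the walk after $v$ stays in $V_i\setminus U_i$, so $u$ can never reappear), whereas the paper sidesteps it by simply asserting the existence of a path from $\head(\esp)$ to $t^1$.
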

\begin{proof}
If there was a terminal $t \in T \cap (V_i \setminus U_i)$ such that there is a path from $\head(\esp)$ (a debt vertex) to $t^1$ in $\gsp[V_i]$, then (assuming without loss of generality that there is no other terminal on this path) all edges of this path are cheap. Therefore $e$ together with the image of this path in $G$ would give rise to a postpaid edge $(\tail(e),t)$ from $U_i$ to $V_i \setminus U_i$ in $\giaux$, a contradiction. Since $\mathrm{head(\esp)}$ is a debt vertex and we do not visit a terminal, we will return on a debt edge.
\end{proof}

Now we distinguish two cases. Suppose that the debt edge $\esp'$ on which we
return to $\Uisp$ is \emph{not} the image in $\gsp$ of the (only) edge in
$\yspp$ (in $\gsp'$) with head $A_i$. This means that we can keep following $y$
inside $\Uisp$ until we exit $\Uisp$ again via some edge $\esp''$ (maybe
$\esp'' = \esp$). By \protect \MakeUppercase {C}laim\nobreakspace \ref {claim:edge_is_expensive}, $\esp''$ is expensive. So
we have followed a path (a segment of $y$) which contains a debt edge and later
an expensive edge; this means (see \protect \MakeUppercase {F}act\nobreakspace \ref {fact:debt-path}) that it must also
contain a terminal in between. Pick the last such terminal $t$; then the
segment of $y$ between $t^0$ and $\tail(\esp'')$ must consist of cheap (free)
edges. This gives rise to a prepaid edge $(t,\head(\esp''))$ from $U_i$ to $V_i
\setminus U_i$ in $\giaux$, a contradiction.

So suppose instead that the debt edge $\esp'$ on which we return to $\Uisp$ is,
in fact, the image in $\gsp$ of the (only) edge in $\yspp$ (in $\gsp'$) with
head $A_i$. This means that $\esp' \in X_i^-$. Recall that by definition, $X_i^-$ consists only of debt or only of free edges, and therefore
every edge in $X_i^-$ must be a debt edge.
By \protect \MakeUppercase {F}act\nobreakspace \ref {fact:backtrack}, $\gsp[U_i]$ contains a path from a terminal $t^0$ via
cheap edges to the expensive edge $\esp$. Again, the image of this path in $G$
gives rise to a prepaid edge $(t,\head(\esp))$ from $U_i$ to $V_i \setminus
U_i$ in $\giaux$, a contradiction.  This concludes the proof of
Lemma\nobreakspace \ref {lem:cross}. \qedmanual
\fi

\iflncs
	\bibliographystyle{plain}
	\bibliography{atsp-dense}
\else \ifsvjour
	\bibliographystyle{plain}
	\bibliography{atsp}
\else
	\section*{Acknowledgment}
	An earlier version of this paper has appeared in the proceedings of IPCO 2016 (the 18th Conference on Integer Programming and Combinatorial Optimization).
	This work has been published in Mathematical Programming Series B. The final publication is available at Springer via \href{http://dx.doi.org/10.1007/s10107-017-1195-7}{http://dx.doi.org/10.1007/s10107-017-1195-7}.

	\bibliographystyle{alphaurl}
	\bibliography{atsp}
\fi \fi

\end{document}